\documentclass[a4paper,UKenglish,cleveref, autoref, thm-restate]{lipics-v2021}

\usepackage{algorithm}
\usepackage[noend]{algpseudocode}
\usepackage{tikz}
\usetikzlibrary{automata, arrows.meta, bbox, calc, positioning, shapes.geometric}
\usepackage{amsmath}
\usepackage{amsthm}
\usepackage{stmaryrd}
\usepackage{booktabs}
\usepackage{cutwin}
\usepackage[calc]{adjustbox}
\usepackage{wrapfig}
\usepackage{pgfplots}
\usepgfplotslibrary{groupplots}
\pgfplotsset{compat=1.18}

\usepackage{xargs}

\usepackage{xspace}

\usepackage{xstring}

\usepackage[cmex10]{mathtools}
\usepackage{amssymb}
\usepackage{amsfonts}
\usepackage{mathrsfs}
\usepackage{latexsym}
\usepackage{textcomp}
\usepackage{pifont}

\newcommand{\argemp}[2]{\if&#1&\else#2\fi}

\newcommand{\argdef}[2]{\if&#1&#2\else#1\fi}

\newcommand{\argint}[3]{\if&#2&\else#1#2#3\fi}

\newcommand{\argext}[3]{\if&#1&#3\else#1\if&#3&\else#2#3\fi\fi}

\newcommandx{\mthfnt}[3][1=, 2=0]{{
	\IfStrEqCase{#1}
	{%
		{}%
		{#3}%
		{Name}%
		{%
			\IfStrEqCase{#2}
			{%
				{0}{\mathcal{#3}}%
				{1}{\mathscr{#3}}%
				{2}{\mathfrak{#3}}%
				{3}{\mathbb{#3}}%
			}
			[\ensuremath{\clubsuit}]%
		}%
		{Set}%
		{%
			\IfStrEqCase{#2}
			{%
				{0}{\mathrm{#3}}%
				{1}{\mathsf{#3}}%
				{2}{\mathbb{#3}}%
				{3}{\mathbf{#3}}%
			}
			[\ensuremath{\clubsuit}]%
		}%
		{Fun}%
		{%
			\IfStrEqCase{#2}
			{%
				{0}{\mathsf{#3}}%
				{1}{\mathrm{#3}}%
			}
			[\ensuremath{\clubsuit}]%
		}%
		{Rel}%
		{%
			\IfStrEqCase{#2}
			{%
				{0}{\mathit{#3}}%
				{1}{\mathtt{#3}}%
			}
			[\ensuremath{\clubsuit}]%
		}%
		{Sym}%
		{%
			\IfStrEqCase{#2}
			{%
				{0}{\mathtt{#3}}%
				{1}{\mathbf{#3}}%
			}
			[\ensuremath{\clubsuit}]%
		}%
		{Elm}%
		{\mathnormal{#3}}
	}
[\ensuremath{\clubsuit}]%
}}

\newcommand{\mthsub}[1]{\argemp{#1}{\ensuremath{_{\mathnormal{#1}}}}}

\newcommand{\mthsup}[1]{\argemp{#1}{\ensuremath{^{\mathnormal{#1}}}}}

\newcommandx{\mth}[5][1=, 2=0, 4=, 5=]{{\ensuremath{\mthfnt[#1][#2]{#3}\mthsub{#4}\mthsup{#5}}}}

\newcommandx{\mtharg}[6][1=, 2=0, 4=, 5=]{{\mth[#1][#2]{#3}[#4][#5]\ensuremath{\argint{(}{#6}{)}}}}

\newcommand{\mthempty}{\mth[][]}

\newcommand{\mthstyname}{0}
\newcommand{\mthname}[1][]{\mth[Name][\argdef{#1}{\mthstyname}]}

\newcommand{\mthstyset}{0}
\newcommand{\mthset}[1][]{\mth[Set][\argdef{#1}{\mthstyset}]}

\newcommand{\mthstyfun}{0}
\newcommand{\mthfun}[1][]{\mth[Fun][\argdef{#1}{\mthstyfun}]}

\newcommand{\mthstyelm}{0}
\newcommand{\mthelm}[1][]{\mth[Elm][\argdef{#1}{\mthstyelm}]}

\newcommand{\tuple}[1]
{\ensuremath{\!\argint{\langle}{#1}{\rangle}}}

\usepackage{array}   %
\newcolumntype{C}{>{$}c<{$}} %

\newcommand{\Real}{\mathbb{R}}
\newcommand{\Rat}{\mathbb{Q}}

\newcommand{\card}[1]{\mthempty{\argint{\vert}{#1}{\vert}}}
\newcommand{\size}[1]{\mthempty{\argint{\vert\vert}{#1}{\vert\vert}}}

\newcommand{\TEMPORAL}[1]{\mbox{\small\boldmath\(\mathbf{#1}\)}}
\newcommand{\ltlnext}{\TEMPORAL{X}}
\newcommand{\sometime}{\TEMPORAL{F}} 
\newcommand{\always}{\TEMPORAL{G}}
\newcommand{\until}{\,\TEMPORAL{U}\,}

\newcommand{\Ag}{\mthset{N}}
\newcommand{\Ac}{\mthset{Ac}}
\newcommand{\AcProf}{\vec{\Ac}}
\newcommand{\St}{\mthset{St}}
\newcommand{\AP}{\mthset{AP}}

\renewcommand{\Game}{\mthname{G}}

\newcommand{\labFun}{\mthfun{lab}}

\newcommand{\trnFun}{\mthfun{tr}}

\newcommand{\StrSet}{\Sigma}
\newcommand{\strElm}{\sigma}
\newcommand{\strpElm}{\mthelm{\vec{\sigma}}}

\newcommand{\winsym}{\mthset{Win}}
\newcommandx{\Win}[3][1=, 2=, 3=]
{\mthset{\winsym#3}[#1][#2]}

\newcommand{\presym}{\mthfun{Pre}}
\newcommandx{\Pre}[3][1=, 2=, 3=]
{\mthset{\presym#3}[#1][#2]}

\newcommand{\eqsym}{\mthfun{Eq}}
\newcommandx{\Eq}[3][1=, 2=, 3=]
{\mthset{\eqsym#3}[#1][#2]}

\newcommandx{\AFW}[5][1=, 2=, 3=, 4=, 5=]
{\txtargname{AFW#5{\small\argint{$[$}{#1}{$]$}}}[#2][#3]{#4}\xspace}

\def\pspace{\mthfun{PSPACE}\xspace}
\def\npspace{\mthfun{NPSPACE}\xspace}

\def\np{\mthfun{NP}\xspace}
\def\conp{\mthfun{coNP}\xspace}

\renewcommand{\phi}{\varphi}

\newcommand{\jact}{\vec{ac}}

\newcommand{\proj}{\mthfun{proj}}

\newcommand{\pay}{\mthfun{pay}}
\newcommand{\wFun}{\mthfun{w}}
\newcommand{\MP}[1][]{%
	\ifthenelse{\equal{#1}{}}{{\small{\sf mp}}}{{\small{\sf mp}}(#1)}%
	\xspace}

\newcommand{\SetZ}{\mathbb{Z}}

\newcommand{\GRone}{\mthfun{GR(1)}\xspace}

\newcommand{\core}{\mthset{Core}}

\def\src{{\sf src}}
\def\trg{{\sf trg}}
\def\IN{{\sf in}}
\def\OUT{{\sf out}}

\newcommand{\SigmaPTwo}{\Sigma^{\mthfun{P}}_2}
\newcommand{\SigmaPThree}{\Sigma^{\mthfun{P}}_3}
\newcommand{\PiPTwo}{\Pi^{\mthfun{P}}_2}
\newcommand{\PiPThree}{\Pi^{\mthfun{P}}_3}
\newcommand{\QSAT}{{\normalfont \textsc{QSAT}\xspace}}
\newcommand{\DNF}{{\normalfont \textsc{DNF}\xspace}}
\newcommand{\CNF}{{\normalfont \textsc{CNF}\xspace}}

\newcommand{\yclash}{$ y $-clashing\xspace}

\newcommand{\val}{\mthfun{val}}
\newcommand{\PO}{\mthset{PO}}

\newcommand{\C}{\mathbb{C}}
\newcommand{\scc}{\mthset{SCC}}
\newcommand{\conv}{\mthfun{conv}}
\newcommand{\poly}{\mthfun{poly}}
\newcommand{\halfspace}{\mthfun{hspace}}

\newcommand{\sinit}{s_{init}}
\newcommand{\sink}{s_{sink}}

\newcommand{\ac}{\mathrm{ac}}

\renewcommand{\mp}{\textsf{mp}}

\newcommand{\strK}{\sigma_{\mathcal{K}}}

\newcommand{\strP}{\strElm_A}

\newcommand{\coal}{\mathcal{K}}
\newcommand{\player}{A}
\newcommand{\ebendev}{$ \exists $-\textsc{Ben-Dev}\xspace}
\newcommand{\dominated}{\textsc{Dominated}\xspace}
\newcommand{\nonemptiness}{\textsc{Non-Emptiness}\xspace}
\newcommand{\ecore}{\textsc{E-Core}\xspace}
\newcommand{\acore}{\textsc{A-Core}\xspace}

\algnewcommand\Input{\item[ ]{\textbf{input:} }}
\algrenewcommand\Return{\item{\textbf{return} }}

\newcommand{\F}{\mathcal{F}}

\newcommand{\polyset}{\mthset{PS}}

\renewcommand{\bar}{\overline}

\usetikzlibrary{decorations.pathreplacing}
\newcommand\irregularcircle[2]{%
	\pgfextra {\pgfmathsetmacro\len{(#1)+rand*(#2)}}
	+(0:\len pt)
	\foreach \a in {10,20,...,350}{
		\pgfextra {\pgfmathsetmacro\len{(#1)+rand*(#2)}}
		-- +(\a:\len pt)
	} -- cycle
}

\newcommand{\config}{\mthfun{cfg}}
\newcommand{\algdominated}{
\begin{algorithm}[t]
	\caption{Algorithm for \textsc{Dominated}}
	\begin{algorithmic}[1]
		\Input $ \Game, s, \vec{x} $
		\State {guess} a coalition $ C \subseteq \Ag $ and a vector $ (x_i')_{i \in C} \in \Rat^{c}$
		\State $ G^C \leftarrow $ \textsc{Sequentialise$ (\Game,C) $}
		\If{$ \forall i \in C, x_i' > x_i $ and $  (x_i')_{i \in C} \in \val(G^C,s) $}
			\State \textbf{return} YES
		\EndIf
		\State \textbf{return} NO
	\end{algorithmic}
	\label{alg:dominated}
\end{algorithm}
}

\newcommand{\algcompute}{
\begin{algorithm}[t]
	\caption{Algorithm for computing payoff}
	\begin{algorithmic}[1]
		\Input $ \Game, \strpElm $
		\State {guess} $ s^j $ and a vector $ (q_1^j,\dots,q_n^j) \in \prod_{i \in \Ag} Q_i $

		\State $ k,l \leftarrow $ \textsc{ComputeIndex$ \left( \Game, \strpElm, (s^j,q_1^j,\dots,q_n^j) \right) $}

			\State \textbf{return} $ \left( \frac{ \sum_{t=k}^{l} \wFun_{i} \left( \pi(\strpElm)[t] \right)}{l-k} \right)_{i \in \Ag} $

	\end{algorithmic}
	\label{alg:compute}
\end{algorithm}
}

\newcommand{\algnonemptiness}{
	\begin{algorithm}[H]
		\caption{Algorithm for \textsc{Non-Emptiness}}
		\begin{algorithmic}[1]
			\Input $ \Game $
			\State $ G^{\Ag} \leftarrow $ \textsc{Sequentialise$ (\Game,\Ag) $}
			\State {guess} a vector $ \vec{x} \in \Rat^{n}$ s.t. $ \vec{x} \in \val(G^{\Ag}) $
			
			\If{ $ (\Game,\sinit,\vec{x}) \in $ \protect\textsc{Dominated} (Alg.~\ref{alg:dominated})}
			\State \textbf{return} NO
			\EndIf
			\State \textbf{return} YES
		\end{algorithmic}
		\label{alg:nonemptiness}
	\end{algorithm}
}

\newcommand{\algecore}{
	\begin{algorithm}[t]
		\caption{Algorithm for \textsc{E-Core}}
		\begin{algorithmic}[1]
			\Input $ \Game, \phi $
			\State $ G^{\Ag} \leftarrow $ \textsc{Sequentialise$ (\Game,\Ag) $}
			\State {guess} a vector $ \vec{x} \in \Rat^{n}$ s.t. $ \vec{x} \in \val(G^{\Ag}) $ and set of states $ S \subseteq \St $
			
			\If{ there is no $ s \in S $ s.t. $ (\Game,s,\vec{x}) \in $ \protect\textsc{Dominated} (Algorithm~\ref{alg:dominated})}
			\State $ \Game{[S]} \leftarrow $ \textsc{UpdateArena}$ (\Game,S) $
				\If{ $ \pi \models \psi $ for some $ \pi \in \Game{[S]} $}
					\State \textbf{return} YES
				\EndIf
			\EndIf
			\State \textbf{return} NO	
		\end{algorithmic}
		\label{alg:ecore}
	\end{algorithm}
}

\bibliographystyle{plainurl}%

\title{Characterising and Verifying the Core in Concurrent Multi-Player Mean-Payoff Games\\ (Full Version)} %

\titlerunning{Characterising and Verifying the Core in Concurrent Multi-Player Mean-Payoff Games} %

\author{Julian Gutierrez}{Monash University, Australia }{julian.gutierrez@monash.edu}{}{}%

\author{Anthony W. Lin}{TU Kaiserslautern, Germany }{lin@cs.uni-kl.de}{}{}

\author{Muhammad Najib}{Heriot-Watt University, UK }{m.najib@hw.ac.uk}{https://orcid.org/0000-0002-6289-5124}{}

\author{Thomas Steeples}{University of Oxford, UK }{thomas.steeples@cs.ox.ac.uk}{}{}

\author{Michael Wooldridge}{University of Oxford, UK }{mjw@cs.ox.ac.uk}{}{}

\authorrunning{J. Gutierrez, A.W. Lin, M. Najib, T. Steeples, and M. Wooldridge} %

\Copyright{Julian Gutierrez, Anthony W. Lin, Muhammad Najib, Thomas Steeples, Michael Wooldridge} %

\begin{CCSXML}
<ccs2012>
   <concept>
       <concept_id>10003752.10003790.10002990</concept_id>
       <concept_desc>Theory of computation~Logic and verification</concept_desc>
       <concept_significance>500</concept_significance>
       </concept>
   <concept>
       <concept_id>10003752.10003790.10011192</concept_id>
       <concept_desc>Theory of computation~Verification by model checking</concept_desc>
       <concept_significance>500</concept_significance>
       </concept>
   <concept>
       <concept_id>10003752.10010070.10010099.10010102</concept_id>
       <concept_desc>Theory of computation~Solution concepts in game theory</concept_desc>
       <concept_significance>500</concept_significance>
       </concept>
 </ccs2012>
\end{CCSXML}

\ccsdesc[500]{Theory of computation~Logic and verification}
\ccsdesc[500]{Theory of computation~Verification by model checking}
\ccsdesc[500]{Theory of computation~Solution concepts in game theory}

\keywords{Concurrent games, multi-agent systems, temporal logic, game theory} %

\category{} %

\relatedversion{} %

\acknowledgements{We wish to thank anonymous reviewers for their useful feedback. }%

\nolinenumbers

\begin{document}

\maketitle

\begin{abstract}
Concurrent multi-player mean-payoff games are important models for systems of agents with individual, non-dichotomous preferences. 
Whilst these games have been extensively studied in terms of their equilibria in non-cooperative settings, this paper explores an alternative solution concept: the core from cooperative game theory. This concept is particularly relevant for cooperative AI systems, as it enables the modelling of cooperation among agents, even when their goals are not fully aligned.
Our contribution is twofold.
First, we provide a characterisation of the core using discrete geometry techniques and establish a necessary and sufficient condition for its non-emptiness. We then use the characterisation to prove the existence of polynomial witnesses in the core. Second, we use the existence of such witnesses to solve key decision problems in rational verification and provide tight complexity bounds for the problem of checking whether some/every equilibrium in a game satisfies a given LTL or \GRone specification. Our approach is general and can be adapted to handle other specifications expressed in various fragments of LTL without incurring additional computational costs.
\end{abstract}

\section{Introduction}\label{secn:introduction}
\textit{Concurrent games}, where agents interact over an infinite sequence of rounds by choosing actions simultaneously, are one of the most important tools for modelling multi-agent systems. This model has received considerable attention in the research community (see, e.g.,~\cite{Alur2002,bouyer2015pure,GutierrezHW17,fisman2010rational,kupferman2016synthesis}).
In these games, the system evolves based on the agents’ choices, and their preferences are typically captured by associating them with a \textit{Boolean objective} (e.g., a \textit{temporal logic formula}) representing their goal. Strategic issues arise as players seek to satisfy their own goals while taking into account the goals and rational behaviour of other players. 
Note that the preferences induced by such goals are \textit{dichotomous}: a player will either be satisfied or unsatisfied. 
However, many systems require richer models of preferences that capture issues such as resource consumption, cost, or system performance~\cite{CdAHS03,CCHKM05,4484790}. 

\textit{Mean-payoff games}~\cite{EM79,zwick1996the} are widely used to model the quantitative aspects of systems.
Whilst much research has been conducted on \textit{non-cooperative} mean-payoff games and solution concepts such as \textit{Nash equilibrium} (NE) and \textit{subgame perfect equilibrium} (e.g., \cite{ummels2011the,BriceRB21,BdPS13}), this paper focuses on a \textit{cooperative} setting. In this setting, players can reach binding agreements and form coalitions to collectively achieve better payoffs or eliminate undesirable outcomes\footnote{We emphasise that this paper concerns the \textit{outcome} of games \textit{when} such agreements can be reached. The mechanism for agreements is assumed to be exogenous and beyond the scope of this paper.}. As a result, NE and its variants may not be suitable for examining the stable behaviours that arise in these types of games. For example, in the Prisoner's Dilemma game, players can avoid mutual defection, which is the unique NE, by establishing binding agreements~\cite{ray1997equilibrium}. Thus, analysing games through the lens of cooperative game theory poses distinct challenges and is important in and of itself.
This paradigm is particularly relevant for modelling and analysing \textit{cooperative AI} systems, which have recently emerged as a prominent topic~\cite{dafoe2020open,dafoe2021cooperative,conitzer2023foundations,elisabertino2020artificial}. In these systems, agents are able to communicate and benefit from cooperation, even when their goals are not fully aligned. 
We illustrate that this is also the case in the context of mean-payoff games in Example~\ref{example}.

 We focus on a solution concept from cooperative game theory known as the \textit{core}~\cite{aumann1961core,scarf1967core,Gutierrez2019}, which is the most widely-studied solution concept for cooperative games. Particularly, we study the core of mean-payoff games where players have access to \textit{finite but unbounded memory strategies.} The motivation is clear, as finite-memory strategies are sufficiently powerful for implementing LTL objectives while being realisable in practice. Our main contribution is twofold: First, we provide a characterisation of the core using techniques from discrete geometry\footnote{We note that linear programming and convex analysis are well-established tools for studying the core in traditional economics (see e.g.,~\cite{gillies1959solutions,bondareva1963some,scarf1967core,uyanik2015nonemptiness}). However, the settings and contexts (e.g., the game models) of these previous works differ from those of the present work, and their results do not automatically carry over.} (cf. logical characterisation in~\cite{Gutierrez2019,gutierrez2023cooperative}) and establish a necessary and sufficient condition for its non-emptiness. We believe that our characterisation holds value in its own right, as it connects to established techniques used in game theory and economics. This has the potential to enable the application of more sophisticated methods and computational tools (e.g., linear programming solvers) in the area of \textit{rational verification}~\cite{abate2021rational,GutierrezHW17}.
 Second, we provide tight complexity bounds for key decision problems in {rational verification} with LTL and \GRone~\cite{Bloem2012} specifications (see Figure \ref{fig:summary_of_results}). \GRone is a LTL fragment that has been used in various domains~\cite{9812068,camacho2019towards,filippidis2016control,MS12} and covers a wide class of common LTL specification patterns~\cite{maoz2015gr}. Our approach to solving rational verification problems is very general and can be easily adapted for different LTL fragments beyond \GRone. This is the first work to study the core of mean-payoff games with \textit{finite but unbounded memory strategies}, and to explore the complexity of problems related to the rational verification of such games in this setting.
 
 \subparagraph*{Related Work.}
The game-theoretical analysis of temporal logic properties in multi-agent systems has been studied for over a decade (see e.g., \cite{fisman2010rational,chatterjee2010strategy,mogavero2014reasoning,kupferman2016synthesis,gutierrez2015iterated,wooldridge2016rational}). However, most of the work has focused on a non-cooperative setting. Recently, there has been an increased interest in the analysis of concurrent games in a cooperative setting.
The core has been studied in the context of deterministic games with dichotomous preferences by \cite{Gutierrez2019,gutierrez2023cooperative} using the logics ATL*~\cite{Alur2002} and SL~\cite{mogavero2014reasoning}. However, as far as we are aware, there are no extensions of these logics that adopt mean-payoff semantics. Quantitative extensions exist~\cite{bulling2022combining,bouyer2023reasoning}, and the core is studied in~\cite{bouyer2023reasoning} using the logic SL[$ \mathcal{F} $] that extends SL with quantitative satisfaction value, but the semantics of these logics are not defined on mean-payoff conditions and thus cannot be used to reason about the core of mean-payoff games. In the stochastic setting, \cite{HLNW21} examines the core in stochastic games with LTL objectives under the \textit{almost-sure} satisfaction condition. The approach relies on \textit{qualitative parity logic}~\cite{berthon2020mixing} and is not applicable to mean-payoff objectives.
Closer to our work is \cite{steeples2021mean} which studies the core of multi-player  mean-payoff games with Emerson-Lei condition~\cite{EL87} in the \textit{memoryless} setting. Whilst memoryless strategies are easy to implement, finite-memory and arbitrary mathematical strategies offer greater richness. For instance, players can achieve higher payoffs and implement LTL properties with finite-memory strategies, which may not be possible with memoryless ones (see Example~\ref{example}). The approach proposed in \cite{steeples2021mean}, which involves using a non-deterministic Turing machine to guess the correct strategies, is not applicable in the present work's setting. This is because players may have finite but unbounded memory strategies, and as such, strategies may be arbitrarily large. To address this limitation, we propose a new approach that can handle such scenarios.

\subparagraph*{Organisation.}
The rest of the paper is structured as follows. Section~\ref{secn:preliminaries} provides an overview of temporal logics, multi-player mean-payoff games, relevant game-theoretic concepts, and key mathematical concepts. Section~\ref{secn:characterisations} develops a method to characterise the core using discrete geometry techniques, leading to a crucial result for Section~\ref{secn:decision_problems}, where we determine the complexity of several decision problems. Finally, Section~\ref{secn:concluding_remarks} offers concluding remarks.

\section{Preliminaries}\label{secn:preliminaries}
Given any set $ X $, we use $ X^*, X^{\omega} $ and $ X^+ $ for, respectively, the sets of finite, infinite, and non-empty finite sequences of elements in $ X $. For $ Y \subseteq X $, we write $ X_{-Y} $ for $ X \setminus Y $ and $ X_{-i} $ if $ Y = \{i\} $. We extend this notation to tuples $ w = (x_1,...,x_k,...,x_n) \in X_1 \times \cdots \times X_n $, and write $ w_{-k} $ for $ (x_1,...,x_{k-1},x_{k+1},...,x_n) $. Similarly, for sets of elements, we write $ w_{-Y} $ to denote $ w $ without each $ x_k$, for $ k \in Y $. For a sequence $ v $, we write $ v[t] $ or $ v^t $ for the element in position $ t + 1 $ in the sequence; for example, $ v[0] = v^0 $ is the first element of $ v $.
\subparagraph*{Mean-Payoff.} 
For an infinite sequence of real numbers, \(r^0 r^1 r^2 \cdots \in \mathbb{R}^\omega\), we define the \emph{mean-payoff} value of \(r\), denoted \(\MP(r)\), to be the quantity,
$
	\MP(r) = \liminf_{n \to \infty} \frac{1}{n} \sum_{i=0}^{n-1} r^i.
$

\subparagraph*{Temporal Logics.} We use
LTL~\cite{Pnueli1977} with the usual temporal operators, \(\ltlnext\) (``next'') and \(\until\) (``until''), and the derived operators $ \always $ (``always'') and $ \sometime $ (``eventually''). We also use \(\GRone\)~\cite{Bloem2012}, a fragment of LTL given by formulae written in the following form: \[
(\always \sometime \psi_1 \wedge \cdots \wedge \always \sometime \psi_m) \to (\always \sometime \phi_1 \wedge \cdots \wedge \always \sometime \phi_n)
\text{,}
\] where each subformula \(\psi_i\) and \(\phi_i\) is a Boolean combination of atomic propositions.
Additionally, we also utilise an extension of LTL known as LTL$ ^{lim\Sigma} $~\cite{Boker2014} that allows mean-payoff assertion such as $ \mp(v) \geq c $ for a numeric variable $ v $ and a constant number $ c $, which asserts that the mean-payoff of $ v $ is greater than or equal to $ c $ along an entire path.
The satisfaction of temporal logic formulae is defined using standard semantics. We use the notation $ \alpha \models \phi $ to indicate that the formula $ \phi $ is satisfied by the infinite sequence $ \alpha $.

\subparagraph*{Arenas.}
An \emph{arena} is a tuple \(A = \tuple{\Ag,  \{\Ac_i\}_{i \in \Ag}, \St, \sinit, \trnFun, \labFun}\)  where \(\Ag\), \(\Ac_i\), and \(\St\) are finite non-empty sets of \emph{players}, \emph{actions} for player $ i $, and \emph{states}, respectively; \(\sinit \in \St\) is the \emph{initial state}; \(\trnFun : \St \times \AcProf \rightarrow \St\) is a \emph{transition function} mapping each pair consisting of a state \(s \in \St\) and an \emph{action profile} \(\jact \in \AcProf = \Ac_1 \times \cdots \times \Ac_n \), with one action for each player, to a successor state; and \(\labFun: \St \to 2^{\AP}\) is a labelling function, mapping every state to a subset of \emph{atomic propositions}.

A \emph{run} \(\rho = (s^0, \jact^0), (s^1, \jact^1) \cdots\) is an infinite sequence in \((\St \times \AcProf)^{\omega}\) such that \(\trnFun(s^k, \jact^k) = s^{k + 1}\) for all \(k\). 
Runs are generated in the arena by each player~\(i\) selecting a \emph{strategy} \(\strElm_i\) that will define how to make choices over time. 
A strategy for \( i \) can be understood abstractly as a function \( \strElm_{i}: \St^+ \to \Ac_i \) which maps sequences (or histories) of states into a chosen action for player $ i $. A strategy $ \strElm_{i} $ is a \textit{finite-memory} strategy if it can be represented by a finite state machine \(\strElm_{i} = (Q_{i}, q_{i}^{0}, \delta_i, \tau_i) \), where \(Q_{i}\) is a finite and non-empty set of \emph{internal states}, \( q_{i}^{0} \) is the \emph{initial state}, \(\delta_i: Q_{i} \times \St \rightarrow Q_{i} \) is a deterministic \emph{internal transition function}, and \(\tau_i: Q_{i} \rightarrow \Ac_i\) an \emph{action function}. A \textit{memoryless} strategy $ \strElm_{i}: \St \to \Ac_i $ chooses an action based only on the current state of the environment.
We write \(\StrSet_i\) for the set of strategies for player \(i\).

A \emph{strategy profile} \(\strpElm = (\strElm_1, \dots, \strElm_n)\) is a vector of strategies, one for each player. 
Once a state \(s\) and profile \(\strpElm\) are fixed, the game has an \emph{outcome}, i.e., a path in \(A\), denoted by \(\pi(\strpElm, s)\). In this paper, we assume that players' strategies are \textit{finite-memory} and \textit{deterministic}, as such, \(\pi(\strpElm, s)\) is the \textit{unique} path induced by \(\strpElm\), that is, the sequence \(s^0s^1s^2\ldots\) such that \( s^0 = s \), \(s^{k + 1} = \trnFun (s^k, (\tau_1(q^k_1), \ldots, \tau_n(q^k_n)))\), and
\(q^{k + 1}_i = \delta_i(q^k_i, s^k)\), for all \(k \geq 0\). Note that such a path is \textit{ultimately periodic} (i.e., a lasso).
We simply write \( \pi(\strpElm) \) for \( \pi(\strpElm,\sinit) \). We extend this to run induced by $ \strpElm $ in a similar way, i.e., $ \rho(\strpElm) = (s^0,\jact^0), (s^1,\jact^1), \dots $. For an element of a run $ \rho(\strpElm)[k] = (s^k,\jact^k) $, we associate the configuration $ \config(\strpElm,k) = (s^k,q_1^k,\dots,q_n^k) $ with $ \tau_i(q_i^k) = \ac_i^k $ for each $ i $.

\subparagraph*{Multi-Player Games.}
A \emph{multi-player game} is obtained from an arena \(A\) by associating each player with a goal. We consider multi-player games with mean-payoff  goals. A \textit{multi-player mean-payoff game} (or simply a \textit{game}) is a tuple \(\Game = \tuple{A, (\wFun_{i})_{i \in \Ag}}\), where \(A\) is an arena and \(\wFun_{i}: \St \to \SetZ\) is a function mapping, for every player~\(i\), every state of the arena into an integer number. Given a game \(\Game = \tuple{A, (\wFun_{i})_{i \in \Ag}}\) and a strategy profile $ \strpElm $, an outcome \(\pi(\strpElm)\) in \(A\) induces a sequence \(\labFun(\pi(\strpElm)) = \labFun(s^0) \labFun(s^1) \cdots\) of sets of atomic propositions, and for each player~\(i\), the sequence \(\wFun_i(\pi(\strpElm)) = \wFun_i(s^0) \wFun_i(s^1) \cdots\) of weights. The \textit{payoff} of player~\(i\) is \(\pay_i(\pi(\strpElm)) = \MP(\wFun_{i}(\pi(\strpElm)))\). By a slight abuse of notation, we write \( \pay_i(\strpElm) \) for \( \pay_i(\pi(\strpElm)) \), and $ \pi(\strpElm) \models \phi $ or $ \strpElm \models \phi $ for $ \labFun(\pi(\strpElm)) \models \phi $ for some temporal logic formula $ \phi $.

\subparagraph*{Solution Concept.} We focus on a solution concept known as the 
\emph{core}~\cite{aumann1961core,scarf1967core,Gutierrez2019}. To understand the concept of the core, it might be helpful to compare it with the NE and how each can be characterised by \textit{deviations}. Informally, a NE is a strategy profile from which no player has any incentive to \emph{unilaterally} deviate. On the other hand, the core comprises strategy 
profiles from which no \emph{coalitions} of agents can deviate such that \textit{every} agent in the coalition is strictly better off, regardless of the actions of the remaining players.

Formally, we say that a strategy profile \(\strpElm\) is in the core if for all coalitions \(C \subseteq \Ag\), and strategy profiles \( \strpElm_{C}' \), there is some counter-strategy profile \( \strpElm_{-C}' \) such that \( \pay_i(\strpElm) \geq \pay_i(\strpElm_{C}',\strpElm_{-C}') \), for some \(i \in C\). Alternatively, as we already discussed above, we can characterise the core by using the notion of \emph{beneficial deviations}: Given a strategy profile \(\strpElm\) and a coalition \(C \subseteq \Ag, C \neq \varnothing \), we say that the strategy profile \(\strpElm_C^\prime\) is a \emph{beneficial deviation} if for all counter-strategies \(\strpElm_{-C}'\), we have \(\pay_i((\strpElm_C^\prime, \strpElm_{-C}^\prime)) > \pay_i(\strpElm)\) for all \(i \in C\). The core then consists of those strategy profiles which admit no beneficial deviations; note that these two definitions are equivalent. For a given game \( \Game \), let \( \core(\Game) \) denote the set of strategy profiles in the core of \( \Game \).

\begin{figure}[ht]
	\centering
	\scalebox{0.7}{
\begin{tikzpicture}[state/.style={circle, draw, minimum size=1cm}, node distance=1.5cm]
	\node[state, label=below:{$ (0,0) $}] (s^0) { $m$};
	\node[] (start) [above =0.5cm of s^0] {};
	\node[state, label=below:{$ (0,1) $}] (s^2) [right = of s^0] { $r$};
	\node[state, label=below:{$ (1,0) $}] (s^1) [left = of s^0] { $l$};
	
	\draw [-{Latex[width=2mm]}]
	(s^0) edge[bend right] node[above]{\small $(L,L)$} (s^1)
	(s^1) edge[bend right] node[below]{\small $(\ast,R)$} (s^0)
	
	(s^0) edge[bend right] node[below]{\small $(R,R)$} (s^2)
	(s^2) edge[bend right] node[above]{\small $(L,\ast)$} (s^0)
	(s^0) edge[loop, out=70, in=110, distance=1cm] node[above, align=center]{\small $(L,R)$\\ \small $(R,L)$} (s^0)
	(s^1) edge[loop, out=200, in=160, distance=1cm] node[left]{\small $(L,L)$} (s^1)
	(s^2) edge[loop, out=20, in=340, distance=1cm] node[right]{\small $(R,R)$} (s^2)
	;
\end{tikzpicture}
}
	\caption{Arena for Example~\ref{example}. The symbol $ \ast $ is a wildcard that matches all possible actions. 
		\label{fig:example}}
	\vspace{-0.3cm}
\end{figure}
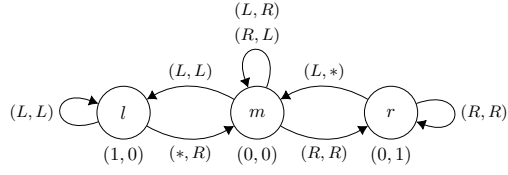

\begin{example}
	\label{example}
	We illustrate how the core differs from NE, and how cooperation and memory affect the outcome of a game.
	Consider a game consisting of two players $ \{1,2\} $. The arena is depicted in Figure~\ref{fig:example}, and the players are initially in $ m $. Each player has two actions: $ L $ and $ R $. Player $ 1 $ (resp. $ 2 $) gets $ 1 $ when the play visits $ l $ (resp. $ r $)---e.g., tasks assigned to the players, for which they are rewarded upon completion. However, these states can only be visited by agreeing on the actions (e.g., tasks that must be carried out by multiple robots). Observe that player $ 1 $ (resp. $ 2 $) always choosing $ L $ (resp. $ R $) is a NE, and a ``bad'' one since each player receives a payoff of $ 0 $. On the other hand, this bad equilibrium is not included in the core: the players can coordinate/cooperate to alternately visit $ l $ and $ r $ and obtain higher payoffs (i.e., each receives $ \frac{1}{4} $). Furthermore, observe that to execute this plan, the players must remember previously visited states (i.e., finite-memory strategies are necessary). This outcome also corresponds to the liveness property $ \always \sometime l \wedge \always \sometime r $ (``the tasks will be completed infinitely often''), which cannot be realised using memoryless strategies.
\end{example}

\subparagraph*{Vectors and Inequations.} 
Given two vectors $ \vec{a}, \vec{b} \in \Rat^d $ the notation $ \vec{a} \geq \vec{b} $ corresponds to the \textit{component-wise inequality}, and let $ \size{\vec{a}} = d + \sum_{i \in \llbracket 1,d \rrbracket} \size{a_i} $, $ a_i $ is represented using the usual binary encoding of numerators/denominators. The \emph{linear function} \( f_{\vec{a}} : \Real^d \to \Real \) is the function \( f_{\vec{a}}(\vec{x}) = \sum_{i \in \llbracket 1,d \rrbracket} a_i \cdot x_i \). A \emph{linear inequation} is a pair \( (\vec{a},b) \) where \( \vec{a} \in \Rat^d \setminus \{\vec{0}\} \) and \( b \in \Rat \). The size of $ (\vec{a},b) $ is $ \size{(\vec{a},b)} = \size{\vec{a}} + \size{b} $. The \emph{half-space} corresponding to \( (\vec{a},b) \) is the set \( \halfspace(\vec{a},b) = \{ \vec{x} \in \Real^d \mid f_{\vec{a}}(\vec{x}) \leq b \} \). A \emph{linear inequality system} is a set \( \lambda = \{ (\vec{a}_1,b_1),\dots,(\vec{a}_l,b_l) \} \) of linear inequations. A \emph{polyhedron} generated by \( \lambda \) is denoted by \( \poly(\lambda) = \bigcap_{(\vec{a},b) \in \lambda} \halfspace(\vec{a},b) \). Let \( P \) be a polyhedron in \( \Real^{d} \) and \( C \subseteq D = \{1,\dots,d\} \), and let $ c = \card{C} $. The \emph{projection} of \( P \subseteq \Real^d \) on variables with indices in \( C \) is the set \( \proj_C(P) = \{ \vec{x} \in \Real^{c} \mid \exists \vec{y} \in P \wedge \forall i \in C, y_i = x_i \} \).

\section{Characterising the core}\label{secn:characterisations}
In this section, we provide a characterisation of the core and other important concepts which we will use to prove our complexity results.

\subparagraph*{Multi-Mean-Payoff Games.}
\emph{Multi-mean-payoff games} (MMPGs)~\cite{Chatterjee2010,Velner2015} are similar to two-player, turn-based, zero-sum mean-payoff games, except the states of the game graph are labelled with \( k \)-dimensional integer vectors representing the weights. 
Player 1's objective is to maximise the mean-payoff of the $ k $-dimensional weight function. Note that since the weights are multidimensional, there is not a unique maximal value in general.

Formally, a multi-mean-payoff game \(G\) is a tuple, \(G = (V_1, V_2, E, w)\), where \(V_1, V_2\) are the \emph{states} controlled by player 1 and 2 respectively, with \(V \coloneqq V_1 \cup V_2\) and \(V_1 \cap V_2 = \varnothing\); \(E \subseteq V \times V\) is a set of \emph{edges}; \(w : V \to \mathbb{Z}^k\) is a \emph{weight function} with $ k \in \mathbb{N} $.
Given a start state \(v^0 \in V_i\), player \(i\) chooses an edge \((v^0, v^1) \in E\), and the game moves to state \(v^1 \in V_j\). Then player \(j\) chooses an edge and the game moves to the specified state, and this continues forever. Paths are defined in the usual way and for a path $ \pi $, the payoff $ \pay(\pi) $ is the vector $ (\MP(w_1(\pi)),\dots,\MP(w_k(\pi))) $.
It is shown in~\cite{Velner2015} that memoryless strategies suffice for player 2 to act optimally, and that the decision problem which asks if player 1 has a strategy that ensures $ \pay(\pi) \geq \vec{x} $ from a given state and for some $ \vec{x} \in \Real^k $ is \conp-complete.

We consider a \emph{sequentialisation} of a game where players are partitioned into two coalitions, \( C \subseteq \Ag \) and \( -C = \Ag \setminus C \).  This game is modelled by a MMPG where coalition \( C \) acts as player 1 and \( -C \) as player 2. The \( k \)-dimensional vectors represent the weight functions of players in \( C \). In the case \( C = \Ag \), player 2 is a ``dummy'' player with no influence in the game. 

\begin{definition}
    Let \( \Game = (A,(\wFun_i)_{i \in \Ag}) \) be a game with \( A = (\Ag,\Ac,\St,\sinit,\trnFun,\labFun) \) and let \( C \subseteq \Ag \). The sequentialisation of \( \Game \) with respect to \( C \) is the (turn-based two-player) MMPG \( G^C = (V_1,V_2,E, w) \) where \( V_1 = St \), \( V_2 = \St \times \AcProf_C \); \( w: V_1 \cup V_2 \to \mathbb{Z}^{c} \) is such that \( w_i(s) = w_i(s,\jact_{C}) = \wFun_{i}(s) \); and \( E = \{ (s,(s,\jact_{C})) \in \St \times (\St \times \AcProf_C) \} \cup \{ ((s,\jact_{C}),s') \in (\St \times \AcProf_C) \times \St : \exists \jact_{-C} \in \AcProf_{-C}. s' = \trnFun(s,(\jact_{C},\jact_{-C})) \} \).\footnote{For \( C = \Ag \), the set \( \AcProf_{-C} \) is empty, and the transition is fully characterised by \( \AcProf_C \). We keep the current notation to avoid clutters.}
    
\end{definition}
\noindent The construction above is clearly polynomial in the size of the original game $ \Game $.

Let \( \StrSet^M_2 \) be the set of memoryless strategies\footnote{Here we define a strategy as a mapping from sequences of states to a successor state $ \sigma_i : V^\ast V_i \to V $ for $ i \in \{1,2\} $. A strategy is memoryless when it chooses a successor based on the current state $ \sigma_i : V_i \to V $.} for player 2. For a strategy \( \strElm_2 \in \StrSet^M_2 \), the game induced by applying such strategy is given by \( G^C[\strElm_2] = (V_1,V_2,E',(w_i)_{i \in C}) \) where \( E' = \{(s,s') \in E \mid s \in V_1 \vee (s \in V_2 \wedge \strElm_2(s) = s') \} \). That is, a subgame in which player 2 plays according to the memoryless strategy \( \strElm_2 \).

\subparagraph*{Enforceable Values and Pareto Optimality.} We present the definitions of \textit{enforceable values} and \emph{Pareto optimality} in MMPGs~\cite{Brenguier2015} below, which we will use for our characterisation of the core. 

\begin{definition}
	For a MMPG \( G^C \) and a state \( s \in V_1 \cup V_2 \), define the set of enforceable values that player 1 can guarantee from state $ s $ as:
	\[
		\val(G^C,s) = \{ \vec{x} \in \mathbb{R}^{c	} \mid \exists \strElm_1 \forall \strElm_2 \forall j \in C :  
		 x_j \leq \MP_j(w_j(\pi((\strElm_1,\strElm_2),s))) \}.
	\]
\end{definition}

A vector \( \vec{x} \in \mathbb{R}^{c} \) is $ C$\emph{-Pareto optimal from \( s \)} (or simply \textit{Pareto optimal} when $ C = \Ag $ or $ C $ is clear from the context, and $ s = \sinit $) if it is maximal in the set \( \val(G^C,s) \). The set of Pareto optimal values is called \emph{Pareto set}, formally defined as:
\[
\PO(G^C,s) = \{ \vec{x} \in \val(G^C,s) \mid \neg \exists \vec{x}' \in \val(G^C,s) :
\vec{x}' \geq \vec{x} \land \exists i \,\text{s.t.}\, x_i^\prime > x_i \}.
\]
When \( s = \sinit \), we simply write \( \val(G^C) \) and \( \PO(G^C) \). We naturally extend Pareto optimality to strategy profiles: a strategy profile $ \strpElm $ is $ C $-{Pareto optimal} if $ (\pay_i(\strpElm))_{i \in C} \in \PO(G^C) $. 

A notable aspect of the core in mean-payoff games is that it generally does not coincide with Pareto optimality, as shown in Propositions~\ref{prpn:core-not-pareto-optimal} and \ref{prpn:core-pareto-optimal} below (the proofs are provided in the technical appendix). This stands in sharp contrast to conventional cooperative (transferable utility, superadditive) games in which the core is always included in the Pareto set~\cite[pp.~24--25]{chalkiadakis}.

\begin{restatable}{proposition}{PROPstrict}
\label{prpn:core-not-pareto-optimal}
    There exist games \(\Game\) such that \(\vec{\sigma} \in \core(\Game)\) and \(\vec{\sigma}\) is not Pareto optimal.
\end{restatable}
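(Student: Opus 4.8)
The plan is to exhibit an explicit two-player game witnessing the claim. The conceptual point driving the construction is the mismatch between Pareto domination and beneficial deviation: a profile fails to be Pareto optimal as soon as some enforceable value $\vec{x}'$ weakly dominates its payoff and is strictly larger in one coordinate, whereas a coalition $C$ can beneficially deviate only if it can enforce a value that is strictly larger in \emph{every} coordinate indexed by $C$. Hence it suffices to build a game in which the profile's payoff is Pareto-dominated only by values that leave at least one member of each candidate coalition unimproved.

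The simplest way I would realise this is to make one player indifferent. Take two players and let $\wFun_2 \equiv 0$ on every state, so that $\pay_2(\vec{\sigma}') = 0$ for every profile $\vec{\sigma}'$. Then player $2$'s payoff can never be strictly increased, which immediately rules out a beneficial deviation by any coalition containing player $2$ --- in particular by the grand coalition $\Ag = \{1,2\}$ and by $\{2\}$ itself. Concretely I would reuse a weight-relabelling of the arena of Example~\ref{example}: keep the reward state $l$ with $\wFun_1(l) = 1$ and $\wFun_1 = 0$ elsewhere, but set $\wFun_2 \equiv 0$. I would take $\vec{\sigma}$ to be the profile whose outcome stays forever in the initial state $m$ (e.g.\ player $1$ always plays $L$ and player $2$ always plays $R$), giving payoff $\pay(\vec{\sigma}) = (0,0)$.

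It then remains to verify the two halves of the statement. For non-Pareto-optimality, I would observe that the cooperative profile in which both players perpetually visit $l$ yields the enforceable value $(1,0) \in \val(G^{\Ag})$, which dominates $(0,0)$, so $\vec{\sigma}$ is not Pareto optimal. For core membership I would check each coalition: $\{2\}$ and $\Ag$ are handled by the indifference of player $2$ noted above, so the only real work is the singleton $\{1\}$. Here I must show that player $1$ cannot \emph{enforce} a positive payoff, i.e.\ that player $1$ cannot guarantee a payoff exceeding $0$: against the adversarial counter-strategy in which player $2$ always plays $R$, the outcome never leaves the states on which $\wFun_1 = 0$ (the reward state $l$ being reachable only with player $2$'s cooperation), so $\pay_1 = 0$ for every strategy of player $1$.

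This last step is the one that needs the most care, as it is where the structure of the arena --- that the high reward for player $1$ is gated behind player $2$'s consent --- is actually used; the rest is bookkeeping over the (at most three) coalitions. Together these give $\vec{\sigma} \in \core(\Game)$ while $\vec{\sigma}$ is not Pareto optimal, establishing the proposition.
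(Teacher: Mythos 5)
Your proposal is correct and follows essentially the same strategy as the paper's proof: both exhibit a small two-player game in which one player's weight function is identically zero (so no coalition containing that player can have a beneficial deviation, since that player can never be \emph{strictly} better off), and the remaining singleton coalition is blocked by an adversarial counter-strategy, while the grand coalition could cooperatively reach a weakly dominating payoff. The paper happens to make player $1$ the indifferent one in a fresh four-state arena, whereas you make player $2$ indifferent in a reweighting of the arena of Example~\ref{example}, but the verification logic is identical.
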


\begin{restatable}{proposition}{PROPPO}
\label{prpn:core-pareto-optimal}
	There exist games \(\Game\) such that \(\vec{\sigma}\) is Pareto optimal and \(\vec{\sigma} \not\in \core(\Game)\).
\end{restatable}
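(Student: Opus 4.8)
The plan is to exhibit a single small game witnessing the claim, which is the natural way to prove an existential statement of this form. I would reuse the arena from Example~\ref{example}, since it already separates the core from Nash equilibrium and involves exactly the kind of coordination that makes Pareto optimality and core membership diverge. The key idea is to find a strategy profile $\strpElm$ whose payoff vector is Pareto optimal—i.e. $(\pay_i(\strpElm))_{i \in \Ag} \in \PO(G^{\Ag})$—yet which admits a beneficial deviation by some coalition, so that $\strpElm \notin \core(\Game)$.

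Concretely, I would first describe the Pareto set of the game. In the arena of Figure~\ref{fig:example}, the achievable payoff pairs for the two players are constrained by the fact that the play can spend time in $l$ (giving player $1$ reward) or $r$ (giving player $2$ reward) but not both simultaneously, and returning to $m$ costs both. I would argue that the extreme Pareto-optimal points include, say, the profile in which the players cooperate to stay in $l$ as often as possible, yielding a payoff like $(\tfrac12, 0)$ (and symmetrically $(0,\tfrac12)$), and that such a point is genuinely maximal in $\val(G^{\Ag})$. The second step is to show this profile is \emph{not} in the core: the coalition $C = \{1,2\}$ (or possibly the grand coalition, or a singleton) can deviate to the alternating strategy that visits $l$ and $r$ infinitely often, obtaining $(\tfrac14,\tfrac14)$. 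Since $\tfrac14 > 0$, player $2$ is strictly better off under the deviation, but for a beneficial deviation \emph{every} member of $C$ must strictly improve. If $C = \{1,2\}$, player $1$ would need $\tfrac14 > \tfrac12$, which fails, so I must choose the deviating coalition and the candidate profile more carefully.

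The main obstacle is therefore getting the quantifiers in the core definition to line up correctly: a beneficial deviation requires \emph{every} agent in the coalition to strictly improve \emph{regardless} of the counter-strategies of $-C$. The cleanest fix is to pick a Pareto-optimal profile that is ``lopsided'' and dominated by an interior point for a \emph{single} player who controls enough to enforce improvement alone, or to choose a profile that is Pareto optimal but gives \emph{some} player $i$ a payoff strictly below what that player can unilaterally guarantee; then the singleton coalition $\{i\}$ has a beneficial deviation. I expect the delicate part to be verifying the ``regardless of the counter-strategy'' clause—i.e. showing the deviating coalition can \emph{enforce} the improved payoff against all responses of the complement, which amounts to an enforceability argument in the sequentialised MMPG $G^C$ via the $\val(G^C,s)$ machinery.

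Assembling these, the proof structure is: (i) fix the game $\Game$ and a candidate profile $\strpElm$; (ii) compute $\PO(G^{\Ag})$ and verify $(\pay_i(\strpElm))_{i\in\Ag}$ is Pareto optimal; (iii) exhibit a coalition $C$ and a deviation $\strpElm_C'$ together with a proof that for every counter-strategy $\strpElm_{-C}'$ we have $\pay_i(\strpElm_C',\strpElm_{-C}') > \pay_i(\strpElm)$ for all $i \in C$, using the enforceability of the target value in $G^C$; and (iv) conclude $\strpElm \notin \core(\Game)$ by the beneficial-deviation characterisation. The contrast with the classical superadditive-TU setting, where the core always lies in the Pareto set, is explained by the fact that payoffs here are \emph{not} freely transferable and deviating coalitions face adversarial complements, so a Pareto-optimal ``unfair'' split can still be undercut by a coalition that can enforce a more balanced outcome.
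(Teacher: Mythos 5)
Your high-level plan (exhibit a concrete game, a profile whose payoff lies in $\PO(G^{\Ag})$, and a coalition with a beneficial deviation) is exactly the shape of the paper's argument, but your chosen witness does not work, and the difficulty you flag midway is not a technicality you can patch within that game — it is fatal to it. In the two-player arena of Example~\ref{example}, the Pareto set is the segment joining $(1,0)$ and $(0,1)$, and the only points of it realizable by an actual (finite-memory, hence lasso) profile are $(1,0)$ and $(0,1)$: any cycle visiting both $l$ and $r$ must pass through $m$ at least twice, so its payoff sums to strictly less than $1$ and is off the Pareto frontier. Both extreme profiles lie in the core: the player receiving $1$ cannot improve, so no grand-coalition deviation exists, and the player receiving $0$ cannot \emph{unilaterally} enforce anything positive, since the opponent can refuse the required joint action at $m$ forever. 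More generally, in any two-player game a beneficial deviation from a Pareto-optimal profile by the grand coalition is ruled out by the definition of $\PO(G^{\Ag})$ itself, so you would need a singleton coalition $\{i\}$ that can enforce strictly more than it receives — and in Example~\ref{example} neither player can guarantee more than $0$ alone. So the ``fix'' you sketch (a lopsided profile undercut by a single player) cannot be instantiated in this arena; you would have to design a different game, and you never do.

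The paper avoids this by moving to three players: it uses the cyclic-domination game of Example~\ref{ex:two} (weights $(0,0,0)$ at $s$ and $(2,1,0)$, $(0,2,1)$, $(1,0,2)$ at the three absorbing states), where $\PO(G^{\Ag}) = \{(2,1,0),(0,2,1),(1,0,2)\}$ and the core is \emph{empty}: from each Pareto-optimal outcome a two-player coalition jointly controls the transition to the state it prefers (e.g., $\{2,3\}$ enforce $m$ via $(\ast,T,H)$ regardless of player $1$), so every member of that coalition strictly improves against every counter-strategy. The essential idea your proposal is missing is that the deviating coalition must be a \emph{strict, non-singleton} sub-coalition that can enforce its target without the complement's cooperation — which requires at least three players, or at least a game in which some single player has real unilateral power. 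Steps (i)--(iv) of your outline are fine once such a game is supplied; as written, step (iii) cannot be completed.
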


\subparagraph*{Discrete Geometry and Values.} 
To characterise the core, we utilise techniques from discrete geometry. First, we provide the definitions of two concepts: \textit{convex hull} and \textit{downward closure}. The \textit{convex hull} of a set $ X \subseteq \Real^d $ is the set $ \conv(X) = \{ \sum_{\vec{x} \in X} a_{\vec{x}} \cdot \vec{x} \mid \forall \vec{x} \in X, a_{\vec{x}} \in [0,1] \wedge \sum_{\vec{x} \in X} a_{\vec{x}} = 1 \} $. The \textit{downward closure} of a set $ X \subseteq \Real^d $ is the set $ \downarrow\! X = \{ \vec{x} \in \Rat^d \mid \exists \vec{x}' \in X, \forall i \in \llbracket 1,d \rrbracket, x_i \leq x_i' \} $. Note that if the set $ X $ is finite, then $ \conv(X) $ and $ \downarrow\conv(X) $ are convex polyhedra, thus can be represented by intersections of some finite number of half-spaces~\cite[Theorem~3.1.1]{grunbaum2003convex}.

Now, observe that the downward closure of the Pareto set is equal to the set of values that player 1 can enforce, that is, \( \downarrow \PO(G^C,s) = \val(G^C,s) \).
The set \( \val(G^C) \) can also be characterised by the set of \textit{simple cycles} and \textit{strongly connected components} (SCCs) in the arena of \( G^C \)~\cite{ADMW09}. A \emph{simple cycle} within \( S \subseteq (V_1 \cup V_2) \) is a finite sequence of states \( o = s^0s^1\cdots s^k \in S^* \) with $ s^0 = s^k $ and for all $ i $ and $ j, 0 \leq i < j < k, s^i \neq s^j $. 
Let \( \C(S) \) be the set of simple cycles in \( S \), and \( \scc(G^C{[\strElm_2]}) \) the set of SCCs reachable from \( \sinit \) in \( G^C{[\strElm_2]} \). The set of values that player 1 can enforce is characterised by the intersection of all sets of values that it can achieve against memoryless strategies of player 2. Formally, we have the following~\cite[Theorem 4]{Brenguier2015}:
\[
	\val(G^C) =  \bigcap_{\strElm_2 \in \StrSet^M_2} \bigcup_{S \in \scc(G^C[\strElm_2])} 
	\downarrow \conv \left( \left\{ \left(\frac{ \sum_{j=0}^{k} w_i(o^j)}{\card{o}}\right)_{i \in C} \,\middle\vert\, o \in \C(S) \right\} \right). 
\]

With these definitions in place, we first obtain the following lemma, which shows that the set of enforceable values has polynomial representation.

\begin{lemma}\label{lem:representation}
	The set $ \val(G^C) $ can be represented by a finite union of polyhedra $ P^C_1,\dots,P^C_k $, each of them defineable by a system of linear inequations $ \lambda^C_j $. Moreover, each linear inequation $ (\vec{a},b) \in \lambda^C_j $ can be represented polynomially in the size of $ G^C $.
\end{lemma}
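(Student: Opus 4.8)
The plan is to start from the cycle/SCC characterisation of $\val(G^C)$ recalled just above the statement and reduce everything to a bit-complexity bound for a single downward-closed convex hull. Recall that $\StrSet^M_2$ is finite (the arena of $G^C$ is finite, so there are finitely many memoryless strategies for player~2) and that each $\scc(G^C[\strElm_2])$ is finite. Abbreviating $D_{\strElm_2,S} = \downarrow\conv\left(\left\{\left(\frac{\sum_j w_i(o^j)}{\card{o}}\right)_{i\in C} : o\in\C(S)\right\}\right)$, the characterisation reads $\val(G^C)=\bigcap_{\strElm_2}\bigcup_{S} D_{\strElm_2,S}$, a finite intersection of finite unions. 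Distributing the intersection over the unions rewrites this as a finite union $\bigcup_f \bigcap_{\strElm_2} D_{\strElm_2,f(\strElm_2)}$, indexed by the finitely many choice functions $f$ that pick one SCC per memoryless strategy. Since each $\C(S)$ is finite, each $D_{\strElm_2,S}$ is a convex polyhedron (by the cited Grünbaum result), so each inner intersection is a polyhedron; this already yields the finite union $P^C_1,\dots,P^C_k$ asserted by the lemma, and it remains only to bound the size of the defining inequations.

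First I would check that every generating point of $D_{\strElm_2,S}$ is a rational vector of polynomial size. Its denominator $\card{o}$ is at most the number of states of $G^C$, and each numerator $\sum_j w_i(o^j)$ is a sum of at most that many integer weights, each $w_i$ being of the form $\wFun_i(s)$ and hence of size polynomial in $\size{G^C}$. Thus the finite set $M_{\strElm_2,S}$ generating $D_{\strElm_2,S}$ consists of vectors in $\Rat^c$ of size polynomial in $\size{G^C}$, where $c=\card{C}\le\card{\Ag}$ is itself polynomial in $\size{G^C}$.

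The crucial step is to bound the bit-size of each \emph{facet} inequation of $D_{\strElm_2,S}=\downarrow\conv(M_{\strElm_2,S})$. By Minkowski--Weyl, this polyhedron admits a generator description whose vertices lie in $M_{\strElm_2,S}$ and whose recession cone is generated by $-e_1,\dots,-e_c$; consequently each of its facet hyperplanes is the affine hull of $c$ affinely independent generators, each generator being either a point of $M_{\strElm_2,S}$ or a ray direction $-e_i$. The corresponding inequation $(\vec a,b)$ is therefore the (up-to-scaling) solution of a $c\times c$ linear system: one equation $f_{\vec a}(\vec m_\ell)=b$ per point generator and one homogeneous equation $a_i=0$ per ray direction. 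Its coefficients are thus coordinates of points in $M_{\strElm_2,S}$ (polynomial-bit rationals) together with entries in $\{0,\pm1\}$, so by Cramer's rule and Hadamard's determinant bound $\size{(\vec a,b)}$ is polynomial in $c$ and in the bit-size of those coordinates, i.e. polynomial in $\size{G^C}$.

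The main obstacle — and the reason a naive enumeration would fail — is that $\C(S)$, and hence $M_{\strElm_2,S}$, may contain exponentially many elements, so neither $\card{M_{\strElm_2,S}}$ nor the number of facets is polynomial. This does not threaten the claim, because the lemma only asserts that each \emph{individual} inequation is polynomially representable: every facet hyperplane is pinned down by just $c\le\card{\Ag}$ of the generators, so its bit-size is governed by a single $c\times c$ determinant and is independent of the total number of simple cycles. Finally, to conclude for each polyhedron $P^C_j=\bigcap_{\strElm_2}D_{\strElm_2,f(\strElm_2)}$, I would take $\lambda^C_j$ to be the union of the facet systems of the constituent $D_{\strElm_2,f(\strElm_2)}$; since $\poly$ of a union of systems is the intersection of the corresponding polyhedra, $\lambda^C_j$ indeed defines $P^C_j$, and every $(\vec a,b)\in\lambda^C_j$ is one of the facet inequations bounded above, hence polynomial in $\size{G^C}$.
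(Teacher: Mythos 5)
Your proposal is correct and follows essentially the same route as the paper's proof: both rest on the cycle/SCC characterisation of $\val(G^C)$, the observation that each cycle-average point has bit-size polynomial in $\size{G^C}$ (denominator bounded by the number of states, numerator a short sum of weights), and a facet-complexity bound for the downward-closed convex hull of those points. The only substantive differences are that where the paper simply invokes Theorem~3 of Brenguier and Raskin for that last bound, you rederive it from first principles via Minkowski--Weyl, Cramer's rule and Hadamard's determinant bound (correctly noting that $\downarrow\conv(M)$ is full-dimensional with recession cone generated by $-e_1,\dots,-e_c$, so each facet is pinned down by $c$ generators and its size is independent of the possibly exponential number of simple cycles), and that you make explicit the distribution of $\bigcap_{\strElm_2}\bigcup_{S}$ into a finite union of intersections, a step the paper leaves implicit in its closing sentence; both additions are sound.
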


\begin{proof}
 Let $ X = \{ \vec{x}_1,\dots,\vec{x}_m \} $ be the set of extreme points of $ \conv ( \{ (\frac{ \sum_{j=0}^{k} w_i(o_j)}{\card{o}})_{i \in C} \mid o \in \C(S) \} )  $ for a given $ S \in \scc(G^C[\strElm_2]) $. Observe that $ X $ corresponds to the set of simple cycles in $ S $, as such, for each $ \vec{x} \in X $ we have $ \size{\vec{x}} $ that is of polynomial in the size of $ G^C $. As shown in \cite[Theorem 3]{Brenguier2015},  $ \downarrow \conv(X) $ has a system of inequations $ \lambda $ whose each inequation has representation polynomial in $ c $ and $ \log_2(\max \{\size{\vec{x}} \mid \vec{x} \in X \}) $. Since this holds for each $ \strElm_2 \in \StrSet^M_2 $ and for each SCC in $ G^C[\strElm_2] $, we obtain the lemma.
\end{proof}

Let $ \polyset(G^C) $ denote the set of polyhedra whose union represents $ \val(G^C) $, and for a polyhedron $ P^C_j \in \polyset(G^C) $, we denote by $ \mathcal{H}^C_j $ the set of half-spaces whose intersection corresponds to $ P^C_j $.

\subparagraph*{Polynomial Witness in the Core.}
A \textit{polynomial witness} in the core of $ \Game $ is a vector $ \vec{x} \in \Rat^n $ such that there exists $ \strpElm \in \core(\Game) $ where $ (\pay_i(\strpElm))_{i \in \Ag} = \vec{x} $ and $ \vec{x} $ has a polynomial representation with respect to $ \Game $.
The rest of this section focuses on characterising the core (Theorem~\ref{thm:characterisation}) and showing the existence of a polynomial witness in a non-empty core (Theorem~\ref{thm:polywitness}). We start by introducing some concepts and proving a couple of lemmas.

\begin{definition}
	Given a set of player $ \Ag $ and a coalition $ C \subseteq \Ag $. The \emph{inclusion mapping} of $ X \subseteq \Real^c $ to subsets of $ \Real^n $ is the set $ \mathcal{F}(X) = \{ \vec{y} \in \Real^n \mid \exists \vec{x} \in X, \forall j \in C, x_j = y_j \} $.
\end{definition}

\begin{definition}
	\label{def:compl-hs}
	Let \( H = \halfspace(\vec{a},b) \) be a half-space, the \emph{closed complementary half-space} \( \bar{H} \) is given by \( \bar{H} = \{ \vec{x} \in \Real^d \mid f_{\vec{a}}(\vec{x}) \geq b \} \). 
\end{definition}

\begin{lemma}\label{lem:complement-halfspace}
	If \( \strpElm \in \core(\Game) \) then for all coalitions \( C \subseteq \Ag \) and for all polyhedra $ P^C_j \in \polyset(G^C) $, there is a half-space \( H \in \mathcal{H}^C_j \) such that \( \F((\pay_i(\strpElm))_{i \in C}) \subseteq \F(\bar{H}) \). 
\end{lemma}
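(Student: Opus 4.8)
The plan is to reduce the geometric containment $\F(\vec{p}) \subseteq \F(\bar{H})$ to a plain membership condition, and then to derive a contradiction with core membership by a small perturbation. Throughout, write $\vec{p} = (\pay_i(\strpElm))_{i \in C} \in \Real^c$, fix a coalition $C$, and fix a polyhedron $P^C_j \in \polyset(G^C)$ with $P^C_j = \bigcap_{H \in \mathcal{H}^C_j} H$.

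First I would unpack the inclusion mapping $\F$. For a half-space $H = \halfspace(\vec{a},b) \subseteq \Real^c$, the set $\F(\bar{H}) = \{ \vec{y} \in \Real^n \mid \sum_{j \in C} a_j y_j \geq b \}$ is the cylinder over $\bar{H}$, whereas $\F(\vec{p})$ is the affine fibre fixing the $C$-coordinates to $\vec{p}$ and leaving the remaining coordinates free. Hence $\F(\vec{p}) \subseteq \F(\bar{H})$ holds if and only if $f_{\vec{a}}(\vec{p}) \geq b$, i.e. if and only if $\vec{p} \in \bar{H}$. So the statement is equivalent to: for every $C$ and every $P^C_j$ there is some $H = \halfspace(\vec{a},b) \in \mathcal{H}^C_j$ with $f_{\vec{a}}(\vec{p}) \geq b$.

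Next I would argue by contradiction. Suppose instead that $f_{\vec{a}}(\vec{p}) < b$ for every $H = \halfspace(\vec{a},b) \in \mathcal{H}^C_j$, i.e. $\vec{p}$ satisfies all defining inequalities of $P^C_j$ strictly. Since $\mathcal{H}^C_j$ is finite and each $f_{\vec{a}}$ is linear, there is an $\epsilon > 0$ with $f_{\vec{a}}(\vec{p} + \epsilon \vec{1}) = f_{\vec{a}}(\vec{p}) + \epsilon \sum_j a_j \leq b$ for every such $H$, so that $\vec{p} + \epsilon \vec{1} \in \bigcap_{H \in \mathcal{H}^C_j} H = P^C_j \subseteq \val(G^C)$, while $\vec{p} + \epsilon \vec{1} > \vec{p}$ in every coordinate. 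I would then translate this strictly dominating enforceable value into a beneficial deviation: by the sequentialisation, a witness for $\vec{x} := \vec{p} + \epsilon \vec{1} \in \val(G^C)$ is a player-$1$ strategy $\strElm_1$ in $G^C$, corresponding to a joint strategy $\strpElm_C'$ of $C$ in $\Game$, and player-$2$ strategies correspond to counter-strategies $\strpElm_{-C}'$ of $\Ag \setminus C$; plays and weights are preserved, so $\pay_i(\strpElm_C',\strpElm_{-C}') \geq x_i > p_i = \pay_i(\strpElm)$ for all counter-strategies and all $i \in C$. Thus $\strpElm_C'$ is a beneficial deviation for $C$, contradicting $\strpElm \in \core(\Game)$.

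I expect the main obstacle to be this last translation step, namely making precise that an enforceable value strictly above $\vec{p}$ yields a genuine beneficial deviation in the finite-memory core sense. This needs two ingredients stated earlier: (i) memoryless strategies suffice for player $2$, so that $\val(G^C)$ — quantified over all opponent strategies — really captures what $C$ can guarantee against \emph{every} finite-memory counter-strategy; and (ii) a value in $\val(G^C)$ is enforceable by a \emph{finite-memory} player-$1$ strategy (it is achieved along a lasso through the relevant SCC in the cycle characterisation of $\val$), so the induced deviation $\strpElm_C'$ is admissible for the core. Once $\F$ is unpacked as above, the geometric reduction and the perturbation are routine.
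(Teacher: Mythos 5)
Your proof is correct and follows essentially the same route as the paper's: assume for contradiction that every defining half-space is violated, observe that $(\pay_i(\strpElm))_{i\in C}$ then lies strictly inside $P^C_j\subseteq\val(G^C)$, perturb to obtain a strictly dominating enforceable value, and conclude a beneficial deviation for $C$. You are in fact more explicit than the paper on two points it glosses over --- the reduction of $\F(\vec{p})\subseteq\F(\bar{H})$ to $\vec{p}\in\bar{H}$ and the translation of an enforceable value into a finite-memory deviation --- but the argument is the same.
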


\begin{proof}
	Suppose, for the sake of contradiction, that there is a strategy profile \( \strpElm \in \core(\Game) \), coalition \( C \subseteq \Ag \), and polyhedron $ P^C_j $ such that for every half-space \( H \in \mathcal{H}^C_j \) we have \( \F((\pay_i(\strpElm))_{i \in C}) \not\subseteq \F(\bar{H}) \). Thus, it follows that \( \F((\pay_i(\strpElm))_{i \in C}) \subseteq \F(\val(G^C)) \) and there exists a vector \( \vec{x} \in \F(\val(G^C)) \) such that for every player \( i \in C, \) we have \( x_i > \pay_i(\strpElm) \). This implies that there exists a strategy profile \( \strpElm_{C} \) such that for all counter-strategies \( \strpElm_{-C} \) and players \( i \in C \), we have \( \pay_i((\strpElm_{C},\strpElm_{-C})) > \pay_i(\strpElm) \). In other words, there is a beneficial deviation by the coalition \( C \). Therefore, \( \strpElm \) cannot be in the core, leading to a contradiction.
\end{proof}

In essence, Lemma~\ref{lem:complement-halfspace} states that the absence of a beneficial deviation from a strategy profile $ \strpElm $ can be expressed in terms of polyhedral representations and closed complementary half-spaces.
The next lemma, asserts that any value $ \vec{x} \in \Real^c $ enforceable by a coalition $ C $ can also be achieved by the grand coalition $ \Ag $.

\begin{lemma}\label{lem:inclusion}
	For all coalitions \( C \subseteq \Ag \), it holds that \( \val(G^C) \subseteq \proj_C(\val(G^{\Ag}))  \).
\end{lemma}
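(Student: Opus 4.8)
The plan is to prove the containment $\val(G^C) \subseteq \proj_C(\val(G^{\Ag}))$ by taking an arbitrary enforceable value $\vec{x} \in \val(G^C)$ and exhibiting a value $\vec{y} \in \val(G^{\Ag})$ whose restriction to the coordinates in $C$ equals $\vec{x}$. Unfolding the definition of $\val(G^C)$, the vector $\vec{x}$ being enforceable means there is a strategy $\strElm_1$ for coalition $C$ (player~1 in the sequentialisation $G^C$) such that against \emph{every} counter-strategy of $-C$ (player~2) and for every $j \in C$, the mean-payoff $\MP_j$ of player~$j$ along the induced play is at least $x_j$. The key observation is that $G^{\Ag}$ is the sequentialisation where the \emph{grand} coalition controls all moves: in $G^{\Ag}$, player~2 is the dummy player with no influence, so coalition $\Ag$ can simulate any joint behaviour, including the behaviour of $C$ in $G^C$ while \emph{also} choosing the responses that $-C$ would make.

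The central step is a strategy-transfer argument. First I would fix the strategy $\strElm_1$ witnessing $\vec{x} \in \val(G^C)$, together with a choice of counter-strategy for $-C$; since player~2 may play adversarially in $G^C$, I would instead have coalition $\Ag$ in $G^{\Ag}$ internally simulate $\strElm_1$ for the $C$-actions and simultaneously supply \emph{some} fixed choice of $-C$-actions (for instance by having $\Ag$ play $\strElm_1$ on the $C$-components and any consistent completion on the $-C$-components). This yields a strategy $\strElm_1^{\Ag}$ for coalition $\Ag$ in $G^{\Ag}$ that induces a concrete run $\rho$. Because $\strElm_1$ guarantees $\MP_j \geq x_j$ for every $j \in C$ against all counter-strategies of $-C$ in $G^C$, in particular it guarantees it against the specific $-C$-behaviour that $\Ag$ now supplies, so along $\rho$ we have $\MP_j(w_j(\rho)) \geq x_j$ for each $j \in C$. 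Setting $y_j = \MP_j(w_j(\rho))$ for all $j \in \Ag$, the vector $\vec{y}$ is enforceable by the grand coalition (trivially, since the dummy player~2 cannot deviate), so $\vec{y} \in \val(G^{\Ag})$, and by construction $y_j = x_j$ (or $y_j \geq x_j$, which by downward closure of $\val$ still places the $C$-restriction in $\proj_C(\val(G^{\Ag}))$) for all $j \in C$. Hence $\vec{x} \in \proj_C(\val(G^{\Ag}))$.

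The main obstacle I anticipate is handling the quantifier structure cleanly: $\vec{x} \in \val(G^C)$ asserts a guarantee against \emph{all} player~2 strategies, whereas membership in $\val(G^{\Ag})$ requires producing a single run (equivalently, a strategy for the grand coalition, where player~2 is inert). The care needed is to ensure that the $-C$-actions chosen by $\Ag$ in the simulation give rise to a legitimate play of $G^{\Ag}$ and that the transition structure of the two sequentialisations is compatible under this identification; this is where one must appeal to the explicit edge-relation of the sequentialisation construction, noting that every joint action profile $(\jact_C, \jact_{-C})$ available in the original game $\Game$ is realisable by the grand coalition in $G^{\Ag}$. Once this compatibility is established, the inequality $\MP_j \geq x_j$ transfers immediately and the downward-closure remark (namely $\downarrow \PO(G^{\Ag}) = \val(G^{\Ag})$, already noted in the excerpt) absorbs any slack between $y_j$ and $x_j$, completing the argument.
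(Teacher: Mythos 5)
Your proof is correct, and its underlying content matches the paper's: any value that coalition $C$ can enforce arises from some play of the full game, and every such play is realisable by the grand coalition in $G^{\Ag}$. The presentation differs, though. The paper argues by contradiction in two terse sentences --- assuming $\vec{x} \in \val(G^C) \setminus \proj_C(\val(G^{\Ag}))$ and concluding that some strategy profile would have to lie outside the set of all strategy profiles --- without ever exhibiting the witnessing run. You instead give the direct, constructive version: fix the player-1 strategy $\strElm_1$ witnessing $\vec{x} \in \val(G^C)$, let the grand coalition simulate $\strElm_1$ on the $C$-components together with an arbitrary consistent completion on the $-C$-components, observe that the guarantee against \emph{all} player-2 strategies in $G^C$ applies in particular to the simulated completion, and then absorb the slack $y_j \geq x_j$ using the fact that $\val$ is downward closed (indeed, built into the $\leq$ in its definition). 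Your version is more careful on two points the paper glosses over: the explicit compatibility of the edge relations of the two sequentialisations (noting also that the duplicated weights $w_i(s) = w_i(s,\jact_C)$ preserve mean payoffs), and the downward-closure step needed to turn ``$y_j \geq x_j$ on $C$'' into exact membership of $\vec{x}$ in the projection. What the paper's contradiction phrasing buys is brevity; what yours buys is an actual construction of the element of $\val(G^{\Ag})$ whose projection is $\vec{x}$, which is arguably what a complete proof requires.
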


\begin{proof}
	Suppose, for the sake of contradiction, that there is a vector \( \vec{x} = (x_1,\dots,x_c) \in \val(G^C) \) such that \( \vec{x} \not\in \proj_C(\val(G^{\Ag})) \). This means that there is some strategy profile \( (\strpElm_{C},\strpElm_{-C}) \) and a player \( i \in C \) with \( \pay_i((\strpElm_{C},\strpElm_{-C})) > \pay_i(\strpElm) \) for all \( \strpElm \in \StrSet_{C \cup -C} \). This implies that there is a strategy profile  \( (\strpElm_{C},\strpElm_{-C}) \not\in \StrSet_{C \cup -C} \), i.e., there is a strategy profile that is not included in the set of all strategy profiles, which is a contradiction.
\end{proof}

\begin{figure}[t]
	\begin{subfigure}{0.5\textwidth}
		\centering
		\scalebox{0.7}{
	\begin{tikzpicture}[state/.style={circle, draw, minimum size=1cm}, node distance=1.5cm]
		\node[state, label=below:{$ (0,0,0) $}] (s^0) {\small $s$};
		\node[] (start) [left =1cm of s^0] {};
		\node[state, label=below:{$ (0,2,1) $}] (s^2) [right = of s^0] {\small $m$};
		\node[state, label=below:{$ (2,1,0) $}] (s^1) [above = of s^2] {\small $t$};
		\node[state, label=below:{$ (1,0,2) $}] (s^3) [below = of s^2] {\small $b$};
		
		\draw [-{Latex[width=2mm]}]
		(start) edge node{} (s^0)
		(s^0) edge[] node[right]{\small $(H,H,\ast)$} (s^1)
		(s^0) edge[] node[above]{\small $(*,T,H)$} (s^2)
		(s^0) edge[] node[right]{\small $(T,\ast,T)$} (s^3)
		
		(s^1) edge[loop, out=20, in=340, distance=1cm] node[right]{$*$} (s^1)
		(s^2) edge[loop, out=20, in=340, distance=1cm] node[right]{$*$} (s^2)
		(s^3) edge[loop, out=20, in=340, distance=1cm] node[right]{$*$} (s^3)
		
		(s^0) edge[loop, out=70, in=110, distance=1cm] node[above, align=center]{\small $(H,T,T)$\\ \small $(T,H,H)$} (s^0)
		;
	\end{tikzpicture}
}
	\end{subfigure}
	\begin{subfigure}{0.5\textwidth}
		\centering
		\scalebox{0.7}{
\begin{tikzpicture}
		
	\begin{axis}[xmax=3.1, ymax=3.1, axis x line=middle, axis y line=middle, xmin=-0.5, ymin=-0.5, xtick={0,1,2,3}, xlabel=$x_2$, ylabel=$x_3$, grid=major, grid style=dotted]
		\addplot[] coordinates {(-1, 1) (3, 1)} node[below, xshift=-0.7cm] {$x_3 \leq 1 $};
		\addplot[] coordinates {(2, -1) (2, 3)} node[xshift=-0.6cm, yshift=-0.4cm] {$x_2 \leq 2 $};
		\addplot[fill=gray, draw=none, opacity=0.2] coordinates {(-1, 1) (2, 1) (2,-1) (-1,-1)} \closedcycle;
		\node[label={45:{\small $Q$}},circle,fill,inner sep=2pt] at (axis cs:2,1) {};
		\node[label={0:{\small $R$}},circle,fill,inner sep=2pt] at (axis cs:0,2) {};
		\node[label={120:{\small $P$}},circle,fill,inner sep=2pt] at (axis cs:1,0) {};
		\node[label={90:{\small $S$}},circle,fill=gray,inner sep=2pt] at (axis cs:1,1) {};
		\node[label={90:{\small $H_2 \cap H_3$}}] at (axis cs:0.5,0.5) {};
		
		\draw[shorten >=0.25cm, shorten <=0.25cm,-latex,dashed](axis cs:1,0)--(axis cs:2,1);
		
		\draw[-latex](axis cs:3,1)--(axis cs:3,0.75);
		\draw[-latex](axis cs:2,3)--(axis cs:1.75,3);
	\end{axis}

\end{tikzpicture}

}
	\end{subfigure}
	\caption{Left: Arena for Example~\ref{ex:two}. Right: Graphical representation of $ \val(G^{\{2,3\}}) $. Coordinates $ P,Q,R,S $ corresponds to the set $ \PO(G'^{\Ag}) $. There is a beneficial deviation by $ \{2,3\} $ (dashed arrow) from $ P $ (the $ \{1,2\} $-Pareto optimal value) to $ Q $ (the $ \{2,3\} $-Pareto optimal value), but there is no such a deviation from $ S $.} \label{fig:ex2}
	\vspace{-0.3cm}
\end{figure}
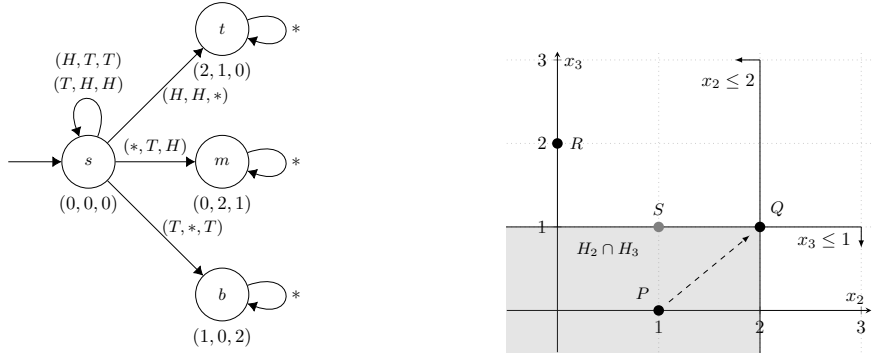

\begin{example}
	\label{ex:two}
	
	Consider a game with $ \Ag = \{1,2,3\} $. The arena is depicted in Figure~\ref{fig:ex2}. 
	Observe that the game has an empty core: if the players stay in $ s $ forever, then $ \{1,2\} $ can beneficially deviate to $ t $. If the play goes to $ t $, then $ \{2,3\} $ can beneficially deviate to $ m $. Similar arguments can be used for $ m $ and $ b $; thus, no strategy profile lies in the core. We can show this using (the contrapositive of) Lemma~\ref{lem:complement-halfspace}: for instance, take a strategy profile $ \strpElm $ that goes to $ t $, and let $ C = \{2,3\} $. Then $ \val(G^{C}) $ can be represented by the intersection of half-spaces $ H_2 = \{ \vec{x} \in \Real^2 \mid x_2 \leq 2 \}$ and $ H_3 = \{ \vec{x} \in \Real^2 \mid x_3 \leq 1 \} $ (see Figure~\ref{fig:ex2} right). Coordinate $ P $ corresponds to $ \strpElm $, and $ \F((\pay_i(\strpElm))_{i \in \{2,3\}}) \not\subseteq \F(\bar{H}_2) $ and $ \F((\pay_i(\strpElm))_{i \in \{2,3\}}) \not\subseteq \F(\bar{H}_3) $. Thus, $ \strpElm $ is not in the core. Now, suppose we modify the game such that $ (\wFun_i(s))_{i \in \Ag} = (1,1,1) $; we obtain  $ \PO(G^{\prime\Ag}) = \{(2,1,0),(0,2,1),(1,0,2),(1,1,1)\} $. Let $ \strpElm' $ be a strategy profile that stays in $ s $ forever (corresponding to $ S $ in Figure~\ref{fig:ex2} right); $ \strpElm' $ is in the core of the modified game, and $ \F((\pay_i(\strpElm'))_{i \in \{2,3\}}) \subseteq \F(\bar{H}_3) $. Indeed for all $ C \subseteq \Ag $ there exists such a half-space. Now if we take the intersection of such half-spaces and the set $ \val(G^{\prime\Ag}) = \downarrow \PO(G^{\prime\Ag}) $, we obtain a non-empty set namely $ \{(1,1,1)\} $ which corresponds to a member of the core $ \strpElm' $. 
\end{example}

From Example~\ref{ex:two}, we observe that a member of the core can be found in the intersection of some set of half-spaces and the set of values enforceable by the grand coalition. We formalise this observation in Theorem~\ref{thm:characterisation}, which provides a {necessary} and {sufficient} condition for the non-emptiness of the core.

\begin{theorem}\label{thm:characterisation}
	The core of a game \(\Game\) is non-empty if and only if there exists a set of half-spaces $ I $ such that 
	\begin{enumerate}
		\item for all coalitions $ C \subseteq \Ag  $ and for all polyhedra $ P^C_j \in \polyset(G^C), I \cap \mathcal{H}^C_j \neq \varnothing $, and 
		\item there exists a polyhedron $ P^{\Ag} \in \polyset(G^N) $ such that $ R = \bigcap_{H \in I} \F(\bar{H}) \cap P^{\Ag} \neq \varnothing $.
	\end{enumerate}
\end{theorem}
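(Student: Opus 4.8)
The plan is to prove both directions of the biconditional, using throughout the reformulation that a coalition $C$ admits a beneficial deviation from a profile with payoff $\vec{w}$ exactly when there is some $\vec{x} \in \val(G^C)$ with $x_i > w_i$ for every $i \in C$. This equivalence holds because memoryless strategies suffice for player~$2$ in $G^C$, so the infimum over player~$2$'s responses is attained over a finite set, and the strictly-dominating threshold vector it produces lies in the downward-closed set $\val(G^C)$. Combined with the decomposition $\val(G^C) = \bigcup_j P^C_j$ from Lemma~\ref{lem:representation}, core membership of a profile becomes a purely geometric condition on its payoff, which is what gets matched against the half-space system $I$.

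For the forward direction I would start from a profile $\strpElm \in \core(\Game)$ with payoff $\vec{v} = (\pay_i(\strpElm))_{i \in \Ag}$. Applying Lemma~\ref{lem:complement-halfspace} to every coalition $C$ and every $P^C_j \in \polyset(G^C)$ yields a half-space $H_{C,j} \in \mathcal{H}^C_j$ with $\F((\pay_i(\strpElm))_{i \in C}) \subseteq \F(\bar{H}_{C,j})$, which I collect into $I = \{ H_{C,j} \}$. Condition~1 then holds by construction, as $H_{C,j} \in I \cap \mathcal{H}^C_j$. For condition~2, observe that $\vec{v} \in \val(G^{\Ag}) = \bigcup_j P^{\Ag}_j$ since the grand coalition enforces its own payoff, so $\vec{v} \in P^{\Ag}$ for some piece; and for each $H_{C,j} \in I$ the vector $\vec{v}$ agrees with $(\pay_i(\strpElm))_{i \in C}$ on the indices in $C$, whence $\vec{v} \in \F((\pay_i(\strpElm))_{i \in C}) \subseteq \F(\bar{H}_{C,j})$. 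Thus $\vec{v} \in R$ and $R \neq \varnothing$.

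For the backward (and harder) direction I would pick $\vec{v} \in R$ supplied by condition~2. Since $\vec{v} \in P^{\Ag} \subseteq \val(G^{\Ag})$, the grand coalition can enforce $\vec{v}$, yielding a strategy profile $\strpElm$ with payoff $\vec{w} \geq \vec{v}$. To show $\strpElm \in \core(\Game)$, suppose for contradiction that some coalition $C$ deviates beneficially; by the reformulation there is $\vec{x} \in \val(G^C)$ with $x_i > w_i \geq v_i$ for all $i \in C$, and some $P^C_j \ni \vec{x}$. Condition~1 supplies a half-space $H = \halfspace(\vec{a},b) \in I \cap \mathcal{H}^C_j$; then $\vec{x} \in P^C_j \subseteq H$ gives $f_{\vec{a}}(\vec{x}) \leq b$, while $H \in I$ and $\vec{v} \in \bigcap_{H' \in I} \F(\bar{H'})$ give $f_{\vec{a}}((v_i)_{i \in C}) \geq b$. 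Hence $\sum_{i \in C} a_i (x_i - v_i) \leq 0$. But each defining half-space of the downward-closed polyhedron $P^C_j$ has $\vec{a} \geq \vec{0}$ and $\vec{a} \neq \vec{0}$, so from $x_i - v_i > 0$ for all $i \in C$ the sum is strictly positive, a contradiction. Therefore no beneficial deviation exists and $\strpElm \in \core(\Game)$.

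The main obstacle I anticipate is the sign property of the normals: the contradiction hinges on every $H = \halfspace(\vec{a},b) \in \mathcal{H}^C_j$ having $\vec{a} \geq \vec{0}$. This is where the structure of $\val(G^C)$ is essential — each $P^C_j$ arises by distributing the intersection over player~$2$'s strategies across the unions over SCCs, so it is an intersection of sets of the form $\downarrow\conv(\cdots)$ and hence downward closed, which forces any constraint defining it to have a non-negative normal. I would isolate this as a small lemma. The remaining care is bookkeeping with the inclusion mapping $\F$ and the dimensions $\Real^c$ versus $\Real^n$, so that $\vec{v} \in \F(\bar{H})$ is read correctly as $(v_i)_{i \in C} \in \bar{H}$, together with the finiteness argument that makes the beneficial-deviation reformulation exact; Lemma~\ref{lem:inclusion} can be invoked to keep the coalition and grand-coalition value sets consistent.
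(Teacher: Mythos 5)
Your proposal is correct, and its left-to-right direction is essentially the paper's: build $I$ from Lemma~\ref{lem:complement-halfspace}, then place the payoff vector in some $P^{\Ag}$ via $\val(G^{\Ag})$ (the paper routes this through Lemma~\ref{lem:inclusion}, you observe it directly for the grand coalition; same content). The right-to-left direction, however, is argued by a genuinely different mechanism. The paper picks $\vec{x} \in R$ and deduces the absence of a beneficial deviation by appealing to $C$-Pareto optimality: from $\vec{x} \in \bigcap_{H \in I}\F(\bar{H})$ it asserts that some $i \in C$ already matches a Pareto-optimal coordinate and hence cannot be strictly improved by the whole coalition. You instead derive an explicit numerical contradiction: a beneficial deviation yields $\vec{y} \in P^C_j$ with $y_i > v_i$ for all $i \in C$, condition~1 supplies $H = \halfspace(\vec{a},b) \in I \cap \mathcal{H}^C_j$, and the two inequalities $f_{\vec{a}}(\vec{y}) \leq b$ and $f_{\vec{a}}((v_i)_{i\in C}) \geq b$ force $\sum_{i \in C} a_i(y_i - v_i) \leq 0$, contradicting $\vec{a} \geq \vec{0}$, $\vec{a} \neq \vec{0}$. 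This buys rigour: the paper's jump from ``lies in a closed complementary half-space'' to ``no strictly dominating point of $P^C_j$'' is exactly the sign-of-normals fact you isolate, and making it explicit exposes the one genuine proof obligation, namely that every half-space in $\mathcal{H}^C_j$ has a non-negative normal. That obligation is discharged by the construction behind Lemma~\ref{lem:representation}: each $P^C_j$ is an intersection of sets of the form $\downarrow\conv(\cdot)$, and the Brenguier--Raskin representation of a downward closure uses only non-negative normals; it does depend on using that particular representation rather than an arbitrary one, so your plan to state it as a standalone lemma is the right call. Your handling of the beneficial-deviation reformulation (the infimum over player~2's responses being attained because memoryless counter-strategies suffice and are finitely many) is also more careful than what the paper writes down, and is needed for the forward implication of that reformulation to be exact.
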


\begin{proof}
    From left to right. Suppose that \( \core(\Game) \neq \varnothing \), then there is a strategy profile \( \strpElm \in \core(\Game) \). It follows from Lemma~\ref{lem:complement-halfspace} that for each coalition \( C \subseteq \Ag \) and for each polyhedron $ P^C_j \in \polyset(G^C) $, there exists a half-space \( H \in \mathcal{H}^C_j \) such that \( \F((\pay_i(\strpElm))_{i \in C}) \subseteq \F(\bar{H}) \). Since this holds for each coalition $ C \subseteq \Ag $ and for each polyhedron $ P^C_j \in \polyset(G^C) $, then it is the case that there exists a set of half-spaces $ I $ such that for all coalitions $ C \subseteq \Ag $ and for all polyhedra $ P^C_j \in \polyset(G^C) $ there is a half-space $ H \in I \cap \mathcal{H}^C_j$, and \( (\pay_i(\strpElm))_{i \in \Ag}  \in \bigcap_{H \in I} \F(\bar{H}) \). Furthermore, for each coalition $ C \subseteq \Ag$, it is the case that $ \F((\pay_i(\strpElm))_{i \in C}) \subseteq \F(\val(G^C)) $ and by Lemma~\ref{lem:inclusion}, we have $ (\pay_i(\strpElm))_{i \in C} \in \proj_C(\val(G^{\Ag})) $. Thus, it is also the case that there exists a polyhedron $ P^{\Ag} \in \polyset(G^{\Ag}) $, such that 
    \(  (\pay_i(\strpElm))_{i \in \Ag} \in P^{\Ag} \). Thus, it follows that $ (\pay_i(\strpElm))_{i \in \Ag} \in \bigcap_{H \in I} \F(\bar{H}) \cap P^{\Ag} $ and consequently $ \bigcap_{H \in I} \F(\bar{H}) \cap P^{\Ag} \neq \varnothing $.

    From right to left. Suppose $ R \neq \varnothing $. Take a vector \( \vec{x} \in R \). Since $ \vec{x} \in \bigcap_{H \in I} \F(\bar{H}) $, then for all coalitions \( C \subseteq \Ag \) 
    there is a player $ i \in C $ where $ x_i \geq x_i' $ for some vector \( \vec{x}' \in \F(\PO(G^C)) \). Thus, by the definition of $ C $-Pareto optimality, there exists a player $ i \in C $ that cannot strictly improve its payoff without making other player $ j \in C, j \neq i, $ worse off.
    Thus, for each coalition \( C \subseteq \Ag \) there is no (partial) strategy profile \( \strpElm_{C} \) such that for all counter-strategy profiles \( \strpElm_{-C} \) we have \( \pay_i((\strpElm_{C},\strpElm_{-C})) > x_i \) for every player \( i \in C \). In other words, for each coalition \( C \) and (partial) strategy profile \( \strpElm_{C} \), there is a counter-strategy profile \( \strpElm_{-C} \) that ensures \( \pay_i((\strpElm_{C},\strpElm_{-C})) \leq x_i \). This means that there is no beneficial deviation by the coalition \( C \). Moreover, since \( \vec{x} \in P^{\Ag} \), then we have \( \vec{x} \in \val(G^{\Ag}) \). As such, there exists a strategy profile \( \strpElm \in \StrSet_{\Ag} \) with \( (\pay_i(\strpElm))_{i \in \Ag} \geq \vec{x} \) and \( \strpElm \in \core(\Game) \).
\end{proof}

Using the characterisation of the core from Theorem~\ref{thm:characterisation} above, it follows that if the core is non-empty, then the set \( R \) is a polyhedron $\poly(\lambda)$ for some system of inequations $\lambda$. As such, there exists a vector \( \vec{x} \in R \) whose representation is polynomial in $ n $ and $ \max\{ \size{(\vec{a},b)} \mid (\vec{a},b) \in \lambda \} $~\cite[Theorem 2]{Brenguier2015}. By Lemma~\ref{lem:representation}, it is also the case that $ \max\{ \size{(\vec{a},b)} \mid (\vec{a},b) \in \lambda \} $ is polynomial in the size of the game. Therefore, we obtain the following.

\begin{theorem}\label{thm:polywitness}
	Given a game $ \Game $, if the core is non-empty, then there is $ \strpElm \in \core(\Game) $ such that $ (\pay_i(\strpElm))_{i \in \Ag} $ can be represented polynomially in the size of $ \Game $. 
\end{theorem}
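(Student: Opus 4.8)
The plan is to read off the result directly from the characterisation in Theorem~\ref{thm:characterisation}, combined with a bit-size bound for feasible points of rational polyhedra. Assuming $\core(\Game) \neq \varnothing$, Theorem~\ref{thm:characterisation} supplies a set of half-spaces $I$ and a polyhedron $P^{\Ag} \in \polyset(G^{\Ag})$ for which $R = \bigcap_{H \in I} \F(\bar{H}) \cap P^{\Ag} \neq \varnothing$. The first observation is that $R$ is itself a rational polyhedron, being a finite intersection of closed half-spaces: the inclusion-mapped complementary half-spaces $\F(\bar{H})$ for $H \in I$, together with the facets of $P^{\Ag}$. Writing $R = \poly(\lambda)$, the task reduces to exhibiting a feasible point $\vec{x} \in R$ of polynomial bit-size and then realising $\vec{x}$ as the payoff of a genuine core profile.

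For the bit-size, I would invoke the standard polyhedral fact (used here through \cite[Theorem 2]{Brenguier2015}) that a non-empty rational polyhedron in $\Real^n$ contains a point whose encoding is polynomial in the dimension $n$ and in $\max\{\size{(\vec{a},b)} : (\vec{a},b) \in \lambda\}$, and—crucially—\emph{independent} of the number of inequations in $\lambda$. This independence is exactly what makes the argument survive, since $I$ may range over all (exponentially many) coalitions $C \subseteq \Ag$ and all polyhedra in the various $\polyset(G^C)$. It therefore suffices to bound each individual inequation. Every $H \in I$ comes from some $\mathcal{H}^C_j$, and by Lemma~\ref{lem:representation} its defining inequation is polynomial in the size of $G^C$, hence in the size of $\Game$, as $G^C$ is polynomially constructible; passing to the complement $\bar{H}$ (Definition~\ref{def:compl-hs}) only flips the inequality, and the inclusion mapping $\F$ merely pads zero coefficients on coordinates outside $C$, so neither step inflates the bit-size beyond a polynomial. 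The facets of $P^{\Ag} \in \polyset(G^{\Ag})$ are polynomial by Lemma~\ref{lem:representation} as well. Hence $\max_{(\vec{a},b)\in\lambda}\size{(\vec{a},b)}$ is polynomial in $\size{\Game}$, and a polynomial-size $\vec{x} \in R$ exists.

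The last step converts $\vec{x}$ into a core profile, and this is where the genuine care is needed. Replaying the right-to-left direction of Theorem~\ref{thm:characterisation}, $\vec{x} \in R \subseteq P^{\Ag} \subseteq \val(G^{\Ag})$ together with $\vec{x} \in \bigcap_{H \in I}\F(\bar{H})$ yields $\strpElm \in \core(\Game)$ with $(\pay_i(\strpElm))_{i\in\Ag} \geq \vec{x}$. The obstacle is the gap between \emph{guaranteeing at least} $\vec{x}$ and \emph{attaining a polynomial payoff}: a point strictly below the Pareto frontier of $P^{\Ag}$ need not be realisable as an exact mean-payoff vector, and the dominating payoff actually attained could a priori be large. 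I would close this by choosing $\vec{x}$ maximal in $R$. Since each facet of $P^{\Ag} = \downarrow\conv(\cdots)$ has a non-negative normal, the constraints $\F(\bar{H})$ are upward-closed, so every maximal point of $R$ lies on the upper frontier of $P^{\Ag}$, i.e.\ inside $\conv$ of the simple-cycle values of the relevant SCC. Such a point is exactly attainable by a finite-memory profile cycling through those (polynomially sized) simple cycles with suitable frequencies, while the $\F(\bar{H})$ constraints continue to forbid any beneficial deviation. The set of maximal points is cut out by the same polynomial system $\lambda$ intersected with the polynomial facets of $\conv(\cdots)$, so the same bit-size bound furnishes a polynomial maximal witness. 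The resulting $\strpElm$ then lies in the core with $(\pay_i(\strpElm))_{i\in\Ag} = \vec{x}$ of polynomial size, as required.
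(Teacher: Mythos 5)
Your argument is essentially the paper's own proof: the paper likewise observes that $R$ is a polyhedron $\poly(\lambda)$, invokes the small-solution bound of \cite[Theorem 2]{Brenguier2015} to get a point polynomial in $n$ and $\max\{\size{(\vec{a},b)} \mid (\vec{a},b)\in\lambda\}$ (a bound independent of the number of inequations, as you rightly stress), and controls the latter quantity via Lemma~\ref{lem:representation}. Your final paragraph goes beyond what the paper writes --- its proof stops once the polynomial vector $\vec{x}\in R$ is produced and tacitly identifies it with the payoff of the core profile delivered by Theorem~\ref{thm:characterisation}, which only guarantees a profile with payoff $\geq \vec{x}$ --- and your maximality argument is a sensible way to close that gap, though the assertion that a point of $\conv$ of the simple-cycle values is attained \emph{exactly} by a finite lasso would still need a word about the contribution of the connecting segments between the cycles.
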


Theorem~\ref{thm:polywitness} plays a crucial role in our approach to solving \nonemptiness and \ecore problems discussed in the next section. It guarantees the existence of a polynomial witness if the core is non-empty, allowing it to be guessed and verified in polynomial time.

\section{Decision problems}\label{secn:decision_problems}
We are now in a position to study each of our decision problems in turn, and establish their complexities. 
We write $ d \in D $ to denote ``$ d $ is a yes-instance of decision problem $ D $''.
Our first problem, called \textsc{Dominated}, serves as an important foundation for studying the other problems. It is formally defined as follows.

\begin{quote}
    \emph{Given}: Game \( \Game \), state \( s \), and vector \( \vec{x} \in \mathbb{Q}^{n} \). \\
    \textsc{Dominated}: Is there a coalition \( C \subseteq \Ag \), and a strategy profile \( \strpElm_{C} \), such that for all counter-strategy profile \( \strpElm_{-C} \), we have \( \pay_i(\pi((\strpElm_{C},\strpElm_{-C}),s)) > x_i \) for each \( i \in C \)?
\end{quote}

\begin{restatable}{theorem}{THMdominated}
	\label{thm:dominated}
	\textsc{Dominated} is \( \SigmaPTwo \)-complete.
\end{restatable}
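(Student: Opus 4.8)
The plan is to show that \textsc{Dominated} is in $\SigmaPTwo$ and is $\SigmaPTwo$-hard, handling each direction separately.

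\textbf{Membership in $\SigmaPTwo$.} The structure of the problem matches the $\Sigma_2^P = \np^{\np}$ pattern almost verbatim: the existential quantifier is over a coalition $C \subseteq \Ag$ and a strategy profile $\strpElm_C$, while the universal quantifier is over counter-strategy profiles $\strpElm_{-C}$. The key obstacle, and the reason we cannot naively guess $\strpElm_C$, is that finite-memory strategies can be arbitrarily large, so the existential witness is not of polynomial size. I would circumvent this exactly as Algorithm~\ref{alg:dominated} suggests: rather than guessing a strategy, guess the coalition $C$ together with a \emph{target payoff vector} $(x_i')_{i \in C} \in \Rat^c$ of polynomial size. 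The correctness hinges on Theorem~\ref{thm:polywitness}-style reasoning applied to the sequentialisation $G^C$: a beneficial deviation with payoff exceeding $\vec{x}$ exists if and only if there is a vector $(x_i')_{i \in C} \in \val(G^C,s)$ with $x_i' > x_i$ for every $i \in C$, and by Lemma~\ref{lem:representation} together with \cite[Theorem~2]{Brenguier2015} such a vector, if it exists, can be chosen with polynomial-size representation. So the existential guess is polynomially bounded. What remains is to verify membership $(x_i')_{i \in C} \in \val(G^C,s)$, i.e., that player~1 in the MMPG $G^C$ can enforce each coordinate. By the result quoted from \cite{Velner2015}, deciding whether player~1 can ensure $\pay(\pi) \geq \vec{x}'$ is $\conp$-complete; this check is our $\conp$ (equivalently $\PiP{1}$) oracle call. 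Composing the polynomial existential guess with a single $\conp$ verification places the problem in $\np^{\conp} = \SigmaPTwo$.

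\textbf{$\SigmaPTwo$-hardness.} For the lower bound I would reduce from a canonical $\SigmaPTwo$-complete problem, namely $\QSAT_2$: given a quantified Boolean formula $\exists \vec{u}\, \forall \vec{v}\, \theta(\vec{u},\vec{v})$ with $\theta$ in, say, \DNF (dual to the usual \CNF form convenient for the $\forall$ player), decide its truth. The reduction must encode the existential choice of a coalition/deviation as the $\exists \vec{u}$ block and the adversarial counter-strategy as the $\forall \vec{v}$ block. The natural construction builds a game whose players partition into a group whose action choices encode the assignment to $\vec{u}$ (and whose membership in the deviating coalition is forced) and players encoding $\vec{v}$ controlled by the environment $-C$, with a weight/gadget structure so that the deviating coalition achieves payoff strictly above $\vec{x}$ precisely when $\theta$ is satisfied under \emph{every} $\vec{v}$-response to the chosen $\vec{u}$. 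The value threshold $\vec{x}$ is set so that ``$> x_i$ for all $i \in C$'' corresponds to the satisfaction of $\theta$.

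\textbf{Main obstacle.} The membership direction is conceptually the delicate one: one must argue carefully that guessing a payoff vector is sound and complete as a substitute for guessing an (unbounded) strategy, which is exactly where Lemma~\ref{lem:representation} and the polynomial-representability of enforceable values are indispensable. For hardness, the main difficulty is designing the arena gadget so that the quantifier alternation of $\QSAT_2$ is faithfully mirrored by the existential coalition/deviation choice against a universal counter-strategy, while keeping all weights and the threshold $\vec{x}$ polynomially bounded; in particular one must ensure the mean-payoff semantics (a $\liminf$ of averages over a lasso) reads off the Boolean value of $\theta$ correctly rather than some spurious averaged value. I expect the encoding of $\forall \vec{v}$ by the counter-strategy, and verifying that no ``mixing'' of cycles lets the coalition cheat the threshold, to be the most technically demanding part.
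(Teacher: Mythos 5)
Your proposal follows essentially the same route as the paper: for membership, guessing the coalition together with a polynomial-size target payoff vector (justified via Lemma~\ref{lem:representation} and the polyhedral representation of $\val(G^C,s)$) and verifying it with the \conp{} check of \cite{Velner2015} to land in $\np^{\conp} = \SigmaPTwo$; for hardness, reducing from $\QSAT_2(\DNF)$ with the existential block encoded by the deviating coalition's choices and the universal block by the counter-strategy. The only thing separating this from the paper's proof is that your hardness gadget is left at the level of a design specification rather than an explicit arena (the paper realises it with clause/literal states, non-clashing-clause transitions, $\pm 2q$ weights on $y$-literal states, and a sink escape that prevents cycle-mixing from cheating the threshold), but the constraints you identify are exactly the ones that construction satisfies.
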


\algdominated

\begin{proof}
Observe that an instance $ (\Game,s,\vec{x}) \in $ \textsc{Dominated} has a witness vector $ (x_i')_{i \in C} $ that lies in the intersection of a polyhedron $ P^C \in \polyset(G^C,s) $ and the set $ \{ \vec{y} \in \Real^c \mid \forall i \in C: y_i \geq x_i \} $. Such an intersection forms a polyhedron $ \poly(\lambda) $, definable by a system of linear inequalities $ \lambda $. By Lemma~\ref{lem:representation}, each $ (\vec{a},b) \in \lambda $ has polynomial representation in the size of $ G^C $. Therefore, $ (x_i')_{i \in C} $ has a representation that is polynomial in the size of $ \Game $. To solve \textsc{Dominated}, we provide Algorithm~\ref{alg:dominated}. The correctness follows directly from the definition of \textsc{Dominated}. For the upper bound: since $ (x_i')_{i \in C} $ is of polynomial size, line 1 can be done in \( \np \). In line 2, we have subprocedure \textsc{Sequentialise} that builds and returns sequentialisation of $ \Game $ w.r.t. coalition $ C $; this can be done in polynomial time. Finally, line 3 is in \( \conp \)~\cite[Theorem 3, Lemma 6]{Velner2015}. Therefore, the algorithm runs in \( \np^{\conp} = \SigmaPTwo \).

    For the lower bound, we reduce from \( \QSAT_2 (3\DNF) \) (satisfiability of quantified Boolean formulae with 2 alternations and 3DNF clauses). The complete proof is included in the technical appendix.

	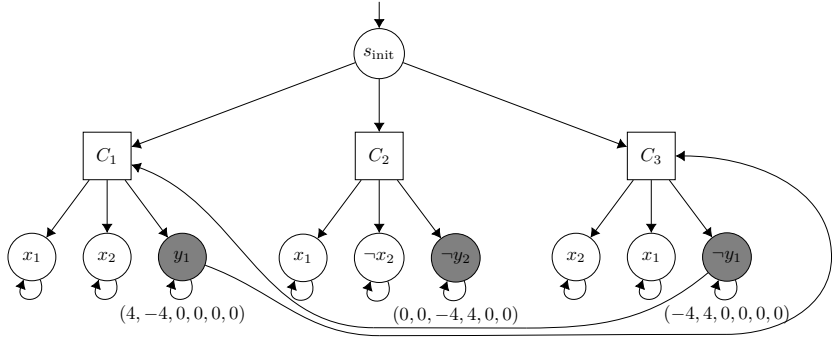
\begin{figure*}[t]
		\centering
		\scalebox{0.7}{
\begin{tikzpicture}[state/.style={circle, draw, minimum size=0.9cm}, dbl/.style={circle, draw, fill=gray, minimum size=0.9cm}, sqr/.style={rectangle, draw, minimum size=0.9cm},
	dsqr/.style={fill=gray, rectangle, draw, minimum size=0.9cm},
	bezier bounding box=true]
		\node[state] (s_init) {$s_{\text{init}}$};
		\coordinate [above =0.5cm of s_init] (start);
                
		\node[sqr] (C_2) [below = of s_init] {$C_2$};
		\node[sqr] (C_1) [left = 4.2cm of C_2] {$C_1$};
		\node[sqr] (C_3) [right = 4.2cm of C_2] {$C_3$};

                \node[state] (C_1-x_2) [below = of C_1] {$x_2$};
                \node[state] (C_1-x_1) [left = 0.5cm of C_1-x_2] {$x_1$};
                \node[dbl] (C_1-x_3) [right = 0.5cm of C_1-x_2] {$y_1$};

                \node[state] (C_2-x_2) [below = of C_2] {$\lnot x_2$};
                \node[state] (C_2-x_1) [left = 0.5cm of C_2-x_2] {$x_1$};
                \node[dbl] (C_2-x_3) [right = 0.5cm of C_2-x_2] {$\lnot y_2$};

                \node[state] (C_3-x_2) [below = of C_3] {$x_1$};
                \node[state] (C_3-x_1) [left = 0.5cm of C_3-x_2] {$x_2$};
                \node[dbl] (C_3-x_3) [right = 0.5cm of C_3-x_2] {$\lnot y_1$};

                \coordinate [below = of C_2-x_2] (C_1-x_3-curve-stop-1);
                \coordinate [below = of C_3-x_3] (C_1-x_3-curve-stop-2) ;
                \coordinate [right = 1.5cm of C_3-x_3] (C_1-x_3-curve-stop-3);

                \coordinate [below right = of C_2-x_3] (C_3-x_3-curve-stop-1);
                \coordinate [below right =of C_2-x_1] (C_3-x_3-curve-stop-2);

		\draw[-{Latex[width=2mm]}] (start) --  (s_init);
		\draw[-{Latex[width=2mm]}] (s_init) --  (C_1);
		\draw[-{Latex[width=2mm]}] (s_init) --  (C_2);
		\draw[-{Latex[width=2mm]}] (s_init) --  (C_3);

                \draw[-{Latex[width=2mm]}] (C_1) -- (C_1-x_1);
                \draw[-{Latex[width=2mm]}] (C_1) -- (C_1-x_2);
                \draw[-{Latex[width=2mm]}] (C_1) -- (C_1-x_3);

                \draw[-{Latex[width=2mm]}] (C_2) -- (C_2-x_1);
                \draw[-{Latex[width=2mm]}] (C_2) -- (C_2-x_2);
                \draw[-{Latex[width=2mm]}] (C_2) -- (C_2-x_3);

                \draw[-{Latex[width=2mm]}] (C_3) --  (C_3-x_1);
                \draw[-{Latex[width=2mm]}] (C_3) -- (C_3-x_2);
                \draw[-{Latex[width=2mm]}] (C_3) --  (C_3-x_3);

		\draw[-{Latex[width=2mm]}] (C_1-x_1) edge[loop, out=290, in=250, distance=0.5cm] (C_1-x_1);
		\draw[-{Latex[width=2mm]}] (C_1-x_2) edge[loop, out=290, in=250, distance=0.5cm] (C_1-x_2);
		\draw[-{Latex[width=2mm]}] (C_1-x_3) edge[loop, out=290, in=250, distance=0.5cm] node[below]{$(4, -4,0,0,0,0)$} (C_1-x_3);

                \draw[-{Latex[width=2mm]}] (C_2-x_1) edge[loop, out=290, in=250, distance=0.5cm] (C_2-x_1);
		\draw[-{Latex[width=2mm]}] (C_2-x_2) edge[loop, out=290, in=250, distance=0.5cm] (C_2-x_2);
		\draw[-{Latex[width=2mm]}] (C_2-x_3) edge[loop, out=290, in=250, distance=0.5cm] node[below]{$(0,0,-4,4,0,0)$} (C_2-x_3);

                \draw[-{Latex[width=2mm]}] (C_3-x_1) edge[loop, out=290, in=250, distance=0.5cm] (C_3-x_1);
		\draw[-{Latex[width=2mm]}] (C_3-x_2) edge[loop, out=290, in=250, distance=0.5cm] (C_3-x_2);
		\draw[-{Latex[width=2mm]}] (C_3-x_3) edge[loop, out=290, in=250, distance=0.5cm] node[below]{$(-4,4,0,0,0,0)$} (C_3-x_3);

                \draw[-{Latex[width=2mm]}] (C_1-x_3-curve-stop-3) to [out=90, in=0] (C_3);
                \draw[-{Latex[width=2mm]}] (C_3-x_3-curve-stop-2) to [out=180, in=340] (C_1);

                \draw[] (C_1-x_3) to [out=340,in=180] (C_1-x_3-curve-stop-1);
                \draw[] (C_1-x_3-curve-stop-1) to [out=0,in=180] (C_1-x_3-curve-stop-2);
                \draw[] (C_1-x_3-curve-stop-2) to [out=0,in=270] (C_1-x_3-curve-stop-3);

                \draw[] (C_3-x_3) to [out=220,in=0] (C_3-x_3-curve-stop-1);
                \draw[] (C_3-x_3-curve-stop-1) to [out=180,in=0] (C_3-x_3-curve-stop-2);
		;
\end{tikzpicture}
}
		\caption{
			The game arena of $ \Game^\Phi $. White circle states are controlled by $ E $, square by $ A $, grey circles $ y_1, \neg y_1, \neg y_2 $ by players $ 1, 2, 4 $, respectively. The weight function is given as vectors shown below the states, and states without vectors have $ \vec{0} $. 
			Furthermore, each grey circle state also has a transition to $ \sink $, which is not shown in the figure.
		}
		\label{fig:qsat}
		\vspace{-0.3cm}
	\end{figure*}
	
    To illustrate the reduction, consider the formula $$ \Phi = \exists x_1 \exists x_2 \forall y_1 \forall y_2 (x_1 \wedge x_2 \wedge y_1) \vee (x_1 \wedge \neg x_2 \wedge \neg y_2) \vee (x_1 \wedge x_2 \wedge \neg y_1). $$
    We build a corresponding game $ \Game^\Phi $ such that $ (\Game^\Phi, \sinit, (-1,-1,-1,-1,-1,0)) = \chi \in $ \textsc{Dominated} if and only if $ \Phi $ is satisfiable.
    To this end, we construct the game \( \Game^\Phi \) in Figure~\ref{fig:qsat} with $ \Ag = \{1,2,3,4,E,A\} $ and the weight function given as vectors, such that for a given vector $ (w_1,...,w_6) $ in state $ s $, $ \wFun_{i}(s) = w_i, i \in \{1,2,3,4\} $ and $ \wFun_{E}(s) = w_5, \wFun_{A}(s) = w_6$. The $ \sink $ (not shown) only has transition to itself and its weights is given by the vector $ (-1,-1,-1,-1,-1,0) $. The intuition is that if $ \Phi $ is satisfiable, then there is a joint strategy $ \strpElm_{C} $ by $ C = \Ag \setminus \{A\} $ that guarantees a payoff of $ 0 $ for each $ i \in C $. If $ \Phi $ is not satisfiable, then $ A $ has a strategy that visits some state $ y_k $ (resp. $ \neg y_k $) infinitely often and player $ 2k-1 $ (resp. $ 2k $) gets payoff $ < -1 $. Since $ y_k $ (resp. $ \neg y_k $) is controlled by $ 2k-1 $ (resp. $ 2k $), then the player will deviate to $ \sink $, and $ \chi \notin \dominated $. On the other hand, if $ \chi \in \dominated $, then there exists a strategy $ \strpElm_{C} $ which guarantees that the play: (a) ends up in some state $ x_k $ or $ \neg x_k $, or (b) visits both $ y_k $ and $ \neg y_k $ infinitely often. For the former, it means that there is a clause with only $ x $-literals, and the latter implies that for all (valid) assignments of $ y $-literals, there is an assignment for $ x $-literal that makes at least one clause evaluate to true. Both cases show that $ \Phi $ is satisfiable.
    Now, notice that the formula \( \Phi \) is satisfiable: take the assignment that set \( x_1 \) and \( x_2 \) to be both true. Indeed, $ \chi \in \textsc{Dominated} $: the coalition $\{1,2,3,4,E\}$ have a strategy that results in payoff vector \( \vec{0} \), e.g., take a strategy profile that corresponds to the cycle \( ({C_1}{y_1}{C_3}{\neg y_1})^{\omega} \).
\end{proof}

Our next problem \ebendev simply asks if a given game has a beneficial deviation from a provided strategy profile:

\begin{quote}
    \emph{Given}: Game \( \Game \), strategy profile \(\strpElm\). \\
    {\ebendev}: Does there exist some coalition \(C \subseteq \Ag\) such that \(C\) has a beneficial deviation from \(\strpElm\)?
\end{quote}

Notice that \ebendev is closely related to \dominated. Firstly, we fix $ s $ to be the initial state. Secondly, instead of a vector, we are given a strategy profile. If we can compute the payoff induced by the strategy profile, then we can immediately reduce \ebendev to \dominated. \cite{steeples2021mean} studies this problem in the memoryless setting, but the approach presented there (i.e., by ``running'' the strategy profile and calculating the payoff vectors) does not generalise to finite-state strategies $ \strpElm $ as the lasso $ \pi(\strpElm) $ may be of exponential size. To illustrate this, consider a profile $ \strpElm $ that acts like a binary counter. We have $ \card{\strpElm} $ that is of polynomial size, but when we run the profile, we obtain an exponential number of step before we encounter the same configuration of game and strategies states. 
However, in order to compute the payoff vector of a finite-state strategy profile $ \strpElm $, we only need polynomial space. First, we recall that for deterministic, finite-state strategies, the path $ \pi(\strpElm) $ is ultimately periodic (i.e., a lasso-path). As such, there exist $ (s^k,\jact^k) $ and $ (s^l,\jact^l) $ with $ l > k $ and $ \config(\strpElm,k) = \config(\strpElm,l) $.
With this observation, computing the payoff vector can be done by Algorithm~\ref{alg:compute}.

\algcompute

Line 1 can be done non-deterministically in polynomial space. In line 2, we have \textsc{ComputeIndex} subprocedure that computes and returns $ k,l $. This procedure is also in polynomial space: we run the profile $ \strpElm $ from $ \sinit $ and in each step only store the current configuration; for the first time we have $ \config(\strpElm,t) = (s^j,q^j_1, \ldots, q^j_n) $, assign $ k = t $, and the second time $ \config(\strpElm,t') = (s^j,q^j_1, \ldots, q^j_n) $, assign $ l = t' $, and we are done. Note that this subprocedure returns the smallest pair of $ k,l $. Line 3 is in polynomial time. So, overall we have a function problem that can be solved in \npspace, and by Savitch's theorem we obtain the following.

\begin{lemma}\label{lem:payoff}
    For a given \( \Game \) and \( \strpElm \), the payoff vector \( (\pay_i(\strpElm))_{i \in \Ag} \) can be computed in \pspace. 
\end{lemma}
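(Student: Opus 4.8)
The plan is to prove the lemma by exhibiting and analysing Algorithm~\ref{alg:compute}, showing that each of its lines can be executed in polynomial space, and then invoking Savitch's theorem to remove the nondeterminism. The starting observation is structural: since every component strategy $\strElm_i = (Q_i, q_i^0, \delta_i, \tau_i)$ is deterministic and finite-state, the induced path $\pi(\strpElm)$ is a lasso and the sequence of configurations $\config(\strpElm,0), \config(\strpElm,1), \dots$ is eventually periodic. The number of distinct configurations is bounded by $\card{\St}\cdot\prod_{i\in\Ag}\card{Q_i}$, which is exponential in the size of the input; the crucial point, however, is that any single configuration $(s^t, q_1^t, \dots, q_n^t)$ has a description of polynomial size. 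This is precisely the gap that the ``run-and-record'' method of \cite{steeples2021mean} cannot close, and it is what a space-bounded computation can exploit.

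The heart of the argument is the subprocedure \textsc{ComputeIndex}. On a guessed configuration $(s^j, q_1^j, \dots, q_n^j)$ lying on the periodic part of the run, it simulates $\strpElm$ from $\sinit$ one step at a time, at each step retaining only the current configuration and an integer step counter. Each simulation step merely evaluates the transition functions $\delta_i$ and the action functions $\tau_i$, a polynomial-time operation; and since the configuration sequence repeats after at most $\card{\St}\cdot\prod_{i\in\Ag}\card{Q_i}$ steps, the counter ranges over an exponentially bounded value and therefore fits in polynomially many bits. Recording the first index $k$ and the second index $l$ at which the guessed configuration recurs thus uses only polynomial space; a branch whose guess corresponds to a transient (tail) configuration never produces a second occurrence within the bound and is simply discarded. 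On a good guess the segment between positions $k$ and $l$ is exactly one period of the lasso, so the $\liminf$ defining the payoff coincides with the average of the weights over this period, and distinct good guesses yield the same average by cyclicity.

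It then remains to check the arithmetic of line~3. The number of summands $l-k$ is at most exponential but is stored in polynomially many bits, and each weight $\wFun_i(\cdot)$ is a fixed integer of polynomial size; accumulating the running sum $\sum_t \wFun_i(\pi(\strpElm)[t])$ during a second simulation pass (rather than storing the terms) keeps it of polynomial bit-length, since its magnitude is bounded by the number of terms times the maximal absolute weight. The resulting payoff is therefore a rational with polynomially sized numerator and denominator. Hence every line of Algorithm~\ref{alg:compute} runs in polynomial space, giving a nondeterministic polynomial-space procedure for the payoff vector; since $\npspace = \pspace$ by Savitch's theorem, the vector is computable in \pspace. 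I expect the main obstacle to be exactly the reconciliation of two competing scales---an exponentially long lasso against a polynomial space budget---resolved by never storing more than a single configuration together with a counter, so that the exponentially many steps are traversed but never recorded.
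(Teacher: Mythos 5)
Your proof is correct and follows essentially the same route as the paper: guess a configuration on the periodic part, simulate $\strpElm$ storing only the current configuration and a polynomially-sized counter to find the indices $k,l$, average the weights over one period, and apply Savitch's theorem. If anything, you are slightly more careful than the paper on the final step, since you note that the sum $\sum_{t=k}^{l}\wFun_i(\pi(\strpElm)[t])$ must be accumulated during a streaming pass over an exponentially long segment (so it is a polynomial-\emph{space}, not polynomial-\emph{time}, computation), whereas the paper's remark that ``line 3 is in polynomial time'' glosses over this.
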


This puts us in position to determine the complexity of \ebendev as follows.

\begin{restatable}{theorem}{THMbendev}
	\label{thm:e-ben-dev}
	\ebendev is $ \pspace $-complete.
\end{restatable}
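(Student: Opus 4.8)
The plan is to establish the two bounds separately, exploiting the tight link between \ebendev and \dominated observed above. For membership in \pspace, given an instance $(\Game,\strpElm)$ I would first invoke Lemma~\ref{lem:payoff} to compute the payoff vector $\vec{x} = (\pay_i(\strpElm))_{i \in \Ag}$ in polynomial space. The crucial point is that, although the underlying lasso $\pi(\strpElm)$ may be exponentially long, the resulting rationals admit polynomial-size representations (numerator and denominator each use polynomially many bits), so $\vec{x}$ can be written down and stored in polynomial space. By the definitions, a coalition has a beneficial deviation from $\strpElm$ exactly when $(\Game,\sinit,\vec{x})$ is a yes-instance of \dominated. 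Since \dominated lies in $\SigmaPTwo \subseteq \pspace$ by Theorem~\ref{thm:dominated}, and \pspace is closed under composing a polynomial-space computation with a polynomial-space decision procedure run on its polynomial-size output, the whole procedure runs in \pspace.

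For \pspace-hardness I would reduce from the acceptance problem of a polynomial-space deterministic Turing machine $M$ on input $w$, which we may assume always halts (by appending a polynomially-bounded configuration clock). The key idea, foreshadowed by the binary-counter remark, is that a polynomially-sized strategy profile can simulate an exponentially long computation because the joint configuration space $\St \times Q_1 \times \cdots \times Q_n$ is exponential in the number of players. Concretely, I would use one player $b_j$ per tape cell whose internal state stores that cell's symbol, encode the head position and control state of $M$ in the arena states (polynomially many), and let $\trnFun$ carry out one step of $M$ per round: the arena reads the symbol of the head cell (supplied as that player's action), computes $M$'s move to update the head position and control state, while each cell player updates its stored symbol through its own internal transition function, with only the head cell changing. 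Under simultaneous moves each component has exactly the information it needs, so this yields a deterministic simulation whose induced run reaches an absorbing accept- or reject-sink.

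The payoffs are arranged so that a single distinguished player $d$ records the outcome while every other player has weight $0$ in every state. Because every non-$d$ player then has constant payoff $0$, no coalition containing such a player can beneficially deviate (a beneficial deviation requires every member to strictly improve), so the only candidate coalition is $\{d\}$. Setting the accept-sink weight of $d$ to $0$, the reject-sink weight to $2$, all transient weights to $0$, and adding a gadget by which $d$ can unilaterally force a fresh escape-sink of weight $1$ (the transition there ignores the other players' choices), we obtain $x_d = \pay_d(\strpElm) = 0$ if $M$ accepts and $x_d = 2$ if it rejects. Then $\{d\}$ can enforce a payoff strictly above $x_d$ precisely when $x_d = 0$, i.e.\ exactly when $M$ accepts; and since all of $d$'s weights are at most $2$, no deviation helps when $x_d = 2$. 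Hence $(\Game,\strpElm) \in \ebendev$ iff $M$ accepts $w$, and the construction is clearly polynomial. I expect the main obstacle to be verifying the correctness of the concurrent one-step simulation under simultaneous moves, in particular that the arena and the head cell each access the current control state and head symbol needed to apply $M$'s transition faithfully, and that the passive participation of player $d$ never perturbs it.
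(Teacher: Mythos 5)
Your upper bound is exactly the paper's: compute $(\pay_i(\strpElm))_{i\in\Ag}$ in polynomial space via Lemma~\ref{lem:payoff}, note that this vector has a polynomial-size representation, and hand it to \dominated, concluding via $\SigmaPTwo \subseteq \pspace$. Your lower bound, however, is a genuinely different reduction. The paper reduces from (the complement of) non-emptiness of the intersection of DFAs: it plugs the given automata in \emph{verbatim} as the players' finite-memory strategies over a three-state arena, so that the profile escapes to the rewarding sink exactly when all automata can simultaneously reach their accepting states, and otherwise the grand coalition deviates beneficially. You instead reduce directly from acceptance of a polynomial-space deterministic Turing machine, distributing the tape over one player per cell so that the \emph{product} of the players' internal strategy states carries the exponentially long computation, and concentrating the deviation question in a single player $d$ whose payoffs ($0$ on accept, $2$ on reject, with a unilaterally enforceable escape worth $1$) make $\{d\}$ the only coalition that could ever deviate. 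Both constructions exploit the same phenomenon---a polynomial-size profile whose induced lasso is exponentially long---but yours is more self-contained (no appeal to the DFA-intersection problem and no complementation step) at the cost of building the simulation machinery, and the single-deviator design with all other weights identically zero neatly trivialises the quantification over coalitions. The simulation is sound under the paper's strategy model: the head symbol is published through the head cell's action so $\trnFun$ can apply $M$'s move, while each cell's internal transition $\delta_j(q_j,s)$ sees both its stored symbol and the control state and head position carried by the arena state $s$. The one point you should make explicit is that the arena's transition function must be represented succinctly (its explicit table is exponential in the number of players); since your $\trnFun$ depends only on $d$'s action and the action of the single cell player designated by the current state, it is no worse in this respect than the paper's own reductions, which rely on the same convention.
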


\begin{proof}
    To solve \ebendev, we reduce it to \dominated as follows. First, using Algorithm~\ref{alg:compute} we compute \( (\pay_i(\strpElm))_{i \in \Ag} \) in \pspace (Lemma~\ref{lem:payoff}). Then, using Algorithm~\ref{alg:dominated} we can check whether \( (\Game,\sinit,(\pay_i(\strpElm))_{i \in \Ag}) \in \dominated \). Since $ \SigmaPTwo \subseteq \pspace $, \ebendev can be solved in \pspace.
	For the lower bound, we reduce from the non-emptiness problem of intersection of automata that is known to be \pspace-hard~\cite{4567949}. The full proof is provided in the technical appendix.
\end{proof}

Another decision problem that is naturally related to the core is asking whether a given strategy profile $ \strpElm $ is in the core of a given game. The problem is formally stated as follows.

\begin{quote}
	\emph{Given}: Game \( \Game \) and strategy profile \( \strpElm \). \\
	\textsc{Membership}: Is it the case that \( \strpElm \in \core(\Game) \)?
\end{quote}

Observe that we can immediately see the connection between \ebendev and \textsc{Membership}: they are essentially dual to each other. Therefore, we immediately obtain the following lemma.

\begin{lemma}\label{lem:dual}
    For a given game \( \Game \) and strategy profile \( \strpElm \), it holds that \( \strpElm \in \core(\Game) \) if and only if \( (\Game,\strpElm) \in \) \ebendev.
\end{lemma}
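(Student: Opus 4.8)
The plan is to prove the lemma purely by unfolding the two definitions involved and comparing them at the level of quantifiers, which is precisely the sense in which the preceding remark calls the two problems \emph{dual}. No geometric or combinatorial machinery from the earlier sections is needed here: once both conditions are written out, the result is a one-line logical observation. So the proof should read as a definitional unfolding together with a single negation-of-quantifiers step.

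First I would spell out membership in the core. By definition, $\strpElm \in \core(\Game)$ holds exactly when $\strpElm$ admits no beneficial deviation, i.e.\ for every nonempty coalition $C \subseteq \Ag$ and every partial profile $\strpElm_{C}'$ there exist a counter-profile $\strpElm_{-C}'$ and an agent $i \in C$ with $\pay_i(\strpElm) \geq \pay_i((\strpElm_{C}',\strpElm_{-C}'))$. Next I would spell out the yes-instances of \ebendev: by definition $(\Game,\strpElm)$ is a yes-instance exactly when some coalition has a beneficial deviation, i.e.\ there exist $C$ and $\strpElm_{C}'$ such that for all $\strpElm_{-C}'$ and all $i \in C$ we have $\pay_i((\strpElm_{C}',\strpElm_{-C}')) > \pay_i(\strpElm)$. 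Comparing the two, the condition defining a yes-instance of \ebendev is the exact logical negation of the condition defining core membership: negating the $\forall C\,\forall \strpElm_{C}'\,\exists \strpElm_{-C}'\,\exists i \in C$ block turns it into $\exists C\,\exists \strpElm_{C}'\,\forall \strpElm_{-C}'\,\forall i \in C$, and the non-strict $\pay_i(\strpElm) \geq \pay_i((\strpElm_{C}',\strpElm_{-C}'))$ negates to the strict $\pay_i((\strpElm_{C}',\strpElm_{-C}')) > \pay_i(\strpElm)$. Hence $\strpElm \in \core(\Game)$ holds if and only if $(\Game,\strpElm)$ is a \emph{no}-instance of \ebendev.

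The mathematical substance is thus essentially trivial, and the one place where the statement can go wrong — the main point to get right — is the \textbf{polarity}, which I would therefore treat as the central obstacle rather than the algebra. The genuine equivalence is with the \emph{complement} of \ebendev, in line with the word ``dual'' in the surrounding text: a profile sits in the core exactly when there is \emph{no} beneficial deviation. Read literally with plain membership ``$(\Game,\strpElm) \in$ \ebendev'' on the right, the biconditional would instead assert that a profile lies in the core precisely when it \emph{does} admit a beneficial deviation, which directly contradicts the definition of the core as the set of profiles admitting none. I would therefore take the intended claim to be $\strpElm \in \core(\Game) \iff (\Game,\strpElm) \notin$ \ebendev and prove that (dual) form, flagging that the displayed ``$\in$'' should be read as non-membership for the equivalence to hold.
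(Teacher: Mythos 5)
Your proof is correct and is essentially the paper's own argument: the paper gives no separate proof, treating the lemma as immediate from the definitions, which is exactly your quantifier-negation unfolding. Your diagnosis of the polarity issue is also right: as printed, the lemma asserts that core membership coincides with the \emph{existence} of a beneficial deviation, which contradicts the definition of the core, so the intended statement is \( \strpElm \in \core(\Game) \) if and only if \( (\Game,\strpElm) \notin \) \ebendev. The paper's own use of the lemma confirms this reading, since Theorem~\ref{thm:membership} is obtained from it together with the fact that co-\pspace{} equals \pspace, an appeal that is only needed when \textsc{Membership} is the \emph{complement} of \ebendev.
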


Using Lemma~\ref{lem:dual} and the fact that co-\pspace = \pspace, we obtain the following theorem.

\begin{theorem}\label{thm:membership}
	\textsc{Membership} is \pspace-complete.
\end{theorem}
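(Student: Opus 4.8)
The plan is to leverage the duality recorded in Lemma~\ref{lem:dual} together with the fact that deterministic polynomial space is closed under complementation (co-\pspace $=$ \pspace, a consequence of Savitch's theorem). By definition the core consists of exactly those profiles that admit \emph{no} beneficial deviation, so \textsc{Membership} and \ebendev are two sides of the same coin: a profile $\strpElm$ lies in $\core(\Game)$ precisely when no coalition has a beneficial deviation from $\strpElm$, i.e.\ when $(\Game,\strpElm)$ is \emph{not} a yes-instance of \ebendev. This is the complementary relationship captured by Lemma~\ref{lem:dual}, and both complexity bounds then follow from Theorem~\ref{thm:e-ben-dev} simply by transporting those bounds across the complement.

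For the upper bound I would argue as follows. Given an instance $(\Game,\strpElm)$ of \textsc{Membership}, by Lemma~\ref{lem:dual} it is a yes-instance iff $(\Game,\strpElm)$ is a no-instance of \ebendev. Since \ebendev $\in \pspace$ by Theorem~\ref{thm:e-ben-dev}, and \pspace is closed under complementation, its complement is also in \pspace; hence \textsc{Membership} $\in \pspace$. Concretely, one runs the \pspace decision procedure for \ebendev (which in turn invokes the payoff computation of Lemma~\ref{lem:payoff} and the algorithm for \dominated) and flips its answer, incurring no additional space.

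For the lower bound I would run the same duality in reverse to obtain a trivial (identity) polynomial-time reduction: an instance $(\Game,\strpElm)$ is a yes-instance of \ebendev iff it is a no-instance of \textsc{Membership}. Thus \ebendev reduces to the complement of \textsc{Membership}; since \ebendev is \pspace-hard (Theorem~\ref{thm:e-ben-dev}) and \pspace-hardness is preserved under complementation, \textsc{Membership} is \pspace-hard. Combining the two bounds yields \pspace-completeness.

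I expect essentially no technical obstacle here, since the substantive work already resides in Theorem~\ref{thm:e-ben-dev} and Lemma~\ref{lem:payoff}. The only point requiring care is to keep the direction of the duality straight---membership in the core is the \emph{absence} of a beneficial deviation---and to invoke the closure of \pspace under complementation explicitly for both the upper and the lower bound.
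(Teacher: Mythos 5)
Your proposal is correct and follows exactly the paper's argument: the paper likewise derives \pspace-completeness of \textsc{Membership} from Lemma~\ref{lem:dual} (the duality with \ebendev), Theorem~\ref{thm:e-ben-dev}, and the closure of \pspace under complementation. (As a minor aside, the paper's statement of Lemma~\ref{lem:dual} appears to omit a negation---membership in the core should be equivalent to \emph{not} being a yes-instance of \ebendev---and your reading of the duality is the intended one.)
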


In rational verification, we check which temporal logic properties are satisfied by a game's stable outcomes. Two key decision problems are formally defined as follows.

\begin{quote}
	\emph{Given}: Game \(\Game\), formula \(\phi\). \\
	\textsc{E-Core}: Is it the case that there exists some \(\vec{\sigma}\in\core(\Game)\) such that \(\vec{\sigma}\models \phi\)?\\
	\textsc{A-Core}: Is it the case that for all \(\vec{\sigma}\in\core(\Game)\), we have \(\vec{\sigma}\models \phi\)?
\end{quote}

To illustrate the decision problems above, let us revisit Example~\ref{example}. Consider a query of \acore for Example~\ref{example} with property $ \phi = \always \sometime l \wedge \always \sometime r $. Such a query will return a positive answer, i.e., \textit{every} strategy profile that lies in the core satisfies $ \phi $.

Another key decision problem in rational verification is determining whether a given game has any stable outcomes. This involves checking if the game has a non-empty core. 

\begin{quote}
    \emph{Given}: Game \( \Game \). \\
    \textsc{Non-Emptiness}: Is it the case that \( \core(\Game) \neq \varnothing \)?
\end{quote}

As demonstrated in Example~\ref{ex:two}, there exist mean-payoff games with an empty core---this is in stark contrast to the dichotomous preferences setting (cf.~\cite{Gutierrez2019,HLNW21}). As such, \nonemptiness problem is non-trivial in mean-payoff games. 

\begin{table}[t]
	\begin{tabular}{l}
		\hspace{-0.3cm}
		\begin{minipage}{0.5\textwidth}
			\algnonemptiness
		\end{minipage}%
	\hspace{0.5cm}
	 \begin{minipage}{0.5\textwidth}
	 	\centering
		\begin{figure}[H]
			\scalebox{0.8}{
	\hspace*{1cm}
	\begin{tikzpicture}
		\draw (0,0) -- (5,0) -- (5,5) -- (0,5) -- (0,0) node[xshift=0.2cm,yshift=0.3cm] {$ \Game $};
		\coordinate (c) at (2.5,2.5);
		\draw[rounded corners=1mm, line width=1mm, gray!40, fill] (c) \irregularcircle{2.1cm}{2mm} node [xshift=-2cm,yshift=1.5cm] {\Large $  $};
		\node at (1,1.6)[circle,fill=black,label=above:{\Large $ \sinit $}]{};
		\node at (2.5,3.5) [align=center] { $ \forall {s \in S}$ \\ $ (\Game,s,\vec{x}) \notin \dominated$} ;
		\node at (2.5,1) {\large $ \Game{[S]} $} ;
		\path
			(1,1.6) edge[dashed,line width=0.5mm] (3,1.6)
			(3,1.6) edge [dashed,-latex,shorten >=1pt,loop,in=70,out=15,distance=2cm,line width=0.5mm] node[xshift=-1.2cm]{$ \pi $} (2,1.6) 
		;
		
	\end{tikzpicture}
	
}
			\caption{Illustration for solving \ecore.}
			\label{fig:ecore}
		\end{figure}
	 \end{minipage}
	\end{tabular}
\vspace*{-0.5cm}
\end{table}

To solve \nonemptiness, it is important to recall the following two results. Firstly, if a game $ \Game $ has a non-empty core, then there is a payoff vector $ \vec{x} $ resulting from $ \strpElm \in \core(\Game) $ whose representation is polynomial (Theorem~\ref{thm:polywitness}). Secondly, if $ \vec{x} $ is a witness for the core, then \( (\Game,\sinit,\vec{x}) \notin \) \dominated.
With these observations, solving \textsc{Non-Emptiness} can be done by Algorithm~\ref{alg:nonemptiness}.
The subprocedure in line 1 is polynomial. Line 2 is in \np (Theorem~\ref{thm:polywitness}) and we call $ \SigmaPTwo $ oracle for line 3. Thus, Algorithm~\ref{alg:nonemptiness} runs in $ \SigmaPThree $. For hardness, we reduce from \( \QSAT_3 (3\CNF) \) (satisfiability of quantified Boolean formulae with 3 alternations and 3CNF clauses). The reduction has a similar flavour to the one used in Theorem~\ref{thm:dominated}, albeit a bit more involved. The complete proof is included in the technical appendix.

\begin{restatable}{theorem}{THMemptiness}
	\label{thm:non-emptiness}
	\textsc{Non-Emptiness} is \( \SigmaPThree \)-complete.
\end{restatable}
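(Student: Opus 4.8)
The proof has two parts, membership in $\SigmaPThree$ and $\SigmaPThree$-hardness, which I would treat separately.

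The plan for the upper bound is to justify Algorithm~\ref{alg:nonemptiness}. The crucial ingredient is Theorem~\ref{thm:polywitness}: whenever $\core(\Game) \neq \varnothing$ there is a core profile whose payoff vector $\vec{x}$ has a representation polynomial in the size of $\Game$. So I would have the machine guess such an $\vec{x}$ (the nondeterministic, $\np$-style step) and then verify two things: that $\vec{x} \in \val(G^{\Ag})$, which is a $\conp$ test by~\cite{Velner2015}, and that $(\Game, \sinit, \vec{x}) \notin \dominated$, which is a $\PiPTwo$ test by Theorem~\ref{thm:dominated}. Both tests lie within the power of a single $\SigmaPTwo$ oracle, so the overall computation is an $\np$ machine with a $\SigmaPTwo$ oracle, i.e.\ in $\np^{\SigmaPTwo} = \SigmaPThree$. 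Correctness follows from the characterisation: by Theorem~\ref{thm:characterisation} together with the definition of \dominated, a core profile exists iff some enforceable $\vec{x}$ admits no beneficial deviation, and Theorem~\ref{thm:polywitness} guarantees that it suffices to search among polynomially-sized witnesses.

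For hardness I would reduce from $\QSAT_3(3\CNF)$, which is $\SigmaPThree$-complete, working with instances $\Phi = \exists X \forall Y \exists Z\, \psi$ where $\psi$ is in 3CNF. The motivating observation is that non-emptiness of the core unfolds as $\exists\vec{x}\,[\vec{x}\in\val(G^{\Ag}) \wedge \forall C\,\forall\strpElm_C\,\exists\strpElm_{-C}\,\exists i\in C:\ \pay_i(\strpElm_C,\strpElm_{-C})\le x_i]$, an $\exists\forall\exists$ alternation that exactly mirrors the three quantifier blocks of $\Phi$. Accordingly, I would build a game $\Game^\Phi$ in which the existence of a core witness corresponds to a choice of the outer assignment to $X$; a candidate deviating coalition $C$ together with a partial deviation strategy $\strpElm_C$ corresponds to a universal assignment to $Y$; and a counter-strategy $\strpElm_{-C}$ corresponds to an existential assignment to $Z$. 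Reusing the punishment/sink gadgets from the \dominated reduction of Theorem~\ref{thm:dominated}, I would arrange the weights so that a coalition member fails to strictly improve exactly when the matrix $\psi$ can be satisfied under the induced assignment, so that $\core(\Game^\Phi) \neq \varnothing$ iff $\Phi$ holds. This amounts to stacking one additional existential ($Z$) layer on top of the $\exists X \forall Y$ gadget of Theorem~\ref{thm:dominated}, which is the ``more involved'' part noted above.

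The hard part will be the lower-bound gadget, where three coupled correctness invariants must hold simultaneously: first, that every legal core witness forces a well-defined total $X$-assignment and that the grand coalition can genuinely enforce the associated $\vec{x}$, so the $\vec{x}\in\val(G^{\Ag})$ side condition is met; second, that the space of possible deviating coalitions and their partial strategies faithfully ranges over all $Y$-assignments, without offering the deviators a spurious profitable move unrelated to clause satisfaction; and third, that the responder's counter-strategy faithfully ranges over $Z$-assignments and can block the deviation precisely when $\psi$ is satisfiable for the chosen $X,Y$. Calibrating the punishment weights and the sink transitions so that ``some $i\in C$ does not strictly improve'' coincides with ``$\psi$ is true'' is the delicate step; once this alignment is fixed, checking that the reduction is polynomial and that the realised payoffs match the intended clause semantics is routine bookkeeping.
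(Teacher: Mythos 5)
Your upper bound is essentially the paper's own argument: guess the polynomial payoff witness guaranteed by Theorem~\ref{thm:polywitness}, check $\vec{x} \in \val(G^{\Ag})$ and $(\Game,\sinit,\vec{x}) \notin \dominated$ with a $\SigmaPTwo$ oracle, giving $\np^{\SigmaPTwo} = \SigmaPThree$. Your accounting of which test sits in which class is if anything slightly more careful than the paper's, and the correctness argument via Theorems~\ref{thm:characterisation} and~\ref{thm:polywitness} is the intended one. This half is fine.

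The hardness half has a genuine gap: you have named the right source problem ($\QSAT_3(3\CNF)$) and the right intuition (the $\exists\forall\exists$ shape of core non-emptiness), but the construction itself --- which is the entire content of this direction --- is deferred, and the correspondence you propose is not obviously realizable. You want the universal block $\forall Y$ to be carried by the choice of deviating coalition $C$ and partial deviation $\strpElm_C$, and the inner $\exists Z$ by the counter-strategy $\strpElm_{-C}$. The difficulty is that the universal quantification in the core ranges over \emph{all} $2^{|\Ag|}$ coalitions and \emph{all} finite-memory deviations, a space far larger and less structured than the set of $Y$-assignments; making it range ``faithfully'' over $Y$-assignments \emph{without offering the deviators a spurious profitable move} is precisely the delicate invariant you flag but do not discharge. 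The paper sidesteps this by a different alignment: the $\forall Y$ block is simulated by a dedicated adversary player $A$ \emph{inside} the game, who selects clauses and (via a non-$y$-clashing restriction on transitions out of $z$-literal states) effectively picks a consistent $Y$-assignment along the play; the $\exists X$ and $\exists Z$ blocks are carried by player $E$'s literal choices; and beneficial deviations are deliberately restricted to individual literal-owners escaping to a sink gadget whose own core is empty (the $P,Q,R$ cyclic-preference subgame), so that ``some $i \in C$ fails to improve'' reduces to a simple payoff threshold of $1$ at each literal state. Without either this re-alignment or a concrete gadget realizing your coalition-as-$\forall Y$ encoding, the reduction is not established, so the $\SigmaPThree$-hardness claim remains unproven in your write-up.
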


		\algecore

Now we turn our attention to \textsc{E-Core}. Observe that for a game \( \Game \) and a LTL specification \( \phi \), a witness to \textsc{E-Core} would be a path \( \pi \) such that \( (\pay_i(\pi))_{i \in \Ag} \geq (\pay_i(\strpElm))_{} \) for some \( \strpElm \in \core(\Game) \), and \( \pi \models \phi \). Furthermore, a (satisfiable) LTL formula \( \phi \) has an ultimately periodic model of size \( 2^{O(\card{\phi})} \)~\cite{Sistla1985}. Thus, the size of representation of \( \pay_i(\pi) \) is at most \( \log_2(\card{W} \cdot 2^{O\card{\phi}}) \), where \( W \) is the maximal absolute value appearing in the weights in \( \Game \), i.e., \( W = \max\{ \card{\wFun_i(s)} \mid i \in \Ag, s \in \St \} \). To solve \textsc{E-Core} with a LTL specification \( \phi \) we use Algorithm~\ref{alg:ecore}. An intuitive illustration is provided in Figure~\ref{fig:ecore}. We begin by guessing a vector $ \vec{x} \in \Rat^n $ and a set of states $ S \subseteq \St $, such that for every $ s \in S, (\Game,s,\vec{x}) \notin \dominated $. Next, we obtain a (sub-)game $ \Game{[S]} $ (shaded area) by removing all states $ s \notin S $ and edges leading to those removed states. In this new game $ \Game{[S]} $, we identify the lasso path $ \pi $ with $ \pay_i(\pi) \geq x_i $ for all $ i \in \Ag $ and $ \pi \models \phi $. This path corresponds to a strategy profile in the core since there is no beneficial deviation by any $ C \subseteq \Ag $ in any state in it.

Line 1 is in polynomial time. Line 2 can be done in \np (Theorem~\ref{thm:polywitness}). In line 3, we can use Algorithm~\ref{alg:dominated} with a slight modification: the state $ s $ is \textit{not} given as part of the input, but \textit{included in the first guess} in the algorithm. Clearly, the modified algorithm still runs in \( \SigmaPTwo \). In line 4, we have the subprocedure \textsc{UpdateArena} that returns $ \Game{[S]} $; this can be done in polynomial time. For line 5, consider the LTL\( ^{lim\Sigma} \) formula $ \psi \vcentcolon= \phi \wedge \bigwedge_{i \in \Ag} (\MP(\wFun_{i}) \geq x_i). $ Observe that a path in \( \Game{[S]} \) satisfying the formula \( \psi \) corresponds to a strategy profile \( \strpElm \) such that in every state \( s \) in \( \pi(\strpElm) \), \( (\Game, s, (\pay_i(\strpElm))_{i \in \Ag}) \notin \)  \textsc{Dominated}. Thus, it follows that \( \strpElm \in \core(\Game) \), and additionally, \( \pi(\strpElm) \models \phi \). Finding such a path corresponds to (existential) model checking \( \psi \) against the underlying arena of \( \Game{[S]} \)
---this can be done in \pspace~\cite{Boker2014}. Hardness directly follows from setting \( \wFun_i(s) = 0 \) for all \( i \in \Ag \) and \( s \in \St \). For \textsc{A-Core}, observe that the problem is exactly the dual of \textsc{E-Core}, and since co-\pspace = \pspace, we have the following theorem.

\begin{theorem}\label{thm:ltl-core}
    The \textsc{E-Core} and \textsc{A-Core} problems with LTL specifications are \pspace-complete.
\end{theorem}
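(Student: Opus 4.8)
The plan is to prove \pspace-completeness for \ecore via the guess-and-verify procedure of Algorithm~\ref{alg:ecore}, and then obtain \acore by duality. For the upper bound I would show the algorithm is sound and complete and that each of its lines fits within \pspace. The procedure builds $G^{\Ag}$ in polynomial time, nondeterministically guesses an enforceable payoff vector $\vec{x} \in \val(G^{\Ag})$ together with a set $S \subseteq \St$, verifies that no state of $S$ is \textsc{Dominated} relative to $\vec{x}$, forms the subgame $\Game{[S]}$, and finally searches for a lasso in $\Game{[S]}$ satisfying the combined objective $\psi \vcentcolon= \phi \wedge \bigwedge_{i \in \Ag}\bigl(\MP(\wFun_i) \geq x_i\bigr)$. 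To justify that $\vec{x}$ can be guessed in \np, I would combine Theorem~\ref{thm:polywitness} with the fact that a satisfiable LTL formula $\phi$ admits an ultimately periodic model of size $2^{O(\card{\phi})}$: the witnessing lasso can thus be taken of length polynomial in $\card{\phi}$ and $\log\card{\St}$, so its payoff---and hence the target $\vec{x}$---has representation polynomial in the size of $\Game$.

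The crux, and the step I expect to be the main obstacle, is the correctness of reducing core membership to a \emph{per-state} non-domination condition. Concretely, I would prove that $\Game$ admits some $\strpElm \in \core(\Game)$ with $\strpElm \models \phi$ \emph{if and only if} there exist $\vec{x}$ and $S$ passing the tests of Algorithm~\ref{alg:ecore} together with a lasso $\pi$ in $\Game{[S]}$ with $\pay_i(\pi) \geq x_i$ for all $i$ and $\pi \models \phi$. For soundness, given such a lasso I build the grand-coalition profile $\strpElm$ realising $\pi$; since $\sinit \in S$ is non-dominated relative to $\vec{x}$ and $(\pay_i(\strpElm))_{i \in \Ag} \geq \vec{x}$, monotonicity of \textsc{Dominated} in its vector argument gives $(\Game,\sinit,(\pay_i(\strpElm))_{i}) \notin \dominated$, so no coalition can beneficially deviate and $\strpElm \in \core(\Game)$. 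The delicate direction is completeness: starting from $\strpElm \in \core(\Game)$ with $\strpElm \models \phi$, I set $\vec{x} = (\pay_i(\strpElm))_{i}$ and let $S$ be the states visited by $\pi(\strpElm)$, and must show that \emph{every} $s \in S$ is non-dominated relative to $\vec{x}$. This is where prefix-independence of the mean-payoff is essential: were some visited $s$ dominated, a coalition could follow $\strpElm$ until reaching $s$ and then switch to its value-enforcing strategy, and because $\MP$ ignores finite prefixes this would yield a beneficial deviation from $\strpElm$, contradicting $\strpElm \in \core(\Game)$. Relating the enforceable values from $s$ to the \textsc{Dominated} predicate (Theorem~\ref{thm:dominated}, via the sequentialisation) closes this direction.

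For the complexity bookkeeping, the sequentialisation and \textsc{UpdateArena} steps are polynomial, the guess of $\vec{x}$ and $S$ is in \np, and the per-state test is the \textsc{Dominated} problem of Theorem~\ref{thm:dominated}---decidable with a $\SigmaPTwo$ oracle once the state is folded into its initial guess. The final search is existential model checking of the LTL$^{lim\Sigma}$ formula $\psi$ against the arena of $\Game{[S]}$, which is in \pspace by~\cite{Boker2014}. Since $\SigmaPTwo \subseteq \pspace$, the whole procedure runs in \pspace. For the matching lower bound I would take the degenerate instances with $\wFun_i(s) = 0$ for all $i \in \Ag$ and $s \in \St$: then every outcome lies in the core, so \ecore collapses to existential LTL model checking over the arena, which is already \pspace-hard. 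Finally, \acore is the logical dual of \ecore ($\forall$ in place of $\exists$), and since co-\pspace = \pspace it is likewise \pspace-complete.
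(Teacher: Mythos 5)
Your proposal follows essentially the same route as the paper: Algorithm~\ref{alg:ecore} with a polynomially-representable guessed payoff vector (via Theorem~\ref{thm:polywitness} and the $2^{O(\card{\phi})}$ ultimately periodic LTL model), the per-state \textsc{Dominated} check in $\SigmaPTwo$, existential model checking of the LTL$^{lim\Sigma}$ formula $\psi$ in \pspace, zero-weight hardness, and duality for \acore; you in fact spell out the soundness/completeness equivalence (prefix-independence, monotonicity of \textsc{Dominated} in the vector) in more detail than the paper does. The only slip is your claim that the witnessing lasso has length polynomial in $\card{\phi}$ --- it is exponential in $\card{\phi}$, and it is only the binary \emph{representation of its payoff} that is polynomial, which is what the argument actually needs.
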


\subparagraph*{\textsc{E/A-Core} with \GRone Specifications.}
The main bottleneck in Algorithm~\ref{alg:ecore} for LTL specifications is in line 5, where the model checking of LTL\( ^{lim\Sigma} \) formula occurs. This can be avoided by considering classes of properties with easier model checking problem. In this section, we address \textsc{E/A-Core} with \GRone specifications\footnote{We could use any other ``easy'' fragment of LTL to avoid this bottleneck, as we will discuss later.}. The approach is similar to that in \cite[Theorem 18]{ummels2011the}. The main idea is to define a linear program \( \mathcal{L} \) such that it has a feasible solution if and only if the condition in line 5 of Algorithm~\ref{alg:ecore} is met.

To this end, first recall that a \GRone formula \(\varphi\) has the following form 
   \[ \varphi = \bigwedge_{l = 1}^{m} \always \sometime \psi_{l} \to \bigwedge_{r = 1}^{n} \always \sometime \theta_{r}\text{,} \]
and let \(V(\psi_{l})\) and \(V(\theta_r)\) be the subset of states in \(\Game\) that satisfy the Boolean combinations \(\psi_{l}\) and \(\theta_{r}\), respectively.  Observe that property \(\varphi\) is satisfied over a path \(\pi\) if, and only if, either \(\pi\) visits every \(V(\theta_r)\) infinitely many times or visits some of the \(V(\psi_{l})\) only a finite number of times. To check the satisfaction of $ \bigwedge_{l = 1}^{m} \always \sometime \psi_{l} $ we define linear programs \(\mathcal{L}(\psi_{l})\) such that it admits a solution if and only if there is a path $ \pi $ in $ \Game{[S]} $ such that \(\pay_i(\pi) \geq x_i \) for every player \(i\) and visits \(V(\psi_{l})\) only \emph{finitely many times}. Similarly, for $ \bigwedge_{r = 1}^{n} \always \sometime \theta_{r} $, define a linear program \(\mathcal{L}(\theta_{1}, \ldots, \theta_{n})\) that admits a solution if and only if there exists a path \(\pi\) in \( \Game{[S]} \) such that \(\pay_i(\pi) \geq x_i\) for every player \(i\) and visits every \(V(\theta_{r})\) \emph{infinitely many times}. Both linear programs are polynomial in the size of \(\Game\) and \(\phi\), and at least one of them admits a solution if and only if $ \phi $ is satisfied in some path in $ \Game{[S]} $. Therefore, given \( \Game{[S]} \) and \GRone formula \( \phi \) it is possible to check in polynomial time whether \( \phi \) is satisfied by a suitable path $ \pi $ in \( \Game{[S]} \). The detailed construction is provided in the technical appendix.

Therefore, to solve \textsc{E-Core} with \GRone specifications, we can use Algorithm~\ref{alg:ecore} with polynomial time check for line 5. Thus, it follows that \textsc{E-Core} with \GRone specifications can be solved in $ \SigmaPThree $. The lower bound follows directly from hardness result of $ \textsc{Non-Emptiness} $ by setting $ \phi = \top $. Moreover, since $ \acore $ is the dual of $ \ecore $, we obtain the following theorem.

\begin{restatable}{theorem}{THMgrecore}
	\label{thm:gr-ecore}
    The \textsc{E-Core} and \acore problems with \GRone specifications are \( \SigmaPThree \)-complete and $ \PiPThree $-complete, respectively.
\end{restatable}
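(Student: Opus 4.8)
The plan is to leverage the machinery already assembled for \textsc{E-Core}/\textsc{A-Core} with LTL (Theorem~\ref{thm:ltl-core}) and simply replace the single expensive step by a cheaper one, while carefully tracking the resulting complexity class. For the upper bound on \textsc{E-Core}, I would reuse Algorithm~\ref{alg:ecore} verbatim through line~4: line~1 (\textsc{Sequentialise}) is polynomial, line~2 guesses a polynomial witness $\vec{x}$ (justified by Theorem~\ref{thm:polywitness}) and is in \np, line~3 issues a $\SigmaPTwo$ oracle call to the modified \dominated, and line~4 (\textsc{UpdateArena}) is polynomial. The only change is line~5: instead of model-checking an LTL$^{lim\Sigma}$ formula in \pspace, I would invoke the polynomial-time procedure for \GRone just described, which decides whether some path $\pi$ in $\Game[S]$ satisfies both $\phi$ and the mean-payoff constraints $\pay_i(\pi)\geq x_i$ for all $i\in\Ag$. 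Accumulating these costs gives $\np$ with a $\SigmaPTwo$ oracle and a polynomial-time final check, i.e.\ $\SigmaPThree$, establishing membership in $\SigmaPThree$ for \textsc{E-Core}.

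The correctness argument mirrors the LTL case: a guessed pair $(\vec{x},S)$ passing line~3 certifies that no coalition can beneficially deviate from any state in $S$, and a path $\pi$ found in line~5 with $\pay_i(\pi)\geq x_i$ and $\pi\models\phi$ corresponds to a strategy profile $\strpElm\in\core(\Game)$ with $\strpElm\models\phi$; conversely, any such core member induces an appropriate $(\vec{x},S,\pi)$. The substantive point to verify here is that the two linear programs $\mathcal{L}(\psi_l)$ and $\mathcal{L}(\theta_1,\dots,\theta_n)$ genuinely capture GR(1) satisfaction under the mean-payoff side constraints: $\phi$ holds on $\pi$ iff either every $V(\theta_r)$ is visited infinitely often or some $V(\psi_l)$ is visited only finitely often, so the algorithm answers YES iff at least one of the associated programs is feasible. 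I would cite the detailed construction deferred to the appendix and note that both programs have size polynomial in $\Game$ and $\phi$, so feasibility is checkable in polynomial time.

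For the lower bound on \textsc{E-Core}, I would observe that setting $\phi=\top$ makes the specification vacuously satisfiable along any path, so \textsc{E-Core} with input $(\Game,\top)$ is a YES-instance exactly when $\core(\Game)\neq\varnothing$. This is precisely \nonemptiness, which is $\SigmaPThree$-complete by Theorem~\ref{thm:non-emptiness}; since $\top$ is trivially a \GRone formula, the reduction is immediate and yields $\SigmaPThree$-hardness. Combining with the upper bound gives $\SigmaPThree$-completeness for \textsc{E-Core}. Finally, for \textsc{A-Core} I would use that it is the logical dual of \textsc{E-Core}: $(\Game,\phi)$ is a YES-instance of \textsc{A-Core} iff $(\Game,\neg\phi)$ is a NO-instance of \textsc{E-Core}. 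Since the negation of a \GRone formula need not itself be \GRone, I would instead argue directly that the complement of \textsc{A-Core} is solved by the \textsc{E-Core}-style algorithm searching for a core member violating $\phi$, placing \textsc{A-Core} in $\mathrm{co}\text{-}\SigmaPThree=\PiPThree$, with matching hardness from the dual of the \nonemptiness reduction.

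I expect the main obstacle to be the dualisation step for \textsc{A-Core}: unlike the LTL case where one simply negates the formula, \GRone is not closed under negation, so the clean ``co-\pspace\,$=$\,\pspace'' move used in Theorem~\ref{thm:ltl-core} does not transfer directly. The safe route is to phrase the complement of \textsc{A-Core} as searching for a \emph{witness of failure}, i.e.\ a core strategy profile $\strpElm$ with $\strpElm\not\models\phi$, and to show this search fits the same $\SigmaPThree$ template with the line~5 check now testing for a path \emph{violating} the \GRone condition subject to the mean-payoff constraints. Establishing that this negated condition is still expressible by polynomially-many polynomial-size linear programs is the delicate part, and I would handle it by appealing to the same $V(\psi_l)/V(\theta_r)$ analysis applied to the negation $\neg\phi = (\bigwedge_l\always\sometime\psi_l)\wedge\neg(\bigwedge_r\always\sometime\theta_r)$, which unfolds into a polynomial disjunction of reachability/recurrence conditions amenable to the linear-programming approach.
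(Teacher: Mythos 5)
Your proposal is correct and follows essentially the same route as the paper: reuse Algorithm~\ref{alg:ecore}, replace the \pspace LTL$^{lim\Sigma}$ model-checking step in line~5 by the polynomial-time feasibility check on the linear programs $\mathcal{L}(\psi_l)$ and $\mathcal{L}(\theta_1,\dots,\theta_n)$, obtain membership in $\SigmaPThree$, and get hardness by instantiating $\phi=\top$ in the \nonemptiness reduction. The one place you go beyond the paper is the \acore half: the paper disposes of it with the single remark that \acore is the dual of \ecore, whereas you correctly flag that \GRone is not closed under negation and therefore that the complement of \acore must be handled by re-running the same template on $\neg\phi = (\bigwedge_l\always\sometime\psi_l)\wedge\bigvee_r\sometime\always\neg\theta_r$, i.e.\ by a polynomial family of linear programs each enforcing that every $V(\psi_l)$ is hit infinitely often while some fixed $V(\theta_r)$ is hit only finitely often --- a condition expressible by combining the Eq4-style constraints of the two programs already defined. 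This is a legitimate and worthwhile tightening of a step the paper leaves implicit; no gap in your argument.
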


\subparagraph*{E/A-Core with Other Specifications.}
We conclude this section by noting that the approach presented here for solving \textsc{E/A-Core} problem is easily generalisable to different types of specification languages without incurring additional computational costs. For instance, the approach for $ \GRone $ is directly applicable to the $ \omega $-regular specifications considered in~\cite{steeples2021mean}. Furthermore, Algorithm~\ref{alg:ecore} can also be easily adapted for LTL fragments whose witnesses are of polynomial size w.r.t. $ \Game $ and $ \varphi $~\cite{demri2002complexity,markey2004past}. This can be done by (1) guessing a witness $ \pi $ in line 2 and (2) checking whether $ \pi \models \phi $ and $ \pay_i(\pi) \geq x_i $ for all $ i \in \Ag $ in line 5, resulting in the same complexity classes as stated in Theorem~\ref{thm:gr-ecore}.

\begin{figure}[t]
	\centering
	\begin{tabular}{l r r | r}
		\toprule
		Problem             & Finite Memory                       & Memoryless & NE\phantom{123}\\
		\midrule
		\textsc{Dominated}           & \(\Sigma^{\text{P}}_2\)-c (Thm.~\ref{thm:dominated}) &  & \\
		\(\exists\)-\textsc{Ben-Dev} & \pspace-c (Thm.~\ref{thm:e-ben-dev}) & \np-c&\\
		\textsc{Membership}          & \pspace-c (Thm.~\ref{thm:membership}) & \conp-c& \\
		\textsc{Non-Emptiness}       & \(\Sigma^{\text{P}}_3\)-c (Thm.~\ref{thm:non-emptiness}) & $ \SigmaPTwo $& \np-c\\
		\textsc{E-Core} with LTL spec.      & \pspace-c (Thm.~\ref{thm:ltl-core}) &  & \pspace-c\\
		\textsc{A-Core} with LTL spec.      & \pspace-c (Thm.~\ref{thm:ltl-core}) &  & \pspace-c\\
		\textsc{E-Core} with GR(1) spec.    & \(\Sigma^{\text{P}}_3\)-c (Thm.~\ref{thm:gr-ecore}) & $\SigmaPTwo$& \np-c\\
		\textsc{A-Core} with GR(1) spec.    & \(\Pi^{\text{P}}_3\)-c (Thm.~\ref{thm:gr-ecore}) & $\PiPTwo$& \conp-c\\
		\bottomrule
	\end{tabular}
	\caption{Summary of complexity results. 
		The NE column shows complexity results for the corresponding decision problems with NE. Complexity results for decision problems related to the core in the memoryless setting can be found in~\protect\cite{steeples2021mean}, whereas for NE in~\protect\cite{ummels2011the,gutierrez2019on}. 
	}
	\label{fig:summary_of_results}
	\vspace{-0.3cm}
\end{figure}

\section{Concluding remarks}\label{secn:concluding_remarks}
In this paper, we present a novel characterisation of the core of cooperative concurrent mean-payoff games using discrete geometry techniques
which differs from previous methods that relied on logical characterisation and punishment/security values~\cite{steeples2021mean,gutierrez2023cooperative}. 
We have also determined the exact complexity of several related decision problems in rational verification. Our results and other related results from previous work are summarised in Figure~\ref{fig:summary_of_results}. 

It is interesting to note that \nonemptiness of the core is two rungs higher up the polynomial hierarchy from its NE counterpart. This seems to be induced by the fact that for a given deviation, the punishment/counter-strategy is not static as in NE. It is also worth mentioning that generalising to finite-memory strategies (second column) results in an increase in complexity classes compared to the memoryless setting (third column). In particular, \ebendev and \textsc{Membership} jump significantly from \np-complete and \conp-complete, respectively, to \pspace-complete. Furthermore, and rather surprisingly, in the finite memory setting, \ebendev and \textsc{Membership} are harder than \nonemptiness, which sharply contrasts with the memoryless setting. This seems to be caused by the following: Algorithm~\ref{alg:nonemptiness} for \nonemptiness is ``non-constructive'', in the sense that we only care about the existence of a strategy profile that lies in the core without having to explicitly construct one. On the other hand, with \textsc{Membership}, we have to calculate the payoff from a compact representation of a given strategy profile, which requires us to ``unpack'' the profile.

Our characterisation of the non-emptiness of the core (Theorem~\ref{thm:characterisation}) provides a way to ensure that the core always has a polynomially representable witness. However, it would be interesting to establish the sufficient and necessary conditions in a broader sense. Previous work has addressed the sufficient and necessary conditions for the non-emptiness of the core in non-transferable utility (NTU) games. For example, \cite{scarf1971existence} showed that the core of an NTU game is non-empty when the players have continuous and quasi-concave utility functions. \cite{uyanik2015nonemptiness} relaxed the continuity assumption (which aligns more closely with the setting in this paper) and achieved a result similar to \cite{scarf1971existence}. However, their game models differ from ours, and the results do not automatically apply. We conjecture that a similar condition, namely the quasi-concavity of utility funtions, plays a vital role in the non-emptiness of the core in concurrent multi-player mean-payoff games. Nevertheless, this still needs to be formally proven and would make for interesting future work.

As previously mentioned, a key difference between the core of concurrent multi-player mean-payoff games and games with dichotomous preferences is that the former may have an empty core. This raises the question: what can we do when the core is empty? One might want to introduce stability, thereby making the core non-empty. One approach, which relates to the above conjecture, involves modifying the utility functions, for instance, through subsidies or rewards \cite{GNPW19,Almagor2015}. Another approach is to introduce norms \cite{perelli2019enforcing}. This is an area for future exploration.

It would also be interesting to generalise the current work to decidable classes of imperfect information mean-payoff games~\cite{Degorre2010}. Another potential avenue is to relax the concurrency, for instance, by making agents loosely coupled. A different but intriguing direction would be to investigate the possibility of using our construction and characterisation here to extend ATL* with mean-payoff semantics.

\bibliography{references}

\newpage

\appendix

\section{Appendix: Proofs}

\subsection{Proof of Proposition~\ref{prpn:core-not-pareto-optimal}}

\begin{figure}[ht]
	\centering
	\scalebox{0.7}{
\begin{tikzpicture}[state/.style={circle, draw, minimum size=1.2cm}, node distance=1.5cm]
		\node[state] (s^0) {\small $(0,0)$};
		\node[] (start) [left =1cm of s^0] {};
		\node[state] (s^2) [right = of s^0] {\small $(0,1)$};
		\node[state] (s^1) [above = of s^2] {\small $(0,2)$};
		\node[state] (s^3) [below = of s^2] {\small $(0,-1)$};

		\draw [-{Latex[width=2mm]}]
		(start) edge node{} (s^0)
		(s^0) edge[] node[right]{\small $(R,R)$} (s^1)
		(s^0) edge[] node[above,align=center]{\small $(L,L)$\\$(R,L)$} (s^2)
		(s^0) edge[] node[right]{\small $(L,R)$} (s^3)

		(s^1) edge[loop, out=20, in=340, distance=1cm] node[right]{$*$} (s^1)
		(s^2) edge[loop, out=20, in=340, distance=1cm] node[right]{$*$} (s^2)
		(s^3) edge[loop, out=20, in=340, distance=1cm] node[right]{$*$} (s^3)
		;
\end{tikzpicture}
}
	\caption{An example of a game with a member of the core that is not Pareto optimal. The symbol $ \ast $ is a wildcard that matches all possible actions.\label{fig:non-pareto-optimal-core}}
\end{figure}
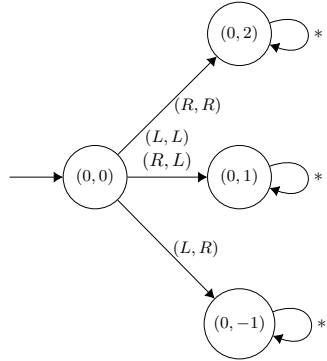

\PROPstrict*

\begin{proof}
	Let \(\Game\) be a mean-payoff game with two players, \(\Ag = \{1,2\}\), and four states. The game graph arena is shown in Figure~\ref{fig:non-pareto-optimal-core}.
	Observe that \((L,L)\) is in the core\footnote{Note that whilst this should be an infinite sequence of actions, only the actions in the first round matter. To avoid clutter, we omit the rest.}: player 1 has no incentive to deviate as they have a constant payoff, and so the coalitions \(\{1\}\) and \(\{1,2\}\) do not have beneficial deviations. Player 2 receives a payoff of \(1\) under \((L, L)\) and moving to \(R\) is not a beneficial deviation, as \((L, R)\) leads to a payoff of \(-1\). Thus, \((L,L)\) lies in the core. However, this strategy is not Pareto optimal, as it is (weakly) dominated by \((R,R)\).
\end{proof}

\subsection{Proof of Proposition~\ref{prpn:core-pareto-optimal}}

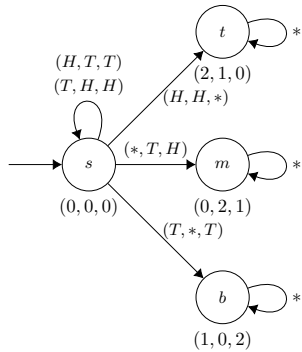
\begin{figure}[ht]
	\centering
	\scalebox{0.7}{
	\begin{tikzpicture}[state/.style={circle, draw, minimum size=1cm}, node distance=1.5cm]
		\node[state, label=below:{$ (0,0,0) $}] (s^0) {\small $s$};
		\node[] (start) [left =1cm of s^0] {};
		\node[state, label=below:{$ (0,2,1) $}] (s^2) [right = of s^0] {\small $m$};
		\node[state, label=below:{$ (2,1,0) $}] (s^1) [above = of s^2] {\small $t$};
		\node[state, label=below:{$ (1,0,2) $}] (s^3) [below = of s^2] {\small $b$};
		
		\draw [-{Latex[width=2mm]}]
		(start) edge node{} (s^0)
		(s^0) edge[] node[right]{\small $(H,H,\ast)$} (s^1)
		(s^0) edge[] node[above]{\small $(*,T,H)$} (s^2)
		(s^0) edge[] node[right]{\small $(T,\ast,T)$} (s^3)
		
		(s^1) edge[loop, out=20, in=340, distance=1cm] node[right]{$*$} (s^1)
		(s^2) edge[loop, out=20, in=340, distance=1cm] node[right]{$*$} (s^2)
		(s^3) edge[loop, out=20, in=340, distance=1cm] node[right]{$*$} (s^3)
		
		(s^0) edge[loop, out=70, in=110, distance=1cm] node[above, align=center]{\small $(H,T,T)$\\ \small $(T,H,H)$} (s^0)
		;
	\end{tikzpicture}
}
	\caption{An example of a game with an empty core. \label{fig:prop2}}
\end{figure}

\PROPPO*

\begin{proof}
	Let $ \Game $ be a mean-payoff game with $ \Ag = \{1,2,3\} $. The game arena is shown in Figure~\ref{fig:prop2}.
	The Pareto optimal set is $ \PO(G^{\Ag}) = \{(2,1,0),(0,2,1),(1,0,2)\} $. Observe that the game has empty core: if the players stay in $ s $ forever, then $ \{1,2\} $ can beneficially deviate to $ t $. If the play goes to $ t $, then $ \{2,3\} $ can beneficially deviate to $ m $. Similar arguments can be used for $ m $ and $ b $; thus, no (Pareto optimal) strategy profile lies in the core.
\end{proof}

\subsection{Proof of Theorem~\ref{thm:e-ben-dev}}

\THMbendev*

\begin{proof}
	To solve \ebendev, we reduce it to \dominated as follows. First we compute \( (\pay_i(\strpElm))_{i \in \Ag} \) in \pspace (Lemma~\ref{lem:payoff}). Then we can query whether \( (\Game,\sinit,(\pay_i(\strpElm))_{i \in \Ag}) \in \) \dominated. Since $ \SigmaPTwo \subseteq \pspace $, \ebendev can be solved in \pspace.
	
	For the lower bound, we reduce from the non-emptiness problem of the intersection of deterministic finite automata (DFA) that is known to be \pspace-hard~\cite{4567949}. Let $ A_1,\dots,A_n $ be a set of deterministic finite automata (DFAs), and let $ F_i = \{ q_i^* \} $ be the set of accepting state of $ A_i $. Note that we can always assume that $F_i$ only has one state; otherwise, we can simply introduce a new symbol in the alphabet (call it $a$), a new state $f_i$ for $A_i$, and define the final state of $A_i$ to be $f_i$, as well as defining $\Delta_i(q,a) := f_i$, for each $q \in F_i$, where $\Delta_i$ is the transition function of $F_i$.
	We construct from each $ A_i = (Q_i,\Sigma_i,\delta_i,q_i^0,F_i) $ a strategy $ \strElm_{i} = (Q_i,q_i^0,\delta_i,\tau_i) $ where $ \tau_i(q_i) = q_i $. We build a game with $ \Ag = \{1,\dots,n\} $ and arena with 3 states $ \St = \{ s_0,s_1,s_2 \} $. For each $ i \in \Ag, \Ac_i = Q_i \cup \{d_i\} $, where $ d_i $ is a fresh variable. The transition function is defined as follows:
	
	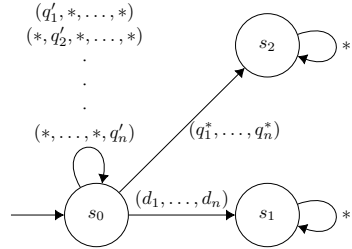
\begin{figure}[H]
		\centering
		\scalebox{0.7}{
		\begin{tikzpicture}[state/.style={circle, draw, minimum size=1.2cm}, node distance=2cm]
			\node[state] (s^0) {$s_0$};
			\node[] (start) [left =1cm of s^0] {};
			\node[state] (s^2) [right = of s^0] {$s_1$};
			\node[state] (s^1) [above = of s^2] {$s_2$};
			
			\draw [-{Latex[width=2mm]}]
			(start) edge node{} (s^0)
			(s^0) edge[] node[right]{$(q_1^*,\dots,q_n^*)$} (s^1)
			(s^0) edge[] node[above,align=center]{$(d_1,\dots,d_n)$} (s^2)
			
			(s^0) edge[loop, out=120, in=80, distance=1cm] node[above,align=center]{$(q_1',\ast,\dots,\ast)$\\$(\ast,q_2',\ast,\dots,\ast)$\\$ \cdot $\\$ \cdot $\\$ \cdot $\\$(\ast,\dots,\ast,q_n')$} (s^0)
			(s^1) edge[loop, out=20, in=340, distance=1cm] node[right]{$*$} (s^1)
			(s^2) edge[loop, out=20, in=340, distance=1cm] node[right]{$*$} (s^2)
			;
		\end{tikzpicture}
		}
		\caption{The game arena where $ q_i' \neq q_i^* $.}
		\label{fig:bendev}
	\end{figure}

	The weight function is given as follows.
	
	\begin{center}
		\begin{tabular}{c c}
			\toprule
			\(s \in \St\)    & $ (\wFun_{i}(s))_{i \in \Ag} $ \\
			\midrule
			$ s_0 $  &     $ (0,\dots,0) $         \\
			$ s_1 $  &     $ (1,\dots,1) $         \\
			$ s_2 $  &     $ (1,\dots,1) $         \\
			\bottomrule
		\end{tabular}
	\end{center}

	Given $ (\Game,\strpElm) $ where $ \strpElm = (\strElm_1,\dots,\strElm_n) $, we claim that $ (\Game,\strpElm) \notin $ \ebendev if and only if the intersection of $ A_1,\dots,A_n $ has non-empty language. From left to right: it is easy to see that in order for $ \strpElm $ to admit no beneficial deviation, the game has to eventually enter $ s_2 $, because otherwise the grand coalition can deviate to $ s_1 $ and obtain better payoffs. The only possible way to enter $ s_2 $ is when each of $ A_i $ arrives at the accepting state, and thus the intersection has non-empty language. From right to left, we argue in a similar way.
\end{proof}

\subsection{Proof of Theorem~\ref{thm:dominated}}\label{app:dominated}

\THMdominated*

\begin{proof}
	Observe that an instance $ (\Game,s,\vec{x}) \in $ \textsc{Dominated} has a witness vector $ (x_i')_{i \in C} $ that lies in the intersection of a polyhedron $ P^C \in \polyset(G^C,s) $ and the set $ \{ \vec{y} \in \Real^c \mid \forall i \in C: y_i \geq x_i \} $. Such an intersection forms a polyhedron $ \poly(\lambda) $, definable by a system of linear inequalities $ \lambda $. By Lemma~\ref{lem:representation}, each $ (\vec{a},b) \in \lambda $ has polynomial representation in the size of $ G^C $. Therefore, $ (x_i')_{i \in C} $ has a representation that is polynomial in the size of $ \Game $. To solve \textsc{Dominated}, we provide Algorithm~\ref{alg:dominated}. The correctness follows directly from the definition of \textsc{Dominated}. For the upper bound: since $ (x_i')_{i \in C} $ is of polynomial size, line 1 can be done in \( \np \). In line 2, we have subprocedure \textsc{Sequentialise} that builds and returns sequentialisation of $ \Game $ w.r.t. coalition $ C $; this can be done in polynomial time. Finally, line 3 is in \( \conp \)~\cite[Theorem 3, Lemma 6]{Velner2015}. Therefore, the algorithm runs in \( \np^{\conp} = \SigmaPTwo \).
	
	For the lower bound, we reduce from \( \QSAT_2 (3\DNF) \) (satisfiability of quantified Boolean formulae with 2 alternations and 3DNF clauses), which is known to be \( \SigmaPTwo \)-hard~\cite{papadimitriou1994computational}. Consider a formula of the form
	\[ \Phi \coloneqq \exists x_1 \cdots \exists x_p \forall y_1 \cdots \forall y_q C_1 \vee \cdots \vee C_r \]
	where each \( C_i \) is the conjunction of three literals \( C_i = l_{i,1} \wedge l_{i,2} \wedge l_{i,3} \), and the literals are of the form \( x_k, \neg x_k, y_k, \) or \( \neg y_k \). For clauses \( C \) and \( C' \), we say that they are not \textit{clashing} if there is no literal \( x_k \) appears in \( C \) and \( \neg x_k \) in \( C' \).
	
	For a given formula \( \Phi \) we build a corresponding game \( \Game^{\Phi} \) such that \( (\Game^{\Phi},\sinit,(-1,\dots,-1,0)) \in \textsc{Dominated} \) if and only if \( \Phi \) is satisfiable, as follows.
	\begin{itemize}
		\item \( \Ag = \{1,\dots,2q,E,A\} \);
		\item \( \St = \{ \sinit, C_1,\dots,C_r, l_{1,1}, \dots, l_{r,3}, \sink \} \) where
		\begin{itemize}
			\item the states $\sinit $ and $x$-literal states are controlled by player $ E $,
			\item each state $ l_{i,j} $ of the from $y_k$ (resp. $ \neg y_k $) is controlled by player $2k$ (resp. $2k-1$), and
			\item \( \{{C_1},\dots,{C_r}\} \) (i.e., the clause states) by player $ A $;
		\end{itemize}
		
		\item the transition function is given as:
		\begin{itemize}

			\item from \( \sinit \), player $ E $ can decide to which state in \( \{{C_1},\dots,{C_r}\} \) the play will proceed---she picks the clause;
			
			\item from each state \( {C_i} \), player $ A $ can decide to which state in \( \{{l_{i,1}},\dots, {l_{i,3}}\} \) the play will proceed---he picks the literal;
			
			\item from each \( {l_{i,j}} \), there is a self-loop transition, 
			
			\item from each \( {l_{i,j}} \) of the form \( y_k \) (resp. \( \neg y_k \)), the transitions are controlled by player $2k$ (resp. $2k-1$), and defined as follows:
			\begin{itemize}
				\item there is a transition from \( {l_{i,j}} \) to every \( {C_h}, i \neq h \), where \( y_k \) or \( \neg y_k \) occurs in \( C_h \), and \( C_i, C_h\) are not clashing, and
				\item there is also a transition to $ \sink $.
			\end{itemize}
		
			\item $ \sink $ has only self-loop transition.
			
		\end{itemize}
		
		\item the weight function is given as:
		\begin{itemize}
			\item for a literal state \( {l_{i,j}} \)
			\begin{itemize}
				\item if \( l_{i,j} \) is of the form \( y_k \), then \( \wFun_{2k-1}({l_{i,j}}) = 2q \) and \( \wFun_{2k}({l_{i,j}}) = -2q \), and for each \( a \in \Ag \setminus \{2k-1,2k\} \), \( \wFun_{a}({l_{i,j}}) = 0 \);
				
				\item if \( l_{i,j} \) is of the form \( \neg y_k \), then \( \wFun_{2k-1}({l_{i,j}}) = -2q \) and \( \wFun_{2k}({l_{i,j}}) = 2q \) and for each \( a \in \Ag \setminus \{2k-1,2k\} \), \( \wFun_{a}({l_{i,j}}) = 0 \);
				
				\item if \( l_{i,j} \) is of the form \( x_k \) or \( \neg x_k \), \( (\wFun_{a}({l_{i,j}}))_{a \in \Ag} = \vec{0} \).
			\end{itemize}
			
			\item for each non-literal state \( s \in \{ \sinit,{C_1},\dots,{C_r} \} \), we have \( (\wFun_{i}(s))_{i \in \Ag} = \vec{0} \).
			
			\item for each $ i \in \Ag \setminus \{A\}, \wFun_{i}(\sink) = -1 $ and $ \wFun_{A}(\sink) = 0 $.
		\end{itemize}
	\end{itemize}
	
	\begin{figure*}[ht]
		\centering
		\scalebox{0.7}{
\begin{tikzpicture}[state/.style={circle, draw, minimum size=0.9cm}, dbl/.style={circle, draw, fill=gray, minimum size=0.9cm}, sqr/.style={rectangle, draw, minimum size=0.9cm},
	dsqr/.style={fill=gray, rectangle, draw, minimum size=0.9cm},
	bezier bounding box=true]
		\node[state] (s_init) {$s_{\text{init}}$};
		\coordinate [above =0.5cm of s_init] (start);
                
		\node[sqr] (C_2) [below = of s_init] {$C_2$};
		\node[sqr] (C_1) [left = 4.2cm of C_2] {$C_1$};
		\node[sqr] (C_3) [right = 4.2cm of C_2] {$C_3$};

                \node[state] (C_1-x_2) [below = of C_1] {$x_2$};
                \node[state] (C_1-x_1) [left = 0.5cm of C_1-x_2] {$x_1$};
                \node[dbl] (C_1-x_3) [right = 0.5cm of C_1-x_2] {$y_1$};

                \node[state] (C_2-x_2) [below = of C_2] {$\lnot x_2$};
                \node[state] (C_2-x_1) [left = 0.5cm of C_2-x_2] {$x_1$};
                \node[dbl] (C_2-x_3) [right = 0.5cm of C_2-x_2] {$\lnot y_2$};

                \node[state] (C_3-x_2) [below = of C_3] {$x_1$};
                \node[state] (C_3-x_1) [left = 0.5cm of C_3-x_2] {$x_2$};
                \node[dbl] (C_3-x_3) [right = 0.5cm of C_3-x_2] {$\lnot y_1$};

                \coordinate [below = of C_2-x_2] (C_1-x_3-curve-stop-1);
                \coordinate [below = of C_3-x_3] (C_1-x_3-curve-stop-2) ;
                \coordinate [right = 1.5cm of C_3-x_3] (C_1-x_3-curve-stop-3);

                \coordinate [below right = of C_2-x_3] (C_3-x_3-curve-stop-1);
                \coordinate [below right =of C_2-x_1] (C_3-x_3-curve-stop-2);

		\draw[-{Latex[width=2mm]}] (start) --  (s_init);
		\draw[-{Latex[width=2mm]}] (s_init) --  (C_1);
		\draw[-{Latex[width=2mm]}] (s_init) --  (C_2);
		\draw[-{Latex[width=2mm]}] (s_init) --  (C_3);

                \draw[-{Latex[width=2mm]}] (C_1) -- (C_1-x_1);
                \draw[-{Latex[width=2mm]}] (C_1) -- (C_1-x_2);
                \draw[-{Latex[width=2mm]}] (C_1) -- (C_1-x_3);

                \draw[-{Latex[width=2mm]}] (C_2) -- (C_2-x_1);
                \draw[-{Latex[width=2mm]}] (C_2) -- (C_2-x_2);
                \draw[-{Latex[width=2mm]}] (C_2) -- (C_2-x_3);

                \draw[-{Latex[width=2mm]}] (C_3) --  (C_3-x_1);
                \draw[-{Latex[width=2mm]}] (C_3) -- (C_3-x_2);
                \draw[-{Latex[width=2mm]}] (C_3) --  (C_3-x_3);

		\draw[-{Latex[width=2mm]}] (C_1-x_1) edge[loop, out=290, in=250, distance=0.5cm] (C_1-x_1);
		\draw[-{Latex[width=2mm]}] (C_1-x_2) edge[loop, out=290, in=250, distance=0.5cm] (C_1-x_2);
		\draw[-{Latex[width=2mm]}] (C_1-x_3) edge[loop, out=290, in=250, distance=0.5cm] node[below]{$(4, -4,0,0,0,0)$} (C_1-x_3);

                \draw[-{Latex[width=2mm]}] (C_2-x_1) edge[loop, out=290, in=250, distance=0.5cm] (C_2-x_1);
		\draw[-{Latex[width=2mm]}] (C_2-x_2) edge[loop, out=290, in=250, distance=0.5cm] (C_2-x_2);
		\draw[-{Latex[width=2mm]}] (C_2-x_3) edge[loop, out=290, in=250, distance=0.5cm] node[below]{$(0,0,-4,4,0,0)$} (C_2-x_3);

                \draw[-{Latex[width=2mm]}] (C_3-x_1) edge[loop, out=290, in=250, distance=0.5cm] (C_3-x_1);
		\draw[-{Latex[width=2mm]}] (C_3-x_2) edge[loop, out=290, in=250, distance=0.5cm] (C_3-x_2);
		\draw[-{Latex[width=2mm]}] (C_3-x_3) edge[loop, out=290, in=250, distance=0.5cm] node[below]{$(-4,4,0,0,0,0)$} (C_3-x_3);

                \draw[-{Latex[width=2mm]}] (C_1-x_3-curve-stop-3) to [out=90, in=0] (C_3);
                \draw[-{Latex[width=2mm]}] (C_3-x_3-curve-stop-2) to [out=180, in=340] (C_1);

                \draw[] (C_1-x_3) to [out=340,in=180] (C_1-x_3-curve-stop-1);
                \draw[] (C_1-x_3-curve-stop-1) to [out=0,in=180] (C_1-x_3-curve-stop-2);
                \draw[] (C_1-x_3-curve-stop-2) to [out=0,in=270] (C_1-x_3-curve-stop-3);

                \draw[] (C_3-x_3) to [out=220,in=0] (C_3-x_3-curve-stop-1);
                \draw[] (C_3-x_3-curve-stop-1) to [out=180,in=0] (C_3-x_3-curve-stop-2);
		;
\end{tikzpicture}
}
		\caption{
			The game arena of $ \Game^\Phi $. White circle states are controlled by $ E $, square by $ A $, grey circles $ y_1, \neg y_1, \neg y_2 $ by players $ 1, 2, 4 $, respectively. The weight function is given as vectors shown below the states, and states without vectors have $ \vec{0} $. Furthermore, each grey circle state also has a transition to $ \sink $.
		}
		\label{fig:qsat-ap}
	\end{figure*}
	
	To illustrate the reduction, consider the formula \[ \Phi = \exists x_1 \exists x_2 \forall y_1 \forall y_2 (x_1 \wedge x_2 \wedge y_1) \vee (x_1 \wedge \neg x_2 \wedge \neg y_2) \vee (x_1 \wedge x_2 \wedge \neg y_1). \]
	We build a corresponding game $ \Game^\Phi $ such that $ (\Game^\Phi, \sinit, (-1,-1,-1,-1,-1,0)) = \chi \in $ \textsc{Dominated} if and only if $ \Phi $ is satisfiable.
	To this end, we construct the game \( \Game^\Phi \) in Figure~\ref{fig:qsat-ap} with $ \Ag = \{1,2,3,4,E,A\} $ and the weight function given as vectors, such that for a given vector $ (w_1,...,w_6) $ in state $ s $, $ \wFun_{i}(s) = w_i, i \in \{1,2,3,4\} $ and $ \wFun_{E}(s) = w_5, \wFun_{A}(s) = w_6$. The $ \sink $ only has transition to itself and its weights is given by the vector $ (-1,-1,-1,-1,-1,0) $. The intuition is that if $ \Phi $ is satisfiable, then there is a joint strategy $ \strpElm_{C} $ by $ C = \Ag \setminus \{A\} $ that guarantees a payoff of $ 0 $ for each $ i \in C $. If $ \Phi $ is not satisfiable, then $ A $ has a strategy that visits some state $ y_k $ (resp. $ \neg y_k $) infinitely often and player $ 2k-1 $ (resp. $ 2k $) gets payoff $ < -1 $. Since $ y_k $ (resp. $ \neg y_k $) is controlled by $ 2k-1 $ (resp. $ 2k $), then the player will deviate to $ \sink $, and $ \chi \notin \dominated $. On the other hand, if $ \chi \in \dominated $, then there exists a strategy $ \strpElm_{C} $ which guarantees that the play: (a) ends up in some state $ x_k $ or $ \neg x_k $, or (b) visits both $ y_k $ and $ \neg y_k $ infinitely often. For the former, it means that there is a clause with only $ x $-literals, and the latter implies that for all (valid) assignments of $ y $-literals, there is an assignment for $ x $-literal that makes at least one clause evaluate to true. Both cases show that $ \Phi $ is satisfiable.
	Now, notice that the formula \( \Phi \) is satisfiable: take the assignment that set \( x_1 \) and \( x_2 \) to be both true. Indeed, $ \chi \in \textsc{Dominated} $: the coalition $\{1,2,3,4,E\}$ have a strategy that results in payoff vector \( \vec{0} \), e.g., take a strategy profile that corresponds to the cycle \( ({C_1}{y_1}{C_3}{\neg y_1})^{\omega} \).
	
	Observe that the construction above produces a game whose size is polynomial in the size of \( \Phi \). The numbers of players and states are clearly polynomial. The transition function has a polynomial representation. Checking for clashing clauses and determining transitions from literal states to clause states can be done in quadratic time. Overall, the construction of \( \Game^{\Phi} \) can be done in polynomial time.
	
	We show that \( (\Game^{\Phi},\sinit,(-1,\dots,-1,0)) \in \)  \textsc{Dominated} if and only if the formula \( \Phi \) is satisfiable.
	
	\( (\Leftarrow) \) Assume that \( \Phi \) is satisfiable, then there is a (partial) assignment \( v(x_1,\dots,x_p) \) such that the formula \( \forall y_1 \cdots \forall y_q C_1 \vee \cdots \vee C_r \) is valid. Let \( \strK \) and \( \strElm_A \) denote strategies of coalition \( \mathcal{K} = \Ag \setminus \{A\} \) and player \( A \), respectively. According to \cite{Velner2015}, it is enough to only consider memoryless strategies \( \strElm_A \). The strategies correspond to some assignments of variables, that is, by choosing the literal \( y_k \) or $\neg y_k$, player \( A \) sets the assignment of the literal such that it evaluates to false. Similarly, by choosing the clause \( C_i \), \( \coal \) pick the correct assignments for literals \( x_k \) or \( \neg x_k \) in \( C_i \). We distinguish between strategies that are \textit{admissible} and those that are not. A non-admissible strategy is a strategy that chooses two contradictory literals \( y_k \) in \( C \) and \( \neg y_k \) in \( C' \). If \( \strElm_A \) is non-admissible, then \( \mathcal{K} \) can achieve \( \vec{0} \) by choosing the strategy that alternates between \( C \) and \( C' \), and thus we have a yes-instance of \textsc{Dominated}.
	
	Now suppose that \( A \) chooses an admissible strategy \( \strElm_A \). Then it corresponds to a valid assignment \( v(y_1,\dots,y_q) \). Since for \( v(x_1,\dots,x_p) \) the formula \( \forall y_1 \cdots \forall y_q C_1 \vee \cdots \vee C_r \) is valid, the (full) assignment \( v(x_1,\dots,x_p,y_1,\dots,y_q) \) makes the formula \(  C_1 \vee \cdots \vee C_r \) evaluate to true. Thus, \( \coal \) can pick a clause state \( {C_i} \) that is true under \( v(x_1,\dots,x_p,y_1,\dots,y_q) \) and \( A \) picks a literal state of the form \( x_k \) or \( \neg x_k \) in clause \( C_i \), and not \( y_k \) or \( \neg y_k \) since it will contradict the assumption that \( C_i \) evaluates to true. Therefore, the strategy profile \( (\strK,\strElm_A) \) induces the payoff \( \vec{0} \), and we have a yes-instance of \textsc{Dominated}.
	
	\( (\Rightarrow) \) Assume that the strategy profile \( (\strK,\strElm_A) \) induces a payoff \( \pay_j((\strK,\strElm_A)) > -1 \) for each \( j \in \coal \). Let \( \mathcal{C} \) and \( \bar{\mathcal{C}} \) be the set of clauses that are chosen and not chosen in \( (\strK,\strElm_A) \), respectively. We define the (partial) assignment of \(v( x_1,\dots,x_p) \) as follows:
	\begin{enumerate}
		\item for each \( C_i \in \mathcal{C} \) and for each literal \( x_k \) or \( \neg x_k \) in \( C_i \)
		\begin{enumerate}
			\item \( v(x_k) \) is true;
			\item \( v(\neg x_k) \) is false;
		\end{enumerate}
		
		\item for each \( C_h \in \bar{\mathcal{C}} \) and for each literal \( x_k \) or \( \neg x_k \) in \( C_h \), if it does not appear in \( C_i \in \mathcal{C} \), then \( v(x_k) \) or \( v(\neg x_k) \) is true.
	\end{enumerate}
	
	Let \( v' \) be an (extended) arbitrary assignment of \( x_1,\dots,x_p,y_1,\dots,y_q \) compatible with \(v( x_1,\dots,x_p) \). Assume towards a contradiction that \( v' \) does not make any of the clauses evaluate to true. Then in each \( C_i \in \mathcal{C} \), \( \player \) can choose a literal that makes \( C_i \) false. Either (i) \( \player \) chooses a literal \( y_k \) or \( \neg y_k \) and there is only a self-loop from the sate \( {y_k} \) or \( {\neg y_k} \), or (ii) we visit some clauses infinitely often. We distinguish between these two cases:
	\begin{enumerate}
		\item[(i)] If the run arrives in literal \( y_k \) or \( \neg y_k \) and there is only a self-loop from the sate \( {y_k} \) or \( {\neg y_k} \), 
		then player $2k$ or $2k-1$ will choose to move into the sink state and the players get payoff $(-1,\dots,-1,0)$.
		This contradicts our previous assumption that \( \pay_j((\strK,\strP)) > -1 \) for each \( j \in \coal \);
		
		\item[(ii)] If the play visits some clauses infinitely often, then by the construction of the game graph there exists a literal state \( {y_k} \) (resp. \( {\neg y_k} \)) visited infinitely often with \( \wFun_{2k-1}({y_k}) = -2q \) (resp. \( \wFun_{2k}({\neg y_k}) = -2q \)) and the state \( {\neg y_k} \) (resp. \( {y_k} \)) is never visited. This means that either \( \pay_{2k}((\strK,\strP)) < -1 \) or \( \pay_{2k-1}((\strK,\strP)) < -1 \), and player $ 2k $ or $ 2k-1 $ will choose to go to $ \sink $ and the players get $ (-1,\dots,-1,0) $. This contradicts our previous assumption that \( \pay_j((\strK,\strP)) > -1 \) for each \( j \in \coal \);
	\end{enumerate}
	
	This implies that assignment \( v' \) makes at least one clause evaluate to true. Furthermore, since this holds for any arbitrary \( v' \) compatible with \(v( x_1,\dots,x_p) \), we conclude that \( \Phi \in \QSAT_2 \).
\end{proof}

\subsection{Proof of Theorem~\ref{thm:non-emptiness}}

\THMemptiness*

\begin{proof}
	To solve \nonemptiness, it is important to recall the following two results. Firstly, if a game $ \Game $ has a non-empty core, then there is a payoff vector $ \vec{x} $ resulting from $ \strpElm \in \core(\Game) $ whose representation is polynomial (Theorem~\ref{thm:polywitness}). Secondly, if $ \vec{x} $ is a witness for the core, then \( (\Game,\sinit,\vec{x}) \notin \) \dominated.
	With these observations, solving \textsc{Non-Emptiness} can be done by Algorithm~\ref{alg:nonemptiness}.
	The subprocedure in line 1 is polynomial. Line 2 is in \np (Theorem~\ref{thm:polywitness}) and we call $ \SigmaPTwo $ oracle for line 3. Thus, Algorithm~\ref{alg:nonemptiness} runs in $ \SigmaPThree $.
	
	For hardness, we reduce from \( \QSAT_3 (3\CNF) \) (satisfiability of quantified Boolean formulae with 3 alternations and 3CNF clauses). 
	Consider a formula of the form
	
	\[ \Psi \coloneqq \exists x_1 \cdots \exists x_p \forall y_1 \cdots \forall y_q \exists z_1 \cdots \exists z_t C_1 \wedge \cdots \wedge C_r. \]
	
	where each \( C_i \) is the disjunction of three literals \( C_i = l_{i,1} \vee l_{i,2} \vee l_{i,3} \), and the literals are of the form \( x_k, \neg x_k, y_k,\neg y_k, z_k,\) or \( \neg z_k \). For clauses \( C \) and \( C' \), we say that they are not $ y$\textit{-clashing} if there is no literal \( y_k \) (resp. $ \neg y_k $) appears in \( C \) and \( \neg y_k \) (resp. $ y_k $) in \( C' \).
	
	For a given formula $ \Psi $ we build a corresponding game $ \Game^\Psi $ such that the core of $ \Game^\Psi $ is not empty if and only if $ \Psi $ is satisfiable, as follows.
	
	\begin{itemize}
		\item $ \Ag = \{1,\dots,2p, 2p+1,\dots,2p+2t,E,A,P,Q,R \} $
		
		\item $ \St = \{ \sinit, \sink \} \cup \{C_v \vert 1 \leq v \leq r\} \cup \{l_{1,1},\dots,l_{r,3}\} $, where
		\begin{itemize}
			\item state $ \sinit $ is controlled by player $ A $
			
			\item states $ C_1, \dots, C_r $ are controlled by player $ E $
			
			\item each state $ l_{i,j} $ of the form $ x_k $ (resp.$ \neg x_k $) is controlled by player $ 2k-1 $ (resp. $ 2k $)
			
			\item each state $ l_{i,j} $ of the form $ z_k $ (resp.$ \neg z_k $) is controlled by player $ 2(p+k)-1 $ (resp. $ 2(p+k) $) and player $ A $, where player $ 2(p+k)-1 $/$ 2(p+k) $ has a ``veto'' power to either follow player $ A $'s decision or, instead, unilaterally choose to go to $ \sink $
			
			\item each state $ l_{i,j} $ of the form $ y_k $ or $ \neg y_k $ is controlled by player $ A $ \footnote{Note that the controller of these states is ultimately not important because, as later defined, from these states we can only go to $ \sink $.}
			
			\item the state $ \sink $ is a sink state, and implemented by a gadget that will be explained later.
			
		\end{itemize}
		
		\item the transition function is given as:
		\begin{itemize}
			
			\item from $ \sinit $ player $ A $ can choose to move to a clause state $ C_v, 1 \leq v \leq r $

			\item from a state $ C_v $ player $ E $ can choose to move to a literal state $ l_{v,j} $
			
			\item from a literal state $ l_{i,j} $ of the form $ x_k $ (resp. $ \neg x_k $), player $ 2k-1 $ (resp. $ 2k $) can choose to move to $ \sinit $ or $ \sink $
			
			\item from a literal state $ l_{i,j} $ of the form $ z_k $ or $ \neg z_k $, player $ E $ can choose to stay in the current state or to move to any clause state $ C' $ that is not \yclash with $ C_i $.
			
			\item from a literal state $ l_{i,j} $ of the form $ z_k $ (resp. $ \neg z_k $) player $ 2(p+k)-1 $ (resp. $ 2(p+k) $) can overrule player $ A $'s decision, and move to $ \sink $.
		\end{itemize}
		
		\item the weight function is given as:
		\begin{itemize}
			
			\item for each literal state $ l_{i,j} $
			\begin{itemize}
				\item if it is of the form $ x_k $, then $ \wFun_{2k-1}(l_{i,j}) = 3r, \wFun_{2k}(l_{i,j}) = -3r $ and for each $ a \in \Ag \setminus \{2k-1,2k\}, \wFun_a(l_{i,j}) = 0 $
				
				\item if it is of the form $ \neg x_k $, then $ \wFun_{2k}(l_{i,j}) = 3r, \wFun_{2k-1}(l_{i,j}) = -3r $ and for each $ a \in \Ag \setminus \{2k-1,2k\}, \wFun_a(l_{i,j}) = 0 $
				
				\item if it is of the form $ z_k $, then $ \wFun_{2(p+k)-1}(l_{i,j}) = 3r, \wFun_{2(p+k)}(l_{i,j}) = -3r $ and for each $ a \in \Ag \setminus \{2(p+k)-1,2(p+k)\}, \wFun_a(l_{i,j}) = 0 $
				
				\item if it is of the form $ \neg z_k $, then $ \wFun_{2(p+k)}(l_{i,j}) = 3r, \wFun_{2(p+k)-1}(l_{i,j}) = -3r $ and for each $ a \in \Ag \setminus \{2(p+k)-1,2(p+k)\}, \wFun_a(l_{i,j}) = 0 $
				
				\item otherwise, $ \wFun_{a}(l_{i,j}) = 0 $ for each $ a \in \Ag $.
			\end{itemize}
			
			\item $ \wFun_{a}(\sinit) = \wFun_{a}(s_\forall) = \wFun_{a}(C_i) = 0 $ for each $ a \in \Ag $ and $ 1 \leq i \leq r $.
		\end{itemize}
	\end{itemize}
	
	Now we explain the construction of $ \sink $ gadget which is a small variation of a game with an empty core provided in the proof of Proposition~\ref{prpn:core-pareto-optimal}. Consider a graph arena with four states $ I, U, M, B $ in which the players $ P, Q, R $ each has two actions: $ H, T $, and only the actions of those players matter in these states (i.e., the rest of the players are dummy players.) The weight function is given as follows:
	
	\begin{center}
		\begin{tabular}{CCCCCC}
			\hline
			\wFun_{a}(s) & P & Q & R & E & a \in \Ag \setminus \{P,Q,R,E\} \\
			\hline
			I & -1 & -1 & -1 & 0 & 1\\
			U & 2 & 1 & 0 & 0 & 1\\
			M & 0 & 2 & 1 & 0 & 1\\
			B & 1 & 0 & 2 & 0 & 1\\
			\hline
		\end{tabular}
	\end{center}
	
	The transition function is given below--we only specify the transitions for the state $ I $ as the other states only have self-loops.
	
	\begin{center}
		\begin{tabular}{CC}
			\hline
			(a_P,a_Q,a_R) & \St \\
			\hline
			(H,H,H) & U \\
			(H,H,T) & U \\
			
			(H,T,H) & M \\
			
			(H,T,T) & I \\
			(T,H,H) & I \\
			
			(T,H,T) & B \\
			
			(T,T,H) & M \\
			
			(T,T,T) & B \\
			\hline
		\end{tabular}
	\end{center}
	
	Observe that once we enter $ \sink $, we cannot get out. Furthermore, every strategy profile that starts at state $ I $ admits beneficial deviations. If the run stays at $ I $ forever, the players can beneficially deviate by moving to one of $ U,M,B $. However, if the game ends up at either of those states, then there will always be a coalition (of 2 players) that can beneficially deviate.
	
	To illustrate the construction, consider the formula
	\[ \Psi = \exists x_1 \exists x_2 \forall y_1 \exists z_1 (x_1 \vee x_2 \vee y_1) \wedge (\neg x_1 \vee y_1 \vee z_1) \wedge (\neg x_2 \vee \neg y_1 \vee \neg z_1) \]
	
	\begin{figure*}[ht]
		\centering
		\scalebox{0.7}{
\begin{tikzpicture}[state/.style={circle, draw, minimum size=0.9cm}, dbl/.style={circle, draw, fill=gray, minimum size=0.9cm}, sqr/.style={rectangle, draw, minimum size=0.9cm},
dsqr/.style={fill=gray, rectangle, draw, minimum size=0.9cm},
	bezier bounding box=true]
		\node[state] (s_init) {$s_{\text{init}}$};
		\coordinate [above =0.5cm of s_init] (start);
                
		\node[sqr] (C_2) [below = of s_init] {$C_2$};
		\node[sqr] (C_1) [left = 4.2cm of C_2] {$C_1$};
		\node[sqr] (C_3) [right = 4.2cm of C_2] {$C_3$};

                \node[dsqr] (C_1-x_2) [below = of C_1] {$x_2$};
                \node[dsqr, label=below:{\small $ (9,-9,0,0,0,0) $}] (C_1-x_1) [left = 0.5cm of C_1-x_2]  {$x_1$};
                \node[state] (C_1-x_3) [right = 0.5cm of C_1-x_2] {$y_1$};

                \node[state] (C_2-x_2) [below = of C_2] {$y_1$};
                \node[dsqr, label=below:{\small $ (-9,9,0,0,0,0) $}] (C_2-x_1) [left = 0.5cm of C_2-x_2] {$\neg x_1$};
                \node[dbl] (C_2-x_3) [right = 0.5cm of C_2-x_2] {$ z_1 $};

                \node[state] (C_3-x_2) [below = of C_3] {$\neg y_1$};
                \node[dsqr, label=below:{\small $ (0,0,-9,9,0,0) $}] (C_3-x_1) [left = 0.5cm of C_3-x_2] {$\neg x_2$};
                \node[dbl] (C_3-x_3) [right = 0.5cm of C_3-x_2] {$\lnot z_1$};

                \coordinate [below = of C_2-x_2] (C_1-x_3-curve-stop-1);
                
                \coordinate [left = of C_1-x_1] (C_1-x_2-curve-stop-1);
                
                \coordinate [left = of C_2-x_1] (C_2-x_3-curve-stop-1);
                
                \coordinate [below = of C_3-x_3] (C_1-x_3-curve-stop-2) ;
                \coordinate [right = 1.5cm of C_3-x_3] (C_1-x_3-curve-stop-3);

                \coordinate [below right = of C_2-x_3] (C_3-x_3-curve-stop-1);
                \coordinate [below right =of C_2-x_1] (C_3-x_3-curve-stop-2);

		\draw[-{Latex[width=2mm]}] (start) --  (s_init);
		\draw[-{Latex[width=2mm]}] (s_init) --  (C_1);
		\draw[-{Latex[width=2mm]}] (s_init) --  (C_2);
		\draw[-{Latex[width=2mm]}] (s_init) --  (C_3);

                \draw[-{Latex[width=2mm]}] (C_1) -- (C_1-x_1);
                \draw[-{Latex[width=2mm]}] (C_1) -- (C_1-x_2);
                \draw[-{Latex[width=2mm]}] (C_1) -- (C_1-x_3);

                \draw[-{Latex[width=2mm]}] (C_2) -- (C_2-x_1);
                \draw[-{Latex[width=2mm]}] (C_2) -- (C_2-x_2);
                \draw[-{Latex[width=2mm]}] (C_2) -- (C_2-x_3);

                \draw[-{Latex[width=2mm]}] (C_3) --  (C_3-x_1);
                \draw[-{Latex[width=2mm]}] (C_3) -- (C_3-x_2);
                \draw[-{Latex[width=2mm]}] (C_3) --  (C_3-x_3);

		\draw[-{Latex[width=2mm]}] (C_2-x_3) edge[loop, out=290, in=250, distance=0.5cm] node[below]{\small $ (0,0,0,0,9,-9) $} (C_2-x_3);

		\draw[-{Latex[width=2mm]}] (C_3-x_3) edge[loop, out=290, in=250, distance=0.5cm] node[below]{\small $ (0,0,0,0,-9,9) $} (C_3-x_3);

                \draw[-{Latex[width=2mm]}] (C_1-x_1) to [out=90,in=180] (s_init);
                
                \draw[] (C_1-x_2) to [out=-90,in=-90] node[below, xshift=1.5cm]{\small $ (0,0,9,-9,0,0) $} (C_1-x_2-curve-stop-1);
                
                \draw[-{Latex[width=2mm]}] (C_1-x_2-curve-stop-1) to [out=90,in=160] (s_init);
                
                \draw[-{Latex[width=2mm]}] (C_2-x_1) to [out=90,in=-120] (s_init);
                
                \draw[-{Latex[width=2mm]}] (C_3-x_1) to [out=90,in=-45] (s_init);
                
                \draw[] (C_2-x_3) to [out=-120,in=-90] (C_2-x_3-curve-stop-1);
                
                \draw[-{Latex[width=2mm]}] (C_2-x_3-curve-stop-1) to [out=90,in=0] (C_1);

		;
\end{tikzpicture}
}
		\caption{The game arena of $ \Game^\Psi $. White circle states are controlled by $ A $, square by $ E $, grey squares $ x_1, \neg x_1, x_2, \neg x_2 $ by players $ 1, 2, 3, 4 $, respectively, and grey circles $ z_1, \neg z_1 $ by players $ 5, 6 $, respectively. Similar to the illustration in the proof of Theorem~\ref{thm:dominated} (Figure~\ref{fig:qsat-ap}), the weight function is given as vectors, where states without vectors have $ \vec{0} $. We exclude the weight vectors for $ E, A, P, Q, R $ as they are all zeros, to avoid clutters. Moreover, each literal state (i.e., $ x/y/z $ state) also has a transition to $ \sink $. }
		\label{fig:qsat3-ap}
	\end{figure*}
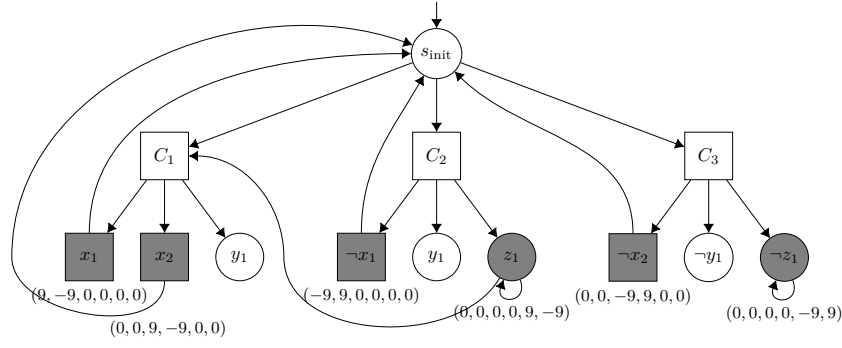
	From this formula, we construct the game $ \Game^\Psi $ in Figure~\ref{fig:qsat3-ap} with $ \Ag = \{1,2,3,4,5,6,E,A,P,Q,R\} $ with weight function given as vectors and assigned to players in an analogous way as in the illustration in the proof of Theorem~\ref{thm:dominated} (Figure~\ref{fig:qsat-ap}). We then ask whether the core of $ \Game^\Psi $ is not empty. Observe that the formula $ \Psi $ is true under the following assignment: $ v(x_1) = v(z_1) = true $ and $ v(x_2) = false $. Indeed by constructing $ \strElm_E $ from it, players $ 1, 4, 6 $ get a payoff of at least 1 for all $ \strElm_A $, and as such there is no beneficial deviation.
	
	The construction above produces a game whose size is polynomial in the size of $ \Psi $. More specifically, the numbers of states and transitions are, respectively, linear and quadratic in the size of $ \Psi $.
	
	We now show that the core of $ \Game^\Psi $ is not empty if and only if $ \Psi $ is satisfiable.
	
	$ (\Rightarrow) $ Suppose $ \strpElm \in \core(\Game^\Psi) $. By the construction of the game, there are three cases:
	\begin{enumerate}
		\item[(a)] $ \pi(\strpElm) $ visits some literal state of the form $ x_k $ (resp. $ \neg x_k $) infinitely often, and $ \pay_{2k-1}(\strpElm) \geq 1 $ (resp. $ \pay_{2k}(\strpElm) \geq 1 $)
		
		\item[(b)] $ \pi(\strpElm) $ visits some literal state of the form $ z_k $ (resp. $ \neg z_k $) infinitely often, and $ \pay_{2(p+k)-1}(\strpElm) \geq 1 $ (resp. $ \pay_{2(p+k)}(\strpElm) \geq 1 $)
		
		\item[(c)] both (a) and (b).
	\end{enumerate}
	The condition $ \pay_i(\strpElm) \geq 1 $ is necessary, because otherwise player $ i $ can deviate to $ \sink $ and gets a payoff of $ 1 $ which contradicts $ \strpElm $ being in the core.
	
	We start with (a). This implies that for each clause $ C_i, 1 \leq i \leq r $, there is a strategy $ \strElm_E $ for player $ E $ that agrees with $ \strpElm $ for choosing a literal state $ l_{i,j} $ such that for a literal of the form $ x_k $ (resp. $ \neg x_k $) we have $ \wFun_{2k-1}(l_{i,j}) \geq 3 $ (resp. $ \wFun_{2k}(l_{i,j}) \geq 3 $). Moreover, if such a strategy exists, then it is a valid assignment for $ x_1,\dots,x_p $ (i.e., contains no contradictions), since otherwise player $ A $ can alternate between the two contradictory choices and gets $ \pay_{2k}(\strpElm) = 0 $ or $ \pay_{2k-1}(\strpElm) = 0 $, which implies that there is a beneficial deviation by player $ 2k $ or $ 2k-1 $--contradicting our assumption that $ \strpElm $ being in the core. Since this assignment is valid and makes all clauses evaluate to true, then it is the case that $ \Psi $ is satisfiable.
	
	For case (b), the argument is similar to (a). The main difference is that from a literal state $ l_{i,j} $ of the form $ z_k $ or $ \neg z_k $, player $ A $ can choose to go to state a $ C' $ that is not \yclash with $ C_i $. This assures that player $ A $ can only choose a valid assignment for $ y_1,\dots,y_q $. Moreover, since we have $ \pay_{2(p+k)-1}(\strpElm) \geq 1 $ or $ \pay_{2(p+k)}(\strpElm) \geq 1 $, then for each clause visited, there exists an assignment of $ z_1,\dots,z_t $ that makes the clause evaluates to true. This assignment is a satisfying assignment for $ \Psi $. For case (c), we combine the arguments from (a) and (b), and obtain a similar conclusion.
	
	$ (\Leftarrow) $ Now, suppose that $ \Psi $ is satisfiable, then we have the following cases:
	\begin{enumerate}
		\item[(1)] there exists an assignment $ v(x_1,\dots,x_p) $ such that $ \Psi(v) $ is a tautology, where $ \Psi(v) $ is the resulting formula after applying the assignment $ v(x_1,\dots,x_p) $. 
		
		\item[(2)] there exists an assignment $ v(x_1,\dots,x_p) $ such that for each assignment $ w(y_1,\dots,y_q) $, there is an assignment $ u(z_1,\dots,z_t) $ that makes $ \Psi(v,w,u) $ evaluates to true.
	\end{enumerate}
	
	For case (1), we start by turning the assignment $ v(x_1,\dots,x_p) $ into a strategy $ \strElm_E $ that prescribes to which $ x $-literal state $ l_{i,j} $ from each clause state $ C_i $ the play must proceed. For instance, if $ v(x_k) $ is true and $ x_k $ is a literal in $ C_i $, then player $ E $ will choose to go to $ x_k $ from $ C_i $. Notice that it may be the case that there are more than one possible ways to choose a literal according to a given assignment, in which we can just arbitrarily choose one. Observe that by following $ \strElm_E $, for all strategy of player $ A $ $ \strElm_A $, corresponding to the assignments of $ y_1,\dots,y_q $, and for all literal state $ x_k $ (resp. $ \neg x_k $) visited infinitely often in $ \pi((\strElm_E,\strElm_A)) $ we have $ \pay_{2k-1}((\strElm_E,\strElm_A)) \geq 1 $ (resp. $ \pay_{2k}((\strElm_E,\strElm_A)) \geq 1 $). This means that $ (\strElm_E,\strElm_A) $ admits no beneficial deviation and thus it is in the core.
	
	For case (2), we perform a similar strategy construction as in (1). First, observe that the resulting formula $ \Psi(v) $ may contain clauses that evaluate to true. We denote this by $ \chi(\Psi(v)) $. Notice that if $ \chi(\Psi(v)) = \{C_v \vert 1 \leq v \leq r \} $, then $ \Psi(v) $ is a tautology---the same as case (1), and we are done. Otherwise, there is $ C_i \notin \chi(\Psi(v)) $ and $ C_i $ contains some $ z $-literals. Now, using $ u(z_1,\dots,z_t) $ we construct a strategy $ \strElm_E' $ that prescribes which $ x $-literal and $ z $-literal to choose from each clause $ C_i $. Since $ \Psi(v,w,u) $ evaluates to true, then for each $ C_i $ it is the case that $ C_i \in \chi(\Psi(v,w,u)) $. This means that for any $ C_i, C_j \notin \chi(\Psi(v)) $ that are visited infinitely often in a play resulting from $ (\strElm_A,\strElm_E') $, there exist no clashing $ z $-literals in $ C_i,C_j $ visited infinitely often. That is, for any $ C_i, C_j \notin \chi(\Psi(v)) $  we have only $ z_k $ (resp. $ \neg z_k $) visited infinitely often, and by the weight function of the game, we have $ \pay_{2(p+k)-1}((\strElm_A,\strElm_E')) \geq 1 $ (resp. $ \pay_{2(p+k)}((\strElm_A,\strElm_E')) \geq 1 $). Thus, it is the case that $ (\strElm_A,\strElm_E') \in \core(\Game^\Psi) $.
	
\end{proof}

\subsection{Proof of Theorem~\ref{thm:gr-ecore}}

\THMgrecore*

\begin{proof}
	
	Recall that a \GRone formula \(\varphi\) has the following form  \[%
	\varphi = \bigwedge_{l = 1}^{m} \always \sometime \psi_{l} \to \bigwedge_{r = 1}^{n} \always \sometime \theta_{r}\text{,}
	\]  and let \(V(\psi_{l})\) and \(V(\theta_r)\) be the subset of states in \(\Game\) that satisfy the Boolean combinations \(\psi_{l}\) and \(\theta_{r}\), respectively.  Observe that property \(\varphi\) is satisfied over a path \(\pi\) if, and only if, either \(\pi\) visits every \(V(\theta_r)\) infinitely many times or visits some of the \(V(\psi_{l})\) only a finite number of times.
	
	For the game \(\Game{[S]}\) and payoff vector $ \vec{x} $, let \(\tuple{V, E, (\wFun_{i}')_{i \in \Ag}}\) be the underlying graph, where \( \wFun_{i}'(v) = \wFun_{i}(s) - x_i \) for every \( i \in \Ag, v \in V \), and \( s \in S \), such that \(v\) corresponds to \(s\). Furthermore, for every edge \(e\in E\), we introduce a variable \(z_e\). The value \(z_e\) is the number of times that the edge \(e\) is used on a cycle. Moreover, let   \(\src(e) = \{v \in V : \exists w\, e = (v,w) \in E\}\);   \(\trg(e) = \{v \in V : \exists w\, e = (w,v) \in E\}\);   \(\OUT(v) = \{e \in E : \src(e) = v\}\);   \(\IN(v) = \{e \in E : \trg(e) = v\}\).
	
	Consider \(\psi_{l}\) for some \(1 \leq l \leq m\), and define the linear program \(\mathcal{L}(\psi_{l})\) with the following inequalities and equations:
	\begin{enumerate}
		\item[Eq1:] \(z_e \geq 0\) for each edge \(e\) --- a basic consistency criterion;
		\item[Eq2:] \(\Sigma_{e \in E} z_e \geq 1\) --- ensures that at least one edge is chosen;
		\item[Eq3:] for each \(i \in \Ag\), \(\Sigma_{e \in E} \wFun_i'(\src(e)) z_e \geq 0\) --- ensures that the total sum of any solution is positive;
		\item[Eq4:] \(\Sigma_{\src(e) \cap V(\psi_{l}) \neq \emptyset} z_e = 0\) --- ensures that no state in \(V(\psi_{l})\) is in the cycle associated with the solution;
		\item[Eq5:] for each \(v \in V\), \(\Sigma_{e \in \OUT(v)} z_e = \Sigma_{e \in \IN(v)} z_e\)  --- says that the number of times one enters a vertex is equal to the number of times one leaves that vertex.
	\end{enumerate}
	
	By construction, it follows that \(\mathcal{L}(\psi_{l})\) admits a solution if and only if there exists a path \(\pi\) in \(\Game{[S]}\) such that \(\pay_i(\pi) \geq x_i \) for every player \(i\) and visits \(V(\psi_{l})\) only \emph{finitely many times}. Furthermore, consider the linear program \(\mathcal{L}(\theta_{1}, \ldots, \theta_{n})\) defined with the following inequalities and equations:
	\begin{enumerate}
		\item[Eq1:] \(z_e \geq 0\) for each edge \(e\) --- a basic consistency criterion;
		\item[Eq2:] \(\Sigma_{e \in E} z_e \geq 1\)  --- ensures that at least one edge is chosen;
		\item[Eq3:] for each \(i \in \Ag\), \(\Sigma_{e \in E} \wFun_i'(\src(e)) z_e \geq 0\) --- ensures that the total sum of any solution is positive;
		\item[Eq4:] for all \(1 \leq r \leq n\), \(\Sigma_{\src(e) \cap V(\theta_{r}) \neq \emptyset} z_e \geq 1\) --- ensures that for every \(V(\theta_{r})\) at least one state is in the cycle;
		\item[Eq5:] for each \(v \in V\), \(\Sigma_{e \in \OUT(v)} z_e = \Sigma_{e \in \IN(v)} z_e\) --- says that the number of times one enters a vertex is equal to the number of times one leaves that vertex.
	\end{enumerate}
	
	In this case, \(\mathcal{L}(\theta_{1}, \ldots, \theta_{n})\)  admits a solution if and only if there exists a path \(\pi\) in \( \Game{[S]} \) such that \(\pay_i(\pi) \geq x_i\) for every player \(i\) and visits every \(V(\theta_{r})\) \emph{infinitely many times}.  Since the constructions above are polynomial in the size of both \(\Game\) and \(\phi\), we can conclude that given \( \Game{[S]} \), vector $ \vec{x} $, and \GRone formula \( \phi \), it is possible to check in polynomial time whether \( \phi \) is satisfied by a suitable path $ \pi $ in \( \Game{[S]} \).
	
	Therefore, to solve \textsc{E-Core} with \GRone specifications, we can use Algorithm~\ref{alg:ecore} with polynomial time check for line 5. Thus, it follows that \textsc{E-Core} with \GRone specifications can be solved in $ \SigmaPThree $. The lower bound follows directly from the hardness result of $ \textsc{Non-Emptiness} $ by setting $ \phi = \top $. Moreover, since $ \acore $ is the dual of $ \ecore $, we obtain the  theorem.
\end{proof}

\end{document}